\documentclass[conference, a4paper, 10pt]{IEEEtran}

\usepackage{times}
\usepackage[latin1]{inputenc}
\usepackage[T1]{fontenc}
\usepackage[english]{babel}
\usepackage{graphicx}
\usepackage{amsmath, amsfonts, dsfont,amssymb, graphicx, array, tabularx, booktabs}
\usepackage{amsthm}
\usepackage{epsf,epsfig}
\usepackage{float}
\usepackage{setspace}
\usepackage[linesnumbered,ruled,vlined]{algorithm2e}
\usepackage{multirow}
\usepackage{url}
\usepackage{color}
\usepackage[nolist,printonlyused]{acronym}      
\usepackage{comment}
\usepackage{cite}
\usepackage{bm}

\usepackage[font=small,labelfont=bf]{caption}

\usepackage{hyperref}
\usepackage{mathtools}

\usepackage{blindtext}


\renewcommand{\arraystretch}{1.2}
\newcommand{\gf}[1]{\textcolor{cyan}{{#1}}}
\newcommand{\mt}[1]{\textcolor{green}{{#1}}}

\newcommand{\ignore}[1]{}

\newtheorem{thm}{Theorem} 

\newtheorem{cor}{Corollary}
\newtheorem{lem}{Lemma}
\newtheorem{prop}{Proposition}

\newtheorem{observ}{Observation}


\newcommand{\mx}[1]{\mathbf{#1}}
\newcommand{\bs}[1]{\boldsymbol{#1}}

\newcommand{\rev}[1]{\textcolor{black}{{#1}}}
\newcommand{\qq}[1]{\textcolor{black}{{#1}}}

\addtolength{\abovecaptionskip}{-3mm}
\addtolength{\belowcaptionskip}{-3mm}
\addtolength{\floatsep}{-3mm}
\addtolength{\textheight}{+7mm}
\addtolength{\textwidth}{+5mm}

\definecolor{amber}{rgb}{1.0, 0.49, 0.0}
\definecolor{ao}{rgb}{0.0, 0.5, 0.0}

\def\R2#1{\textcolor{black}{#1}}
\def\R3#1{\textcolor{black}{#1}}

\SetKwInOut{Parameter}{Parameter}

\bibliographystyle{IEEEtran}

\begin{document}
\date{}

\title{MU-MIMO Receiver Design and Performance Analysis in Time-Varying Rayleigh Fading}
\title{On the Achievable SINR in MU-MIMO Systems Operating in Time-Varying Rayleigh Fading}
\author{
G\'{a}bor Fodor$^{\star\dag}$, Sebastian Fodor$^\flat$, Mikl\'{o}s Telek$^{\ddag\sharp}$ \\
\small $^\star$Ericsson Research, Stockholm, Sweden. E-mail: \texttt{Gabor.Fodor@ericsson.com}\\
\small $^\dag$KTH Royal Institute of Technology, Stockholm, Sweden. E-mail: \texttt{gaborf@kth.se}\\
\small $^\flat$Stockholm University, Stockholm, Sweden. E-mail: \texttt{sebbifodor@fastmail.com}\\
\small $^\ddag$Budapest University of Technology and Economics, Budapest, Hungary. E-mail: \texttt{telek@hit.bme.hu}\\
\small $^\sharp$MTA-BME Information Systems Research Group, Budapest, Hungary. E-mail: \texttt{telek@hit.bme.hu}
\vspace{-1.7\baselineskip}
}
\maketitle
\pagestyle{plain}

\begin{acronym}
  \acro{2G}{Second Generation}
  \acro{3G}{3$^\text{rd}$~Generation}
  \acro{3GPP}{3$^\text{rd}$~Generation Partnership Project}
  \acro{4G}{4$^\text{th}$~Generation}
  \acro{5G}{5$^\text{th}$~Generation}
  \acro{AA}{Antenna Array}
  \acro{AC}{Admission Control}
  \acro{AD}{Attack-Decay}
  \acro{ADSL}{Asymmetric Digital Subscriber Line}
	\acro{AHW}{Alternate Hop-and-Wait}
  \acro{AMC}{Adaptive Modulation and Coding}
	\acro{AP}{Access Point}
  \acro{APA}{Adaptive Power Allocation}
  \acro{AR}{autoregressive}
  \acro{ARMA}{Autoregressive Moving Average}
  \acro{ATES}{Adaptive Throughput-based Efficiency-Satisfaction Trade-Off}
  \acro{AWGN}{additive white Gaussian noise}
  \acro{BB}{Branch and Bound}
  \acro{BD}{Block Diagonalization}
  \acro{BER}{bit error rate}
  \acro{BF}{Best Fit}
  \acro{BLER}{BLock Error Rate}
  \acro{BPC}{Binary power control}
  \acro{BPSK}{binary phase-shift keying}
  \acro{BPA}{Best \ac{PDPR} Algorithm}
  \acro{BRA}{Balanced Random Allocation}
  \acro{BS}{base station}
  \acro{CAP}{Combinatorial Allocation Problem}
  \acro{CAPEX}{Capital Expenditure}
  \acro{CBF}{Coordinated Beamforming}
  \acro{CBR}{Constant Bit Rate}
  \acro{CBS}{Class Based Scheduling}
  \acro{CC}{Congestion Control}
  \acro{CDF}{Cumulative Distribution Function}
  \acro{CDMA}{Code-Division Multiple Access}
  \acro{CL}{Closed Loop}
  \acro{CLPC}{Closed Loop Power Control}
  \acro{CNR}{Channel-to-Noise Ratio}
  \acro{CPA}{Cellular Protection Algorithm}
  \acro{CPICH}{Common Pilot Channel}
  \acro{CoMP}{Coordinated Multi-Point}
  \acro{CQI}{Channel Quality Indicator}
  \acro{CRM}{Constrained Rate Maximization}
	\acro{CRN}{Cognitive Radio Network}
  \acro{CS}{Coordinated Scheduling}
  \acro{CSI}{channel state information}
  \acro{CSIR}{channel state information at the receiver}
  \acro{CSIT}{channel state information at the transmitter}
  \acro{CUE}{cellular user equipment}
  \acro{D2D}{device-to-device}
  \acro{DCA}{Dynamic Channel Allocation}
  \acro{DE}{Differential Evolution}
  \acro{DFT}{Discrete Fourier Transform}
  \acro{DIST}{Distance}
  \acro{DL}{downlink}
  \acro{DMA}{Double Moving Average}
	\acro{DMRS}{Demodulation Reference Signal}
  \acro{D2DM}{D2D Mode}
  \acro{DMS}{D2D Mode Selection}
  \acro{DPC}{Dirty Paper Coding}
  \acro{DRA}{Dynamic Resource Assignment}
  \acro{DSA}{Dynamic Spectrum Access}
  \acro{DSM}{Delay-based Satisfaction Maximization}
  \acro{ECC}{Electronic Communications Committee}
  \acro{EFLC}{Error Feedback Based Load Control}
  \acro{EI}{Efficiency Indicator}
  \acro{eNB}{Evolved Node B}
  \acro{EPA}{Equal Power Allocation}
  \acro{EPC}{Evolved Packet Core}
  \acro{EPS}{Evolved Packet System}
  \acro{E-UTRAN}{Evolved Universal Terrestrial Radio Access Network}
  \acro{ES}{Exhaustive Search}
  \acro{FDD}{frequency division duplexing}
  \acro{FDM}{Frequency Division Multiplexing}
  \acro{FER}{Frame Erasure Rate}
  \acro{FF}{Fast Fading}
  \acro{FSB}{Fixed Switched Beamforming}
  \acro{FST}{Fixed SNR Target}
  \acro{FTP}{File Transfer Protocol}
  \acro{GA}{Genetic Algorithm}
  \acro{GBR}{Guaranteed Bit Rate}
  \acro{GLR}{Gain to Leakage Ratio}
  \acro{GOS}{Generated Orthogonal Sequence}
  \acro{GPL}{GNU General Public License}
  \acro{GRP}{Grouping}
  \acro{HARQ}{Hybrid Automatic Repeat Request}
  \acro{HMS}{Harmonic Mode Selection}
  \acro{HOL}{Head Of Line}
  \acro{HSDPA}{High-Speed Downlink Packet Access}
  \acro{HSPA}{High Speed Packet Access}
  \acro{HTTP}{HyperText Transfer Protocol}
  \acro{ICMP}{Internet Control Message Protocol}
  \acro{ICI}{Intercell Interference}
  \acro{ID}{Identification}
  \acro{IETF}{Internet Engineering Task Force}
  \acro{ILP}{Integer Linear Program}
  \acro{JRAPAP}{Joint RB Assignment and Power Allocation Problem}
  \acro{UID}{Unique Identification}
  \acro{IID}{Independent and Identically Distributed}
  \acro{IIR}{Infinite Impulse Response}
  \acro{ILP}{Integer Linear Problem}
  \acro{IMT}{International Mobile Telecommunications}
  \acro{INV}{Inverted Norm-based Grouping}
	\acro{IoT}{Internet of Things}
  \acro{IP}{Internet Protocol}
  \acro{IPv6}{Internet Protocol Version 6}
  \acro{ISD}{Inter-Site Distance}
  \acro{ISI}{Inter Symbol Interference}
  \acro{ITU}{International Telecommunication Union}
  \acro{JOAS}{Joint Opportunistic Assignment and Scheduling}
  \acro{JOS}{Joint Opportunistic Scheduling}
  \acro{JP}{Joint Processing}
	\acro{JS}{Jump-Stay}
    \acro{KF}{Kalman filter}
  \acro{KKT}{Karush-Kuhn-Tucker}
  \acro{L3}{Layer-3}
  \acro{LAC}{Link Admission Control}
  \acro{LA}{Link Adaptation}
  \acro{LC}{Load Control}
  \acro{LOS}{Line of Sight}
  \acro{LP}{Linear Programming}
  \acro{LS}{least squares}
  \acro{LTE}{Long Term Evolution}
  \acro{LTE-A}{LTE-Advanced}
  \acro{LTE-Advanced}{Long Term Evolution Advanced}
  \acro{M2M}{Machine-to-Machine}
  \acro{MAC}{Medium Access Control}
  \acro{MANET}{Mobile Ad hoc Network}
  \acro{MC}{Modular Clock}
  \acro{MCS}{Modulation and Coding Scheme}
  \acro{MDB}{Measured Delay Based}
  \acro{MDI}{Minimum D2D Interference}
  \acro{MF}{Matched Filter}
  \acro{MG}{Maximum Gain}
  \acro{MH}{Multi-Hop}
  \acro{MIMO}{multiple input multiple output}
  \acro{MINLP}{Mixed Integer Nonlinear Programming}
  \acro{MIP}{Mixed Integer Programming}
  \acro{MISO}{Multiple Input Single Output}
  \acro{ML}{maximum likelihood}
  \acro{MLWDF}{Modified Largest Weighted Delay First}
  \acro{MME}{Mobility Management Entity}
  \acro{MMSE}{minimum mean squared error}
  \acro{MOS}{Mean Opinion Score}
  \acro{MPF}{Multicarrier Proportional Fair}
  \acro{MRA}{Maximum Rate Allocation}
  \acro{MR}{Maximum Rate}
  \acro{MRC}{maximum ratio combining}
  \acro{MRT}{Maximum Ratio Transmission}
  \acro{MRUS}{Maximum Rate with User Satisfaction}
  \acro{MS}{mobile station}
  \acro{MSE}{mean squared error}
  \acro{MSI}{Multi-Stream Interference}
  \acro{MTC}{Machine-Type Communication}
  \acro{MTSI}{Multimedia Telephony Services over IMS}
  \acro{MTSM}{Modified Throughput-based Satisfaction Maximization}
  \acro{MU-MIMO}{multiuser multiple input multiple output}
  \acro{MU}{multi-user}
  \acro{NAS}{Non-Access Stratum}
  \acro{NB}{Node B}
  \acro{NE}{Nash equilibrium}
  \acro{NCL}{Neighbor Cell List}
  \acro{NLP}{Nonlinear Programming}
  \acro{NLOS}{Non-Line of Sight}
  \acro{NMSE}{Normalized Mean Square Error}
  \acro{NORM}{Normalized Projection-based Grouping}
  \acro{NP}{Non-Polynomial Time}
  \acro{NRT}{Non-Real Time}
  \acro{NSPS}{National Security and Public Safety Services}
  \acro{O2I}{Outdoor to Indoor}
  \acro{OFDMA}{orthogonal frequency division multiple access}
  \acro{OFDM}{orthogonal frequency division multiplexing}
  \acro{OFPC}{Open Loop with Fractional Path Loss Compensation}
	\acro{O2I}{Outdoor-to-Indoor}
  \acro{OL}{Open Loop}
  \acro{OLPC}{Open-Loop Power Control}
  \acro{OL-PC}{Open-Loop Power Control}
  \acro{OPEX}{Operational Expenditure}
  \acro{ORB}{Orthogonal Random Beamforming}
  \acro{JO-PF}{Joint Opportunistic Proportional Fair}
  \acro{OSI}{Open Systems Interconnection}
  \acro{PAIR}{D2D Pair Gain-based Grouping}
  \acro{PAPR}{Peak-to-Average Power Ratio}
  \acro{P2P}{Peer-to-Peer}
  \acro{PC}{Power Control}
  \acro{PCI}{Physical Cell ID}
  \acro{PDF}{Probability Density Function}
  \acro{PDPR}{pilot-to-data power ratio}
  \acro{PER}{Packet Error Rate}
  \acro{PF}{Proportional Fair}
  \acro{P-GW}{Packet Data Network Gateway}
  \acro{PL}{Pathloss}
  \acro{PPR}{pilot power ratio}
  \acro{PRB}{physical resource block}
  \acro{PROJ}{Projection-based Grouping}
  \acro{ProSe}{Proximity Services}
  \acro{PS}{Packet Scheduling}
  \acro{PSAM}{pilot symbol assisted modulation}
  \acro{PSO}{Particle Swarm Optimization}
  \acro{PZF}{Projected Zero-Forcing}
  \acro{QAM}{Quadrature Amplitude Modulation}
  \acro{QoS}{Quality of Service}
  \acro{QPSK}{Quadri-Phase Shift Keying}
  \acro{RAISES}{Reallocation-based Assignment for Improved Spectral Efficiency and Satisfaction}
  \acro{RAN}{Radio Access Network}
  \acro{RA}{Resource Allocation}
  \acro{RAT}{Radio Access Technology}
  \acro{RATE}{Rate-based}
  \acro{RB}{resource block}
  \acro{RBG}{Resource Block Group}
  \acro{REF}{Reference Grouping}
  \acro{RLC}{Radio Link Control}
  \acro{RM}{Rate Maximization}
  \acro{RNC}{Radio Network Controller}
  \acro{RND}{Random Grouping}
  \acro{RRA}{Radio Resource Allocation}
  \acro{RRM}{Radio Resource Management}
  \acro{RSCP}{Received Signal Code Power}
  \acro{RSRP}{Reference Signal Receive Power}
  \acro{RSRQ}{Reference Signal Receive Quality}
  \acro{RR}{Round Robin}
  \acro{RRC}{Radio Resource Control}
  \acro{RSSI}{Received Signal Strength Indicator}
  \acro{RT}{Real Time}
  \acro{RU}{Resource Unit}
  \acro{RUNE}{RUdimentary Network Emulator}
  \acro{RV}{Random Variable}
  \acro{SAC}{Session Admission Control}
  \acro{SCM}{Spatial Channel Model}
  \acro{SC-FDMA}{Single Carrier - Frequency Division Multiple Access}
  \acro{SD}{Soft Dropping}
  \acro{S-D}{Source-Destination}
  \acro{SDPC}{Soft Dropping Power Control}
  \acro{SDMA}{Space-Division Multiple Access}
  \acro{SER}{Symbol Error Rate}
  \acro{SES}{Simple Exponential Smoothing}
  \acro{S-GW}{Serving Gateway}
  \acro{SINR}{signal-to-interference-plus-noise ratio}
  \acro{SI}{Satisfaction Indicator}
  \acro{SIP}{Session Initiation Protocol}
  \acro{SISO}{single input single output}
  \acro{SIMO}{Single Input Multiple Output}
  \acro{SIR}{signal-to-interference ratio}
  \acro{SLNR}{Signal-to-Leakage-plus-Noise Ratio}
  \acro{SMA}{Simple Moving Average}
  \acro{SNR}{signal-to-noise ratio}
  \acro{SORA}{Satisfaction Oriented Resource Allocation}
  \acro{SORA-NRT}{Satisfaction-Oriented Resource Allocation for Non-Real Time Services}
  \acro{SORA-RT}{Satisfaction-Oriented Resource Allocation for Real Time Services}
  \acro{SPF}{Single-Carrier Proportional Fair}
  \acro{SRA}{Sequential Removal Algorithm}
  \acro{SRS}{Sounding Reference Signal}
  \acro{SU-MIMO}{single-user multiple input multiple output}
  \acro{SU}{Single-User}
  \acro{SVD}{Singular Value Decomposition}
  \acro{TCP}{Transmission Control Protocol}
  \acro{TDD}{time division duplexing}
  \acro{TDMA}{Time Division Multiple Access}
  \acro{TETRA}{Terrestrial Trunked Radio}
  \acro{TP}{Transmit Power}
  \acro{TPC}{Transmit Power Control}
  \acro{TTI}{Transmission Time Interval}
  \acro{TTR}{Time-To-Rendezvous}
  \acro{TSM}{Throughput-based Satisfaction Maximization}
  \acro{TU}{Typical Urban}
  \acro{UE}{User Equipment}
  \acro{UEPS}{Urgency and Efficiency-based Packet Scheduling}
  \acro{UL}{uplink}
  \acro{UMTS}{Universal Mobile Telecommunications System}
  \acro{URI}{Uniform Resource Identifier}
  \acro{URM}{Unconstrained Rate Maximization}
  \acro{UT}{user terminal}
  \acro{VR}{Virtual Resource}
  \acro{VoIP}{Voice over IP}
  \acro{WAN}{Wireless Access Network}
  \acro{WCDMA}{Wideband Code Division Multiple Access}
  \acro{WF}{Water-filling}
  \acro{WiMAX}{Worldwide Interoperability for Microwave Access}
  \acro{WINNER}{Wireless World Initiative New Radio}
  \acro{WLAN}{Wireless Local Area Network}
  \acro{WMPF}{Weighted Multicarrier Proportional Fair}
  \acro{WPF}{Weighted Proportional Fair}
  \acro{WSN}{Wireless Sensor Network}
  \acro{WWW}{World Wide Web}
  \acro{XIXO}{(Single or Multiple) Input (Single or Multiple) Output}
  \acro{ZF}{zero-forcing}
  \acro{ZMCSCG}{Zero Mean Circularly Symmetric Complex Gaussian}
\end{acronym}

\begin{abstract}
Minimizing the symbol error in the uplink of multi-user multiple input multiple output systems is important,
because the symbol error affects the achieved signal-to-interference-plus-noise ratio (SINR) and thereby
the spectral efficiency of the system. Despite the vast literature available on minimum mean
squared error (MMSE) receivers,
previously proposed receivers for
block fading channels do not minimize the symbol error in time-varying Rayleigh fading
channels. Specifically, we show that the true MMSE receiver structure does not only depend on the
statistics of the CSI error, but also on the autocorrelation coefficient of the time-variant channel.
It turns out that calculating the average SINR when using the proposed receiver is highly non-trivial.
In this paper, we employ a random matrix theoretical approach, which allows us to derive \qq{a quasi-closed
form for the average SINR,}
which allows to obtain analytical exact results that give valuable insights into
how the SINR depends on the number of antennas, employed pilot and data power
and the covariance of the time-varying channel.
We benchmark the performance of the proposed receiver against recently proposed receivers
and find that
\rev{the proposed MMSE receiver achieves higher SINR
than the previously proposed ones, and this benefit increases with increasing autoregressive coefficient.}
\end{abstract}

\section{Introduction}
\label{Sec:Intro}
The wireless channels in the uplink of \ac{MU-MIMO} systems can often be advantageously modelled as
\ac{AR} processes, because \ac{AR} channel models capture the time-varying (aging) nature of the channels and
facilitate channel estimation and prediction \cite{Yan:01, Zhang:07B, Lehmann:08, Abeida:10, Hijazi:10, GH:12,Truong:13,
Kong:2015, Chiu:15, Kashyap:17, Kim:20, Yuan:20, Fodor:2021}.
These papers have shown that
exploiting the autoregressive structure of the time-varying Rayleigh fading channel
improves the performance of both \ac{SISO} and \ac{MIMO}
channel estimators and receivers.
The basic rationale for these papers is that in a Rayleigh fading environment, based on the associated Jakes
process, an \ac{AR} model can be built, which allows one to employ Kalman filters for
estimating and predicting the channel state.
Specifically, papers \cite{Zhang:07B}, and \cite{Abeida:10, Hijazi:10, GH:12} consider \ac{SISO} systems
and exploit the memoryful property of the \ac{AR} process for joint channel estimation, equalization and data detection.

Some early works on multiple-antenna receiver design and performance analysis are reported in \cite{Yan:01} and \cite{Lehmann:08}.
The optimal array receiver algorithm for \ac{BPSK} signals is designed in \cite{Yan:01}, while reference \cite{Lehmann:08}
is concerned with the blind estimation and detection of space-time coded symbols transmitted over time-varying
Rayleigh fading channels. More recently, in the context of massive \ac{MU-MIMO} systems, \cite{Truong:13,
Kong:2015, Chiu:15, Kashyap:17, Kim:20, Yuan:20, Fodor:2021} addressed the problem of channel aging and derived
channel estimation, prediction and multi-user receiver algorithms that operate in an \ac{AR} Rayleigh-fading
environment and use Kalman filters or machine learning algorithms for channel prediction.


\setlength{\tabcolsep}{2pt} 
\renewcommand{\arraystretch}{1} 
{\footnotesize
\begin{table*}[ht!]
\centering
\caption{Overview of Related Literature}
\vspace{1mm}
\label{tab:tab2}
\footnotesize
\begin{tabular}{
|p{0.15\textwidth}|@{}
>{\centering}p{0.1\textwidth}|
>{\centering}p{0.18\textwidth}|
>{\centering}p{0.15\textwidth}|
>{\centering}p{0.12\textwidth}|
p{0.22\textwidth}|}
\hline
\hline
\textbf{~~Reference} & \textbf{UL or DL} & \textbf{Channel model and channel estimation} & \textbf{Perf. Indicator}
& 
\textbf{Is asymptotic random matrix theory (RMT) used ?}
& \textbf{~~Comment}
\\
\hline
\hline
Couillet et al., \cite{Couillet:2011} & MIMO MAC & block fading, channel estimation (CE) out of scope (OoS)
& rate region, rate maximization & Yes &  receiver design OoS   \\
\hline
Hanlen et al., \cite{Hanlen:2012} & UL/DL & block fading with correlated MIMO channels, perfect CSI at receiver & capacity & Yes & receiver design OoS  \\
\hline
Couillet et al., \cite{Couillet:2012} & UL/DL & block fading, CE is OoS & capacity and sum-rate & Yes &receiver design OoS  \\
\hline
Wen et al., \cite{Wen:2013} & MIMO MAC & block fading, non-Gaussian, CE is OoS & ergodic mutual information & Yes & receiver design OoS  \\
\hline
Hoydis et al., \cite{Hoydis:2013} & UL/DL & block fading, MMSE CE & achievable rate
& Yes & regularized MMSE receiver that takes into account the estimated channels of all users \\
\hline
Truong et al., \cite{Truong:13} & UL/DL & AR(p), AR(1), MMSE, channel prediction & average SINR, achievable rate & Yes & MRC receiver (not AR-aware) \\
\hline
Kong et al., \cite{Kong:2015} & UL/DL & similar to that in \cite{Truong:13} & UL/DL average rate & Yes & MRC and ZF receivers (not AR-aware) \\
\hline
Papazafeiropoulos et al., \cite{Papa:18} & UL & AR(1), MMSE estimation
& average SINR, outage probability & Yes & MRC receiver, UL caching \\
\hline
\rev{Bj\"{o}rnson et al., \cite{Bjornson:18}} & UL/DL & block fading, multicell MMSE CE & average SINR, spectral efficiency & Yes & multicell MMSE receiver \\
\hline
\rev{Boukhedini et al., \cite{Boukhedimi:18}} & UL/DL & block fading, multicell MMSE CE & average SINR, spectral efficiency & Yes & multicell MMSE receiver \\
\hline
\rev{Sanguinetti et al., \cite{Sanguinetti:19}} &  UL/DL & block fading, multicell MMSE CE & average SINR, spectral efficiency & Yes & multicell MMSE receiver \\
\hline
Yuan et al., \cite{Yuan:20} & UL/DL & AR(1), ML-based prediction
& channel estimation/prediction quality (MSE) & No & receiver design OoS (focus on channel estimation/prediction) \\
\hline
Abrardo et al., \cite{Abrardo:19} & UL & block fading, LS CE
& MSE and SINR & Yes & MMSE receiver for block fading channels is derived; takes into account the estimated channels of all users \\
\hline
Kim et al., \cite{Kim:20} & UL/DL & 3GPP spatial channel model, ML-based and Kalman filter-based prediction, mobility prediction
& channel estimation/prediction quality (MSE) & No & receiver design OoS (focus on channel estimation/prediction)\\
\hline
Fodor et al., \cite{Fodor:2021} & UL & AR(1), Kalman filter-based channel estimation & MSE of received data symbols & No & regularized (AR-aware) MMSE receiver, regularization is based on covariance matrices, interference is treated as nosie \\
\hline
\rev{Chopra and Murthy \cite{Chopra:21}} & UL/DL & AR($p$), Kalman filter-based and data assisted channel estimation
& MSE of the channel estimation and received data symbols, and achievable rate
& Yes & AR-aware MMSE receiver that utilizes data-aided channel tracking \\
\hline
Present paper & UL & AR(1), Kalman filter-based channel estimation & average SINR, average rate
& Yes & new MMSE receiver, whose structure takes into account the AR parameters and estimated channels of all users  \\
\hline
\hline
\end{tabular}
\vspace{-2mm}
\end{table*}
}
\setlength{\tabcolsep}{6pt} 
\renewcommand{\arraystretch}{1} 

A closely related line of research, in block fading environments, applies results from random matrix
theory to establish the deterministic equivalent of the random wireless system in order to calculate
the \ac{SINR} in the uplink and downlink of \rev{\ac{MU-MIMO} systems \cite{Couillet:2011, Hanlen:2012, Couillet:2012, Wen:2013,
Hoydis:2013, Kammoun:2014, Muller:2015, Papa:18, Bjornson:18, Boukhedimi:18, Abrardo:19, Sanguinetti:19}.}
\rev{In particular, in papers \cite{Bjornson:18, Boukhedimi:18, Sanguinetti:19} it was shown that
the capacity of multicell \ac{MU-MIMO} networks grows indefinitely as the number of antennas
tends to infinity, if appropriate multicell \ac{MMSE} processing is used.}

Generalizing the \ac{DL} precoding and \ac{UL} receiver
structures and associated deterministic equivalent \ac{SINR} results developed in these papers
to \ac{AR} time-varying environments and channel aging is not trivial, because of the basic assumption on independent
channel realizations at subsequent time instances.
In contrast, \rev{papers \cite{Truong:13, Kong:2015, Papa:18, Kim:20, Yuan:20}} treat \ac{AR} channel evolution
and use random matrix theory to derive the deterministic equivalent and thereby the \ac{SINR} for the \ac{UL} and \ac{DL}
of \ac{MU-MIMO} systems. However, these papers do not develop a \ac{MU-MIMO} receiver that aims to minimize
the \ac{MSE} of the received data symbols.
\rev{More recently, paper \cite{Chopra:21} developed a data-aided
\ac{MSE}-optimal channel tracking scheme and associated \ac{MMSE} estimator of the data symbols in the
presence of channel aging, that is when the channel changes between the channel estimation time instance
and the time instance when the channel is used for data transmission.}
%

In our recent work \cite{Fodor:2021}, we developed a new \ac{MMSE}
receiver that treats interference as noise and uses an \ac{AR} model for its performance analysis (see Table \ref{tab:tab2}).
The important conclusion in \cite{Fodor:2021} is that not only the channel estimation procedure, but the
receiver structure itself should be modified when the fading process is \ac{AR}.

However, it is well-known that treating interference as noise in \ac{MU-MIMO} systems
can severely degrade the performance as compared with using the instantaneous channel estimates
of the interfering users, see the \ac{UL} \ac{MU-MIMO} receiver structures used in, for example,
\cite{Hoydis:2013, Truong:13, Kong:2015, Abrardo:19}. Specifically, papers \cite{Hoydis:2013} and \cite{Abrardo:19}
proposed \ac{MMSE} receivers in block fading, whereas a \ac{MRC} and \ac{ZF} receiver in time-varying channels
in the presence of channel aging are used by \cite{Truong:13} and \cite{Kong:2015} respectively.
Note that the conceptual difference between the \ac{MRC} and \ac{ZF} receivers used in \cite{Truong:13}
and \cite{Kong:2015} and the \ac{MMSE} receiver proposed in \cite{Fodor:2021} lies in the fact that the
\ac{MMSE} receiver actively takes into account that the subsequent channel realizations are correlated
rather than adopting the \ac{MMSE} receiver structure developed for block fading channels.
Therefore, we refer to the \ac{MMSE} receiver in \cite{Fodor:2021} as an AR-aware receiver.

In the light of these works, it is natural to ask the following two questions:
\begin{itemize}
\item
What is the \ac{MU-MIMO} receiver that minimizes the \ac{MSE} of the received
data symbols in time-varying Rayleigh fading when all user channels are estimated and, therefore,
the multiuser interference does not need to be treated as noise?
\item
Can we calculate the average \ac{SINR} in the uplink of \ac{MU-MIMO} systems that employ the
above receiver, as a function of the number of \ac{MU-MIMO} users and receive antennas,
employed pilot and data powers and large scale fading?
\end{itemize}

Intuitively, finding the answers to these questions implies extending the results by (1) papers \cite{Hoydis:2013}
and \cite{Abrardo:19} (by generalizing some of those block fading results to \ac{AR} processes),
(2) papers \cite{Truong:13} and \cite{Kong:2015} (by developing the optimal linear receiver in \ac{MSE} sense)
and (3) paper \cite{Fodor:2021} (by not treating the \ac{MU-MIMO} interference as noise and deriving an \ac{SINR} formula
rather than using the \ac{MSE} as a performance metric).
Consequently, the objective of the present paper is to devise a \ac{MU-MIMO} receiver that
utilizes the channel estimates of each user and the fact that subsequent
channel coefficients are correlated in time. In other words, we propose and analyze a \ac{MU-MIMO}
receiver that is optimal in the presence of \ac{CSI} errors when the channel evolves in time
according to a Rayleigh fading autocorrelation process. It is also our objective to derive an
average \ac{SINR} formula that can serve as a basis for rate optimization schemes in future works.
Thus, our contributions 
to the existing literature summarized above and in Table \ref{tab:tab2} are two-fold:
\rev{
\begin{enumerate}
\item
Calculating the deterministic equivalent \ac{SINR} of the \ac{MU-MIMO} \ac{MMSE} receiver proposed in Proposition \ref{P2},
by proving Proposition 2, Theorem 2, whose proof is based on Theorem 1 and Corollary 1, is our main and novel result.
To the best
of our knowledge, Theorem 1, Lemma 4 (needed for Theorem 1) and Theorem 2 have not been published before.
\item
We would like to emphasize the usefulness of Proposition 3, which
gives a straightforward computation of the optimum pilot power
in a \ac{MU-MIMO} \ac{AR} Rayleigh fading environment as a root of a quartic equation.
\end{enumerate}
Our analytical (based on Theorem 2 and Proposition 3) and simulation results (comparing the performance of
the different \ac{MU-MIMO} receivers listed in Table IV)
indicate that the proposed AR-aware receiver
outperforms earlier \ac{AR} receivers in terms of the achieved \ac{SINR}, such as those
proposed by Truong and Heath \cite{Truong:13} and our own previously proposed scheme in \cite{Fodor:2021}.
}

The paper is organized as follows.
The next section describes our system model, which is similar to that used in, for example
\cite{Fodor:2021}, \cite{Hoydis:2013} or \cite{Truong:13}.
Section \ref{Sec:G} derives the MMSE receiver for autoregressive Rayleigh fading channels, stated
as Proposition \ref{P2}.
Section \ref{Sec:SINR} derives our key result, Theorem \ref{thm:1},
which can be considered as an extension of the \ac{SINR} results in \cite{Hoydis:2013} and \cite{Abrardo:19}
to \ac{AR} processes. The important feature of this implicit \ac{SINR} formula is that it does not
require to solve a system of equations or fixed point iterations due to the fact that the
implicit equation has a unique positive solution. Also, Subsection \ref{Sec:Opt} derives the
optimum pilot power in \ac{SU-MIMO} systems or in \ac{MU-MIMO} systems, in the special case when the large
scale fading components of all users are equal. The treatment of the optimum pilot
power in the general \ac{MU-MIMO} case is left for future work.
Section \ref{Sec:Num} discusses numerical results, and Section \ref{Sec:Conc} draws
conclusions.
\vspace{-2mm}
\section{System Model}
\label{Sec:Mod}

\subsection{Uplink Signal Model}
We consider a single cell \ac{MU-MIMO} system, where the \ac{BS} is equipped with
$N_r$ receive antennas, and there are $K$ uplink \acp{MS}.
(Note that typically $K \ll N_r$.) 
The \acp{MS} facilitate \ac{CSIR} acquisition at the \ac{BS} using orthogonal complex
sequences, such as the Zadoff-Chu sequences, defined as
$\mathbf{s} \triangleq \left[s_1,...,s_{\tau_p}\right]^T \in \mathds{C}^{{\tau_p \times 1}}$.
These pilot sequences satisfy 
$|s_i|^2 = 1$, for $i=1,..,\tau_p$ \cite{Sesia:11}.
To enable spatial multiplexing, the length of the pilot sequences
$\tau_p$ is chosen such that a maximum of $K$ users can be served simultaneously, implying that
$\tau_p \geq K$ holds.
In this \ac{MU-MIMO} system, 
$\tau_p$ subcarriers are used to construct the pilot sequences at each \ac{MS},
and $\tau_d$ subcarriers are used to transmit data symbols.
Each \ac{MS} has a total power budget
$P_{\text{tot}}$,
imposing the constraint
$\tau_p P_{p} + \tau_d P = P_{\text{tot}}$,
where $P$ is the transmit and $P_p$ denotes the pilot power. 
The trade-off between pilots and data signals as implied by the sum pilot and data power constraint 
has been studied
by several previous works, see for example \cite{LeviB, Ngo:14}.
In this paper, User-1 is the tagged user, while indexes $2 \ldots K$
are used to denote the interfering users from the tagged user's point of view.
Consequently, 
the received pilot signal transmitted by
User-1 at the \ac{BS} takes the form of \cite{Fodor:2021}:
\begin{align}
\mathbf{Y}^p(t)
&=
\alpha \sqrt{P_{p}}\mathbf{h}(t) \mathbf{s}^T +\mathbf{N}(t) ~~ \in \mathds{C}^{N_r \times \tau_p},
\label{eqn:received_training_seq}
\end{align}
\noindent where 
$\mathbf{h}(t) ~\in~\mathds{C}^{N_r \times 1} \sim \mathcal{CN}(\mathbf{0},\mathbf{C})$, that is,
$\mathbf{h}(t)$ is a 
complex normal distributed column vector
with mean vector $\mathbf{0}$ and covariance matrix $\mathbf{C}$. 
Furthermore, $\alpha$ denotes
large scale fading, and
$\mathbf{N}\in \mathds{C}^{N_r \times \tau_p}$
is the 
\ac{AWGN} with element-wise variance $\sigma_p^2$.

\subsection{Channel Model}
In this paper $\mx{h}$ denotes the complex channel which is modeled as a
stationary discrete time \ac{AR}(1) process as in \cite{Abeida:10, Hijazi:10, Fodor:2021}.
This model can be seen as a generalization of the block fading channel model:
$\mx{h}(t) = \mx{A} \mx{h}(t-1) + \boldsymbol{\vartheta}(t) \quad \in \mathds{C}^{N_r \times 1}$,
where $\boldsymbol{\vartheta}(t) \sim \mathcal{CN}\left(\mx{0},\bs{\Theta}\right)$
is the process noise vector
and $\mx{A}$ denotes the state transition matrix of the \ac{AR}(1) process \cite{Lehmann:08}.
In this paper we will use this \ac{AR}(1) model 
to approximate the Rayleigh fading channel.
We remark that the parameters of the \ac{AR}(1) model can be identified by existing
methods, such as those reported in \cite{McGuire:05, Krusevac:08} and 
\cite{Mekki:16}.
Due to the stationarity of $\mx{h}(t)$ 
we have
$\mx{C} = \mx{A} \mx{C} \mx{A}^H + \boldsymbol{\Theta}$. 

\subsection{Data Signal Model}
\begin{table}[t]
\caption{System Parameters}
\vspace{2mm}
\label{tab:notation}
\footnotesize
\begin{tabularx}{\columnwidth}{|X|X|}
\hline
\hline
\textbf{Notation} & \textbf{Meaning} \\
\hline
\hline
$K$ & Number of \ac{MU-MIMO} users \\
\hline
$N_r$ & Number of antennas at the BS \\
\hline
$\tau_p, \tau_d$ & Number of pilot/data symbols within a coherent set of subcarriers  \\
\hline
$\mx{s}\in \mathds{C}^{\tau_p \times 1}$ & Sequence of pilot symbols\\
\hline
$x$ & Data symbol \\
\hline
$P_p, P, P_{\text{tot}}$ & Pilot power per symbol, data power per symbol, and total power budget  \\
\hline
$\mx{Y}^p \in \mathds{C}^{N_r \times \tau_p}, y(t) \in \mathds{C}^{N_r}$ & Received pilot and data signal, respectively  \\
\hline
$\mx{h}(t), \hat{\mx{h}}(t) \in \mathds{C}^{N_r}$ & Fast fading channel and estimated channel \\
\hline
$\mx{A} \in \mathds{C}^{N_r \times N_r}$ & AR parameter of the channel\\
\hline
$\boldsymbol{\vartheta}(t) \in \mathds{C}^{N_r}, \bs{\Theta} \in \mathds{C}^{N_r \times N_r}$
& Process noise of the channel AR process and its covariance matrix\\
\hline
$\bs{\varepsilon}(t) \in \mathds{C}^{N_r}, \bs{\Sigma} \in \mathds{C}^{N_r \times N_r}$
& Channel estimation error and its covariance matrix\\
\hline
$\mx{G}, \mx{G}^\text{naive}, \mx{G}^\star$
& MU-MIMO receivers: generic, naive, and optimal, respectively. \\
\hline
\end{tabularx}
\end{table}
\vspace{-1mm}
Considering $K$ \ac{MU-MIMO} users,
the received data signal at the \ac{BS} at time 
$t$ is \cite{Fodor:2021}:
\begin{align}
\mathbf{y}(t)
&=
\underbrace{\mathbf{\alpha} \mathbf{h}(t) \sqrt{P} x(t)}_{\text{tagged user}}
+ \underbrace{\sum_{k=2}^K \mathbf{\alpha}_{k} \mathbf{h}_k(t) \sqrt{P_{k}} x_{k}(t)}_{\text{other users}}
+\mathbf{n}_d(t),
\label{eq:mumimo2}
\end{align}
\noindent where $\mathbf{y}(t)\in \mathds{C}^{N_r \times 1}$;
and
$\mathbf{\alpha}_{k} \mathbf{h}_k(t) \in \mathds{C}^{N_r \times 1}$
denotes the channel vector,
and $x_k(t)$ is the data symbol of User-$k$
transmitted at time $t$ with power $P_k$.
Furthermore $\mathbf{n}_d(t)~\sim \mathcal{CN}\left(\mx{0},\sigma_d^2\mx{I}_{N_r}\right)$
is the \ac{AWGN},
where $\mathbf{I}_{N_r}$ denotes the identity matrix of size $N_r$.

\subsection{Channel Estimation}
\label{Sec:Channel}
To acquire \ac{CSIR}, the \acp{MS} transmit orthogonal pilot sequences,
\color{black} and the \ac{BS}
uses \ac{MMSE} channel estimation based on~\eqref{eqn:received_training_seq}.
For algebraic convenience we define
\begin{align}
\mathbf{\tilde Y}^p(t)=\textbf{vec}\left(\mathbf{Y}^p(t)\right)=\alpha\sqrt{P_p} \mathbf{S} \mathbf{h}(t) +\mathbf{\tilde N}(t)\gf{,}
\end{align}
where $\textbf{vec}$ is the column stacking vector operator,
$\mathbf{\tilde Y}^p(t), \mathbf{\tilde N}(t) \in \mathds{C}^{\tau_p N_r \times 1}$
and
$\mathbf{S} \triangleq \mathbf{s}\otimes \mathbf{I}_{N_r} \in \tau_p N_r \times N_r$)
is such that $\mathbf{S}^H\mathbf{S}=\tau_p\mathbf{I}_{N_r}$.

\begin{lem}
\label{lem:mmsechannel}	
The MMSE channel estimator approximates the AR(1) channel based on the latest and the previous channel \qq{states} as
\begin{align}
\label{eq:hmmse}
\mathbf{\hat h}_{\textup{MMSE}}(t)
&=
\begin{bmatrix}
\mx{C} &
\mx{A} \mx{C}
\end{bmatrix}
\left( \frac{\sigma_p^2}{\alpha^2P_p \tau_p} \mx{I}_{2N_r} +  \mx{M}\right)^{-1} \nonumber \\
&~~~~
\left(\mathbf{\bar h}(t) + \frac{1}{\alpha\sqrt{P_p} \tau_p} \mathbf{\bar n}(t)\right),
\end{align}
where
$\mx{M} =\begin{bmatrix}
	\mx{C}  &
	\mx{A}\mx{C} \\
	\mx{C}\mx{A}^{H} &
	\mx{C}
	\end{bmatrix}$,
$\mx{\bar h}(t)=\begin{bmatrix}
\mx{h}(t) \\
\mx{h}(t-1)
\end{bmatrix}$ ~~and\vspace{3mm}\\
$\text{~~~}\mx{\bar n}(t)=\begin{bmatrix}
\mathbf{s}^H  \mx{N}(t) \\
\mathbf{s}^H \mx{N}(t-1)
\end{bmatrix}$.
\end{lem}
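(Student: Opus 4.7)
The plan is to recognize \eqref{eq:hmmse} as the standard linear Gaussian MMSE estimator of $\mathbf{h}(t)$ from a two-sample observation model, after reducing \eqref{eqn:received_training_seq} to a sufficient statistic in $\mathbb{C}^{N_r}$ by pilot matched filtering. First, I would post-multiply \eqref{eqn:received_training_seq} by $\mathbf{s}^*/(\alpha\sqrt{P_p}\tau_p)$; using $\mathbf{s}^H\mathbf{s}=\tau_p$ this yields the pre-processed observation
\begin{align*}
\mathbf{z}(t) \triangleq \frac{\mathbf{Y}^p(t)\mathbf{s}^*}{\alpha\sqrt{P_p}\tau_p} \;=\; \mathbf{h}(t) + \frac{1}{\alpha\sqrt{P_p}\tau_p}\mathbf{N}(t)\mathbf{s}^*,
\end{align*}
which is a sufficient statistic for $\mathbf{h}(t)$ because $\mathbf{s}^*$ spans the entire pilot subspace (equivalently, $\mathbf{S}^H\mathbf{\tilde Y}^p(t)/(\alpha\sqrt{P_p}\tau_p)$ applied to the vectorized form). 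The residual noise $\mathbf{N}(t)\mathbf{s}^*/(\alpha\sqrt{P_p}\tau_p)$ is zero-mean circularly symmetric complex Gaussian with covariance $\frac{\sigma_p^2}{\alpha^2 P_p \tau_p}\mathbf{I}_{N_r}$, which already accounts for the scalar factor appearing in \eqref{eq:hmmse}.

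Second, I would stack two consecutive matched-filter outputs to obtain
\begin{align*}
\bar{\mathbf z}(t) \;\triangleq\; \begin{bmatrix}\mathbf{z}(t)\\ \mathbf{z}(t-1)\end{bmatrix} \;=\; \bar{\mathbf h}(t) + \frac{1}{\alpha\sqrt{P_p}\tau_p}\bar{\mathbf n}(t),
\end{align*}
where the stacked noise has covariance $\frac{\sigma_p^2}{\alpha^2 P_p \tau_p}\mathbf{I}_{2N_r}$ (the two matched-filter noises are independent since $\mathbf{N}(t)$ and $\mathbf{N}(t-1)$ are independent). Because $\mathbf{h}(t)$ and $\bar{\mathbf z}(t)$ are jointly zero-mean circularly symmetric complex Gaussian, the MMSE estimator is linear and is given by $\hat{\mathbf h}_{\text{MMSE}}(t) = E[\mathbf{h}(t)\bar{\mathbf z}(t)^H]\, E[\bar{\mathbf z}(t)\bar{\mathbf z}(t)^H]^{-1}\bar{\mathbf z}(t)$, so the task reduces to computing these two covariance matrices.

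Third, I would evaluate the required covariances using the AR(1) structure and the stationarity identity $\mathbf{C}=\mathbf{A}\mathbf{C}\mathbf{A}^H+\boldsymbol{\Theta}$. Since $\mathbf{h}(t)=\mathbf{A}\mathbf{h}(t-1)+\boldsymbol{\vartheta}(t)$ with $\boldsymbol{\vartheta}(t)$ independent of $\mathbf{h}(t-1)$, the lag-one cross-covariance is $E[\mathbf{h}(t)\mathbf{h}(t-1)^H]=\mathbf{A}\mathbf{C}$, hence
\begin{align*}
E[\mathbf{h}(t)\bar{\mathbf h}(t)^H]=\begin{bmatrix}\mathbf{C}&\mathbf{A}\mathbf{C}\end{bmatrix},\qquad E[\bar{\mathbf h}(t)\bar{\mathbf h}(t)^H]=\mathbf{M},
\end{align*}
the lower-left block $\mathbf{C}\mathbf{A}^H$ of $\mathbf{M}$ following as the Hermitian conjugate of $\mathbf{A}\mathbf{C}$. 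Independence of pilot noise and channel then gives $E[\bar{\mathbf z}(t)\bar{\mathbf z}(t)^H]=\mathbf{M}+\frac{\sigma_p^2}{\alpha^2 P_p\tau_p}\mathbf{I}_{2N_r}$ and $E[\mathbf{h}(t)\bar{\mathbf z}(t)^H]=[\mathbf{C},\mathbf{A}\mathbf{C}]$, and substituting into the Gaussian MMSE formula reproduces \eqref{eq:hmmse} term by term.

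No step poses a serious obstacle; the derivation is essentially bookkeeping. The only point where care is required is to orient the AR(1) recursion correctly so that $E[\mathbf{h}(t)\mathbf{h}(t-1)^H]=\mathbf{A}\mathbf{C}$ (rather than $\mathbf{C}\mathbf{A}^H$), and to track the scalar $1/(\alpha\sqrt{P_p}\tau_p)$ vs.\ $1/(\alpha^2 P_p\tau_p)$ factors through the matched-filtering step; both fall out naturally once the sufficient statistic is written down.
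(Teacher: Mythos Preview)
Your proposal is correct and uses the same underlying idea as the paper (the linear Gaussian MMSE formula applied to the two consecutive pilot observations), but with a slightly different order of operations. The paper works directly with the full $2\tau_pN_r$-dimensional stacked observation $\hat{\mathbf Y}^p(t)$, solves the quadratic problem $\mathbf H^\star=\mathbf b^H\mathbf F^{-1}$ with $\mathbf F=\mathds E[\hat{\mathbf Y}^p(\hat{\mathbf Y}^p)^H]$ and $\mathbf b=\mathds E[\hat{\mathbf Y}^p\mathbf h^H]$, and only at the end collapses the pilot dimension via the Kronecker identities $\mathbf S=\mathbf s\otimes\mathbf I_{N_r}$ and $\mathbf S^H\tilde{\mathbf N}(t)=\mathbf N(t)\mathbf s^*$. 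You instead perform the matched-filter reduction $\mathbf Y^p(t)\mathbf s^*/(\alpha\sqrt{P_p}\tau_p)$ first, invoking sufficiency to justify working in $\mathbb C^{2N_r}$ from the outset, and then read off the covariances $[\mathbf C\ \mathbf A\mathbf C]$ and $\mathbf M+\frac{\sigma_p^2}{\alpha^2P_p\tau_p}\mathbf I_{2N_r}$ directly from the AR(1) recursion. Your route is a bit more economical because it bypasses the Kronecker bookkeeping entirely; the paper's route has the minor advantage of making explicit that the estimator is linear in the raw pilot observation $\hat{\mathbf Y}^p(t)$ before any pre-processing. Both are equally valid and yield \eqref{eq:hmmse} identically.
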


\rev{The proof is in Appendix A.}

\begin{cor}
\label{cor:rmmse}	
The estimated channel $\mathbf{\hat h}_{\textup{MMSE}}$ is a circular symmetric complex normal distributed vector
$\mathbf{\hat h}_{\textup{MMSE}}(t) \sim \mathcal{CN}(\mathbf{0},\mathbf{R}_{\textup{MMSE}})$,
with
\begin{align}
\label{eq:rmmse}
\mathbf{R}_{\textup{MMSE}} &= \mathds{E}_{\mathbf{h},\mathbf{n}} \{\mathbf{\hat h}_{\textup{MMSE}}(t) \mathbf{\hat h}_{\textup{MMSE}}^H(t)\} \nonumber \\
&=
\begin{bmatrix}
\mx{C} &
\mx{A} \mx{C}
\end{bmatrix}
\left( \frac{\sigma_p^2}{\alpha^2P_p \tau_p} \mx{I}_{2N_r} +  \mx{M}\right)^{-1}
\begin{bmatrix}
\mx{C} \\
\mx{C} \mx{A}^{H}
\end{bmatrix}
\\
&=
\left[
\begin{array}{ccc}
\mx{C} & \mx{AC}
\end{array}
\right]
\left[
\begin{array}{ccc}
\mx{C}+\mx{\Sigma} &  \mx{AC}\\
\mx{C}\mx{A}^H & \mx{C}+\mx{\Sigma}
\end{array}
\right]^{-1}
\left[
\begin{array}{ccc}
\mx{C}\\
\mx{C}\mx{A}^H
\end{array}
\right] ,
\nonumber
\end{align}
where $\mx{\Sigma}\triangleq \frac{\sigma_p^2}{\alpha^2P_p \tau_p} \mx{I}_{N_r}$.
\end{cor}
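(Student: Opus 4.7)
The plan is to piggyback on Lemma \ref{lem:mmsechannel}, which expresses $\mathbf{\hat h}_{\textup{MMSE}}(t) = \mx{L}\bigl(\mathbf{\bar h}(t) + \tfrac{1}{\alpha\sqrt{P_p}\tau_p}\mathbf{\bar n}(t)\bigr)$ with the deterministic factor $\mx{L} \triangleq \begin{bmatrix}\mx{C} & \mx{A}\mx{C}\end{bmatrix}\bigl(\tfrac{\sigma_p^2}{\alpha^2 P_p\tau_p}\mx{I}_{2N_r}+\mx{M}\bigr)^{-1}$. Because $\mathbf{\bar h}(t)$ and $\mathbf{\bar n}(t)$ are jointly zero-mean circular symmetric complex Gaussian vectors (the former by the Rayleigh fading assumption, the latter by the \ac{AWGN} assumption on $\mx{N}(\cdot)$), any affine transformation inherits the zero-mean circular symmetric complex Gaussian property. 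Hence $\mathbf{\hat h}_{\textup{MMSE}}(t)\sim\mathcal{CN}(\mathbf{0},\mx{R}_{\textup{MMSE}})$ follows immediately, and the whole task reduces to evaluating the covariance.

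First I would compute the covariance of the bracketed random vector. The channel block has covariance $\mathds{E}\{\mathbf{\bar h}(t)\mathbf{\bar h}^H(t)\}=\mx{M}$ by the \ac{AR}(1) evolution and the stationarity identity $\mx{C}=\mx{A}\mx{C}\mx{A}^H+\bs{\Theta}$. For the noise block, using $\mathbf{S}^H\mathbf{S}=\tau_p\mx{I}_{N_r}$, element-wise variance $\sigma_p^2$ of $\mx{N}(\cdot)$, and temporal independence between $\mx{N}(t)$ and $\mx{N}(t{-}1)$, one gets $\mathrm{Cov}(\mathbf{\bar n}(t)) = \tau_p\sigma_p^2\mx{I}_{2N_r}$; after scaling by $\tfrac{1}{\alpha\sqrt{P_p}\tau_p}$ its contribution becomes $\tfrac{\sigma_p^2}{\alpha^2 P_p\tau_p}\mx{I}_{2N_r}$. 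By independence of channel and pilot noise, the bracketed vector therefore has covariance $\mx{K} \triangleq \tfrac{\sigma_p^2}{\alpha^2 P_p\tau_p}\mx{I}_{2N_r}+\mx{M}$, which is precisely the matrix that is inverted inside $\mx{L}$.

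Finally, $\mx{R}_{\textup{MMSE}} = \mx{L}\,\mx{K}\,\mx{L}^H$; since $\mx{K}$ is Hermitian and $\mx{C}=\mx{C}^H$, one of the two inverses cancels with $\mx{K}$, and the product collapses to $\begin{bmatrix}\mx{C} & \mx{A}\mx{C}\end{bmatrix}\mx{K}^{-1}\begin{bmatrix}\mx{C}\\ \mx{C}\mx{A}^H\end{bmatrix}$, which is the first displayed form. The second form is obtained by writing $\mx{K}$ in $2{\times}2$ block form and identifying the diagonal blocks as $\mx{C}+\mx{\Sigma}$ with $\mx{\Sigma}=\tfrac{\sigma_p^2}{\alpha^2 P_p\tau_p}\mx{I}_{N_r}$. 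I do not foresee any real obstacle: the only step requiring a little care is the covariance calculation for $\mathbf{\bar n}(t)$, which rests on the pilot-sequence orthonormality scaled by $\tau_p$; everything else is an algebraic rearrangement of two equivalent expressions.
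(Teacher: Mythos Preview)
Your proposal is correct and follows essentially the same route as the paper: the paper's justification is the one-line remark that \eqref{eq:rmmse} follows from \eqref{eq:hmmse} together with $\mathds{E}\{\mx{\bar h}(t)\mx{\bar h}^H(t)\}=\mx{M}$ and $\mathds{E}\{\mx{\bar n}(t)\mx{\bar n}^H(t)\}=\tau_p\sigma_p^2\mx{I}_{2N_r}$, and you have simply spelled out these steps in full (including the Gaussianity claim and the cancellation $\mx{L}\mx{K}\mx{L}^H=[\mx{C}\ \mx{A}\mx{C}]\,\mx{K}^{-1}[\mx{C};\,\mx{C}\mx{A}^H]$).
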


We note that \eqref{eq:rmmse} is obtained from \eqref{eq:hmmse} using
$\mathds{E}_{\mathbf{h},\mathbf{n}} \{\mx{\bar h}(t) \mx{\bar h}(t)^H\} = \mx{M}$ and
$\mathds{E}_{\mathbf{h},\mathbf{n}} \{\mx{\bar n}(t) \mx{\bar n}(t)^H\} = \tau_p \sigma_p^2 \mx{I}_{2N_r}$.
According to Corollary \ref{cor:rmmse}
and $\mathbf{h}(t) \sim \mathcal{CN}(\mathbf{0},\mathbf{C})$,
the covariance matrix of the channel estimation noise when using the \ac{MMSE} channel estimation is:
$\mx{Z} 
= \mx{C} - \mathbf{R}_{\textrm{MMSE}}$,
which is identical with the LS case discussed in \cite{Fodor:2021}, and we \qq{therefore}
omit the \ac{MMSE} subscript in the sequel.
\color{black}

\begin{lem}
\label{L2}
The channel realization $\mathbf{h}(t)$ conditioned on the
current and previous estimates
$\mathbf{\hat h}(t)$ and $\mathbf{\hat h}(t-1)$
is normally distributed as follows:
\begin{align}
\label{eq:ET}
\left(\mathbf{h}(t) \Big| \mathbf{\hat h}(t),\mx{\hat h}(t-1)\right)
&\sim
\mx{E} \bs{\zeta}(t)
+ \underbrace{\mathcal{CN}\Big(\mathbf{0},\mx{Z}\Big)}_{\textup{channel estimation noise}},
\end{align}
\noindent where for $\forall t$
\begin{align}
\label{eq:E}
&\bs{\zeta}(t) \triangleq
\left[
\begin{array}{ccc}
\mx{\hat h}(t)\\
\mx{\hat h}(t-1)
\end{array}
\right] \in \mathds{C}^{2N_r \times 1}, \nonumber \\
&\mx{E} \triangleq
\left[
\begin{array}{ccc}
\mx{C} & \mx{AC}
\end{array}
\right]
\left[
\begin{array}{ccc}
\mx{C}+\mx{\Sigma} &  \mx{AC}\\
\mx{C}\mx{A}^H & \mx{C}+\mx{\Sigma}
\end{array}
\right]^{-1}
\in \mathds{C}^{N_r \times 2N_r}, 
\end{align}
\begin{align}
\label{eq:T}
&\mx{Z} \triangleq \mx{C}-\mx{E}
\left[
\begin{array}{ccc}
\mx{C}\\
\mx{C}\mx{A}^H
\end{array}
\right] \in \mathds{C}^{N_r \times N_r},
\textup{~and~~} \nonumber \\
&\textup{Cov}\Big(\bs{\zeta}(t)\Big)=
\left[
\begin{array}{cc}
\mx{C+\Sigma} & \mx{AC} \\
\mx{CA^H} & \mx{C+\Sigma}
\end{array}
\right] \in \mathds{C}^{2N_r \times 2N_r}.~~~~~~
\end{align}
\end{lem}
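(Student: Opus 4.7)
The plan is to invoke the standard circularly symmetric complex Gaussian conditioning formula: for jointly zero-mean circularly symmetric complex Gaussian vectors $X$ and $Y$,
\[
X \mid Y \sim \mathcal{CN}\big(\Sigma_{XY}\Sigma_{YY}^{-1}Y,\; \Sigma_{XX} - \Sigma_{XY}\Sigma_{YY}^{-1}\Sigma_{YX}\big).
\]
The triple $(\mx{h}(t), \mx{\hat h}(t), \mx{\hat h}(t-1))$ is jointly zero-mean circularly symmetric complex Gaussian, since each component is a linear transformation of Gaussian channel coefficients and additive pilot noise. Hence the distribution appearing in \eqref{eq:ET} is Gaussian, and only the conditional mean $\mx{E}\bs{\zeta}(t)$ and covariance $\mx{Z}$ need to be identified through the appropriate second-order statistics.

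The main computational step is to compute $\textup{Cov}(\bs{\zeta}(t))$ and $\textup{Cov}(\mx{h}(t), \bs{\zeta}(t))$. Using the representation $\mx{\hat h}(t) = \mx{h}(t) + \bs{\varepsilon}(t)$ with $\bs{\varepsilon}(t) \sim \mathcal{CN}(\mx{0},\bs{\Sigma})$ independent of $\mx{h}(t)$ and of $\bs{\varepsilon}(s)$ for $s \neq t$, the diagonal blocks of $\textup{Cov}(\bs{\zeta}(t))$ evaluate to $\mx{C}+\bs{\Sigma}$, while stationarity of the AR(1) channel together with independence of the estimation errors across time gives the off-diagonal blocks $\mx{AC}$ and $\mx{CA}^H$, reproducing the matrix in \eqref{eq:T}. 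For the cross-covariance, the block against $\mx{\hat h}(t)$ equals $\mathbb{E}[\mx{h}(t)(\mx{h}(t)+\bs{\varepsilon}(t))^H]=\mx{C}$, and the block against $\mx{\hat h}(t-1)$ is $\mathbb{E}[\mx{h}(t)\mx{h}(t-1)^H]=\mx{AC}$, obtained by substituting $\mx{h}(t) = \mx{A}\mx{h}(t-1)+\bs{\vartheta}(t)$ and using $\bs{\vartheta}(t)\perp \mx{h}(t-1)$. Substituting $[\mx{C},\,\mx{AC}]$ and the inverse of $\textup{Cov}(\bs{\zeta}(t))$ into the conditioning formula then reproduces $\mx{E}$ and $\mx{Z}$ exactly as in \eqref{eq:E} and \eqref{eq:T}.

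The subtle point is justifying the cross-covariance $\mathbb{E}[\mx{h}(t)\mx{\hat h}(t)^H]=\mx{C}$: the genuine MMSE estimator of Lemma~\ref{lem:mmsechannel} obeys the orthogonality principle and would instead give $\mathbf{R}_{\textup{MMSE}}$. The value $\mx{C}$ arises naturally in the LS-style representation, which is consistent with the block $\mx{C}+\bs{\Sigma}$ on the diagonal of $\textup{Cov}(\bs{\zeta}(t))$. Because the LS and MMSE estimates built from identical pilot data are related by an invertible linear map, conditioning on either generates the same $\sigma$-algebra and produces the same posterior on $\mx{h}(t)$; the LS form is simply the cleanest vehicle for the algebra, which is exactly the convenience the paper exploits by dropping the MMSE/LS subscript after Corollary~\ref{cor:rmmse}.
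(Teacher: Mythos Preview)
Your proposal is correct and follows essentially the same approach as the paper (which defers to \cite{Fodor:2021}, but whose argument---visible in the commented-out appendix---computes the joint covariance of $(\mx{h}(t),\mx{\hat h}(t),\mx{\hat h}(t-1))$ and then applies the standard Gaussian conditioning formulas, citing equations (10.24)--(10.25) of Kay). Your explicit discussion of the LS versus MMSE subtlety is a nice addition that the paper only alludes to when it remarks that the estimation-error covariance ``is identical with the LS case'' and drops the MMSE subscript.
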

\rev{The proof is in \cite{Fodor:2021}.}

\rev{\subsection{Summary}}
\rev{
This section described the system model consisting of a signal model and an \ac{MMSE} channel estimation scheme.
When the channel estimation is based on the current and previous channel observations
(i.e.\ $\mathbf{\hat h}(t)$ and $\mx{\hat h}(t-1)$), the conditional distribution of $\mathbf{h}$ is
complex normal with mean vector and covariance matrix according to Lemma \ref{L2}, which serves as a starting
point for deriving the optimal \ac{MU-MIMO} receiver in the sequel.}

\vspace{2mm}
\section{Deriving the MMSE Receiver for Time-Varying Rayleigh Fading Channels}
\label{Sec:G}
%
%
The \ac{BS} the transmitted data symbols by employing a
linear \ac{MMSE} receiver $\mathbf{G} \in \mathds C^{1 \times N_r}$,
which minimizes the \ac{MSE}
between the transmitted symbol $x$ and the estimated symbol $\mathbf{G} \mathbf{y}$:
\begin{align}
\label{eq:gstardef}
\mathbf{G}^\star
& \triangleq
\text{arg} \min_{\mathbf{G}} \mathds{E}_{\mx{h},\mx{n},x}\{ |\mathbf{G} \mathbf{y} - x|^2 \} ~~ \in \mathds C^{1 \times N_r}.
\end{align}
%

When the BS employs a naive receiver, it assumes perfect channel estimation,
and uses the estimated channel in place of the actual channel:
\begin{align}
\mathbf{G}^{\text{naive}} =
\alpha\sqrt{P}\mathbf{\hat h}^{H}(\alpha^2 P
\mathbf{\hat h}\mathbf{\hat h}^{H}+\sigma_d^2\mathbf{I}_{N_r})^{-1}.
\label{eqn:equalizer_definition_singlecell}
\end{align}
As we shall see, the naive receiver fails to minimize the MSE.

Next, we derive the \ac{MMSE} receiver vector $\mathbf{G}^\star$
that the receiver at the \ac{BS} should use to minimize the \ac{MSE} of the received data symbol $x$
of the tagged user based on the data signal $\mx{y}$.
Since the \ac{BS} can only use the estimated channels,
the objective function of this minimization must only depend on the
estimated channels $\mathbf{\hat h}(t)$ and $\mathbf{\hat h}(t-1)$.
This \ac{MMSE} receiver can be contrasted to the naive receiver,
which assumes that perfect \ac{CSIR} is available.
The \ac{MSE} of the received data symbols, as a function of the generic linear receiver $\mx{G}$ and the actual propagation channels $\mx{h}$,
was shown to have the following form \cite{FMT:15}:
%
\begin{align}
\label{eq:MSEGh}
&\text{MSE}\big(\mathbf{G}, \mathbf{H}\big)
=
\mathds{E}_{x,\mx{n}_d} \left\{|\mx{G}\mx{y}-x|^2\right\}  
=\left|\mathbf{G}  \alpha \mathbf{h} \sqrt{P}-1 \right|^2 \nonumber \\
&+{\sum_{k=2}^K P_{k}|\mathbf{G} \alpha_k \mathbf{h}_{k}|^2 }+ \sigma^2_d \mathbf{G} \mathbf{G}^H 
=1-\alpha \sqrt{P} \mathbf{G} \mathbf{h} - \alpha \sqrt{P} \mathbf{h}^H \mathbf{G}^H  \nonumber \\
&+\mathbf{G} \left(\sum_{k=1}^K \alpha^2_k P_{k}\mathbf{h}_{k} \mathbf{h}_{k}^H + \sigma^2_d \mathbf{I}_{N_r} \right) \mathbf{G}^H,
\end{align}
where
$\mathbf{H}=\left[\mathbf{h}_1, \dots, \mathbf{h}_K\right] \in \mathds{C}^{N_r \times K}$
collects the complex channel vector for each of the $K$ users.
%
We now seek to express the \ac{MSE} as a function of $\mx{G}$
and the estimated channel
$\hat{\mx{ H}}(t),\hat{\mx{ H}}(t-1)$, rather than the actual channel $\mx{H}$,
where the $\hat{\mx{H}}(t)$ and $\hat{\mx{ H}}(t-1)$ matrices collect the estimated channels.
To achieve this, we average the \ac{MSE} over
$\Big(\mathbf{h}_{k}|\hat{\mathbf{h}}_{k}(t),\hat{\mathbf{h}}_{k}(t-1)\Big)$ and obtain:
%
\begin{align}
&\text{MSE}\left(\mathbf{G}, \hat{\mathbf{H}}(t),\hat{\mathbf{H}}(t-1) \right)
=\mathds{E}_{\mathbf{H}|\hat{\mathbf{H}}(t),\hat{\mathbf{H}}(t-1)}\left\{\text{MSE}\left(\mathbf{G}, \mathbf{H} \right) \right\} \nonumber \\
&=
1- \alpha \sqrt{P} \mathbf{G} \mathbf{E} \bs{\zeta} - \alpha \sqrt{P}  \bs{\zeta}^H \mathbf{E}^H \mathbf{G}^H \nonumber \\
&+\mathbf{G}\left(\sum_{k=1}^K \alpha^2_k P_{k}
\left( \mathbf{E}_k  \bs{\zeta}_{k}  \bs{\zeta}^H \mathbf{E}_k^H \!+\! \mathbf{Z}_k\right)
\!+\! \sigma^2_d \mathbf{I}_{N_r} \right) \mathbf{G}^H,
\label{eq:msehath}
\end{align}
where the $\bs{\zeta}(t)$ vector and $\mx{E}$ and $\mx{Z}$ matrices, associated with the tagged user, were introduced in Lemma \ref{L2}, and $\bs{\zeta}_k(t)$, $\mx{E}_k$ and $\mx{Z}_k$ are the corresponding terms associated with user $k$.

%
We can now obtain
the following proposition: 
\begin{prop}
\label{P2}
The \textup{\ac{MU-MIMO}} \textup{\ac{MMSE}} receiver vector is given by:
\begin{align}
\label{eq:Gstar2}
\mx{G}^\star(t) &=
\textup{arg} \min_{\mx{G}} \textup{MSE}\left(\mx{G},\mx{\hat H}(t), \mx{\hat H}(t-1)\right) = \mx{b}^H(t) \mx{J}^{-1}(t),
\end{align}
where $\mx{b}(t)\in \mathds{C}^{N_r \times 1}$ and $\mx{J}(t) \in \mathds{C}^{N_r \times N_r}$
are defined as
\begin{align}
\label{eq:B3}
\mx{b}(t) &\triangleq \alpha \sqrt{P} 
\mx{E} \bs{\zeta}(t) 
,
\\
\label{eq:A3}
\mx{J}(t)
&\triangleq
\sum_{k=1}^K \alpha_k^2 P_k \left(\mx{E}_k \bs{\zeta}_k(t) \bs{\zeta}_k^H(t) \mx{E}_k^H + \mx{Z}_k\right) +   \sigma^2_d \mx{I}_{N_r}
.
\end{align}

\end{prop}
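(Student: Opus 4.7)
The plan is to recognize that the MSE expression in \eqref{eq:msehath} is a quadratic form in $\mx{G}$ and to minimize it by standard Wirtinger-style differentiation. First I would rewrite \eqref{eq:msehath} compactly as
\begin{align*}
\textup{MSE}(\mx{G},\mx{\hat H}(t), \mx{\hat H}(t-1)) = 1 - \mx{G}\,\mx{b}(t) - \mx{b}^H(t)\,\mx{G}^H + \mx{G}\,\mx{J}(t)\,\mx{G}^H,
\end{align*}
where $\mx{b}(t)$ and $\mx{J}(t)$ are identified by direct inspection with the definitions in \eqref{eq:B3}--\eqref{eq:A3}: the cross terms in $\mx{G}$ (resp.\ $\mx{G}^H$) collect the linear contribution of the tagged user's estimated channel, yielding $\alpha\sqrt{P}\,\mx{E}\bs{\zeta}(t)$, while the quadratic term combines (i) the tagged user's own contribution, (ii) the interfering users' expected outer products $\mx{E}_k\bs{\zeta}_k\bs{\zeta}_k^H\mx{E}_k^H + \mx{Z}_k$ given the estimates (obtained by applying Lemma \ref{L2} user-by-user), and (iii) the noise term $\sigma_d^2 \mx{I}_{N_r}$.

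Next I would minimize this quadratic form over the row vector $\mx{G}\in\mathds{C}^{1\times N_r}$. Treating $\mx{G}$ and $\mx{G}^H$ as formally independent variables (Wirtinger calculus), I would differentiate with respect to $\mx{G}^H$ and set the gradient to zero:
\begin{align*}
\frac{\partial\,\textup{MSE}}{\partial\,\mx{G}^H} = -\mx{b}(t) + \mx{J}(t)\,\mx{G}^H = \mathbf{0},
\end{align*}
which gives $\mx{G}^{\star H} = \mx{J}^{-1}(t)\,\mx{b}(t)$, i.e.\ $\mx{G}^\star(t) = \mx{b}^H(t)\,\mx{J}^{-1}(t)$, as claimed in \eqref{eq:Gstar2}.

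To make the argument rigorous I also have to check that $\mx{J}(t)$ is invertible and that the critical point is indeed a minimum. Both follow from the fact that $\mx{J}(t)$ is Hermitian positive definite: each summand $\mx{E}_k\bs{\zeta}_k\bs{\zeta}_k^H\mx{E}_k^H$ is positive semidefinite, each conditional channel-error covariance $\mx{Z}_k$ from Lemma \ref{L2} is positive semidefinite, and the additive $\sigma_d^2 \mx{I}_{N_r}$ term is strictly positive definite. Hence the Hessian of $\textup{MSE}$ in $\mx{G}$ is positive definite, the stationary point is the unique global minimizer, and $\mx{J}^{-1}(t)$ exists. The only mildly delicate step is the bookkeeping in passing from \eqref{eq:MSEGh} to \eqref{eq:msehath}: one must be careful that when $k$ equals the tagged-user index, the quadratic-in-$\mx{G}$ contribution $\alpha^2 P (\mx{E}\bs{\zeta}\bs{\zeta}^H\mx{E}^H + \mx{Z})$ is already included in the sum in \eqref{eq:A3} running from $k=1$, so that no double counting occurs relative to the linear term $\mx{b}(t)$. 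Once this is verified, the identification of $\mx{b}(t)$ and $\mx{J}(t)$ with the definitions stated in the proposition is immediate and the proof is complete.
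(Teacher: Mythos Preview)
Your proposal is correct and is essentially the same approach the paper takes: the paper simply remarks that \eqref{eq:Gstar2} is a quadratic optimization problem and presents its solution, without spelling out the Wirtinger differentiation or the positive-definiteness argument you provide. Your write-up just fills in the details the paper leaves implicit.
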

\rev{ Equation \eqref{eq:Gstar2} is a quadratic optimization problem and the proposition presents \qq{its solution}.
\qq{Specifically,} Proposition \ref{P2} states that the \ac{MU-MIMO} \ac{MMSE} receiver utilizes the estimated channels of all users
at both time $t$ and $t-1$, and the
\qq{$\mx{E}_k$ and $\mx{Z}_k$ matrices}
that were derived in Lemma \ref{L2}. \qq{To analyze the performance of this \ac{MU-MIMO} receiver,}
the next section uses the results of this section as a starting point, and will calculate the average \ac{SINR}, as the main result of this paper,
using random matrix theory.}

\section{Calculating the 
\ac{SINR} of the Received Data Symbols} 
\label{Sec:SINR}
\subsection{Determining the Instantaneous SINR with $\mx{G}^\star$}
Based on the received signal $\mx{y}$, the \ac{BS} employs the linear receiver $\mx{G}$ to estimate the transmitted symbol of the tagged user as:
$\hat{x}=\mathbf{G}\mathbf{y}$.
The expected energy of $\hat{x}$, conditioned on $\big(\hat{\mathbf{H}}(t),\hat{\mathbf{H}}(t-1)\big)$,
is expressed as:
\begin{equation} \nonumber
\begin{aligned}
\label{eq:estsymbol}
&
\mathds{E}_{x,\textbf{n}_d,\mathbf{H}|\hat{\mathbf{H}}(t),\hat{\mathbf{H}}(t-1)}
\!\left\{\left|\hat{x}\right|^2\right\}
=
\alpha^2 P |\mathbf{G} \mathbf{E} \bs{\zeta}(t)|^2 \nonumber \\
&
+\!\sum_{k=2}^K \alpha_k^2 P_k |\mathbf{G} \mathbf{E}_k \bs{\zeta}_k(t)|^2 
+\!\underbrace{\sum_{k=1}^K \alpha_k^2 P_k  \mathbf{G} \mathbf{Z}_k \mathbf{G}^H}_{\mbox{\footnotesize ch. estim. noise}}
\!+\!\sigma_d^2 \mathbf{G}  \mathbf{G}^H.
\end{aligned}
\end{equation}
We can now state the following lemma, which determines the instantaneous \ac{SINR}.
\vspace{-1mm}
\begin{lem}
\label{lem:1}
Assume that the receiver employs \textup{\ac{MMSE}} symbol estimation.
Then the instantaneous \ac{SINR} of the estimated data symbols,
$\gamma\Big(\mathbf{G}^\star, \hat{\mathbf{H}}(t),\hat{\mathbf{H}}(t-1)\Big)$
is given as:
\begin{equation}
\label{eq:lemma2Eq}
\gamma\Big(\mathbf{G}^\star(t), \hat{\mathbf{H}}(t),\hat{\mathbf{H}}(t-1)\Big)
=
\alpha^2 P \bs{\zeta}^H(t) \mx{E}^H \mathbf{J}_1^{-1}(t)  \mx{E} \bs{\zeta}(t),
\end{equation}
where
$\mathbf{J}_1(t) \triangleq \mathbf{J}(t)-\alpha^2 P \mx{E} \bs{\zeta}(t) \bs{\zeta}^H(t) \mx{E}^H$.
\end{lem}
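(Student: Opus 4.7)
The plan is to identify the instantaneous \ac{SINR} as a ratio of desired-signal power to interference-plus-noise power, and then exploit the rank-one relation $\mx{J}(t) = \mx{J}_1(t) + \mx{b}(t)\mx{b}^H(t)$ via the Sherman--Morrison matrix inversion lemma to collapse the resulting ratio to the claimed quadratic form.

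First I would split the conditional expected energy $\mathds{E}\{|\hat{x}|^2\}$ displayed just before the lemma into (i) the tagged user's desired-signal term $\alpha^2 P\,|\mx{G}\mx{E}\bs{\zeta}(t)|^2$ and (ii) the remaining multi-user interference, channel estimation noise and \ac{AWGN} contributions. Comparing (ii) with the definition \eqref{eq:A3} of $\mx{J}(t)$ and with the definition $\mx{J}_1(t) \triangleq \mx{J}(t) - \alpha^2 P\,\mx{E}\bs{\zeta}(t)\bs{\zeta}^H(t)\mx{E}^H$ given in the lemma, I would observe that the sum of the interference-plus-noise terms, when evaluated for a generic linear combiner $\mx{G}$, is exactly $\mx{G}\mx{J}_1(t)\mx{G}^H$. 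Hence for any receiver $\mx{G}$ the instantaneous \ac{SINR} admits the compact form $\gamma = \alpha^2 P\,|\mx{G}\mx{E}\bs{\zeta}|^2 / (\mx{G}\mx{J}_1\mx{G}^H)$.

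Next I would substitute the optimal receiver $\mx{G}^\star = \mx{b}^H\mx{J}^{-1}$ from Proposition \ref{P2}, where $\mx{b} = \alpha\sqrt{P}\,\mx{E}\bs{\zeta}$. This turns the numerator into $(\mx{b}^H\mx{J}^{-1}\mx{b})^2$ (the quantity is real because $\mx{J}^{-1}$ is Hermitian) and the denominator into $\mx{b}^H\mx{J}^{-1}\mx{J}_1\mx{J}^{-1}\mx{b}$. Because $\mx{J} - \mx{J}_1 = \mx{b}\mx{b}^H$ is a Hermitian rank-one update, the Sherman--Morrison identity yields $\mx{J}^{-1}\mx{b} = \mx{J}_1^{-1}\mx{b}/(1+u)$ with $u \triangleq \mx{b}^H\mx{J}_1^{-1}\mx{b}$, which is precisely the scalar that the lemma claims equals $\gamma$. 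Using Hermiticity of $\mx{J}_1^{-1}$ this gives $\mx{b}^H\mx{J}^{-1}\mx{b} = u/(1+u)$ and $\mx{b}^H\mx{J}^{-1}\mx{J}_1\mx{J}^{-1}\mx{b} = u/(1+u)^2$, so the ratio collapses to $\gamma = u$, which is exactly the expression in \eqref{eq:lemma2Eq}.

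The argument is essentially mechanical once one recognizes the rank-one structure $\mx{J} - \mx{J}_1 = \mx{b}\mx{b}^H$; this is the only step that requires a genuine observation. The sole subtlety I anticipate is checking invertibility, indeed positive definiteness, of $\mx{J}_1$ (which is required for Sherman--Morrison), but this is immediate because $\mx{J}_1$ contains the strictly positive definite summand $\sigma_d^2\mx{I}_{N_r}$ plus positive semidefinite contributions from the remaining rank-one outer products and the estimation-noise covariance matrices $\mx{Z}_k$.
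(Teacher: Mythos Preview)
Your proposal is correct and follows the same substitution approach the paper indicates (``the lemma is obtained when $\mathbf{G}^\star(t)$ is substituted''), only you supply the details the paper omits, in particular the Sherman--Morrison step that collapses $(\mx{b}^H\mx{J}^{-1}\mx{b})^2/(\mx{b}^H\mx{J}^{-1}\mx{J}_1\mx{J}^{-1}\mx{b})$ to $\mx{b}^H\mx{J}_1^{-1}\mx{b}$. The invertibility check for $\mx{J}_1$ you flag is also appropriate and correctly resolved.
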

\vspace{-1mm}
\noindent The lemma is obtained when $\mathbf{G}^\star(t)$ (c.f. \eqref{eq:Gstar2}) is substituted into \eqref{eq:lemma2Eq}.
\ignore{>>>>>>>>>>>>>>>>>>>>>>>>>>>>>>>>>>>>>>>>>>>>>>>>>>>>>>>
\begin{proof}
The proof is given in Appendix \ref{Sec:AppIV}.
\end{proof}
The subsequent subsections are concerned with calculating the average \ac{SINR} when averaging $\gamma$ in \eqref{eq:lemma2Eq} over the channel realizations and transmitted symbols over all users.
<<<<<<<<<<<<<<<<<<<<<<<<<<<<<<<<<<<<<<<<<<<<<<<<<<<<<<<<<}

\subsection{Calculating the Average \ac{SINR}}
To calculate the average \ac{SINR}, we first make the following considerations.
According to \eqref{eq:B3},
$\mx{b}_k(t) = \alpha_k \sqrt{P_k} \mx{E}_k \bs{\zeta}_k(t)$.
that is
$\mx{b}_k \sim \mathcal{CN}(0,\bs{\Phi}_k)$,
where, $\bs{\Phi}_k$ can be calculated using the covariance matrix $\bs{\zeta}$ in \eqref{eq:T} as: 
\begin{align}
\label{eq:phidef}
\bs{\Phi}_k &= \alpha_k^2 P_k \mx{E}_k
\left[
\begin{array}{ccc}
\mx{C}_k+\bs{\Sigma}_k & \mx{A}_k \mx{C}_k \\
\mx{C}_k \mx{A}_k^H & \mx{C}_k +\bs{\Sigma}_k
\end{array}
\right]
\mx{E}_k^H \nonumber \\
&= \alpha_k^2 P_k \mx{E_k}
\left[
\begin{array}{c}
\mx{C}_k \\ \mx{C}_k\mx{A}_k^H
\end{array}
\right]
.
\end{align}
Notice that:
\vspace{-2mm}
\begin{align}
\mathbf{J}_1(t) &= \mathbf{J}(t)-\alpha^2 P \mx{E} \bs{\zeta}(t) \bs{\zeta}^H(t) \mx{E}^H 
= \underbrace{\sum_{k=2}^K \mx{b}_k \mx{b}_k^H}_{\triangleq \mx{B}\mx{B}^H} \nonumber
\end{align}
\vspace{-4mm}
\begin{align}
\label{eq:betadef}
&+ \underbrace{\sum_{k=1}^K \alpha_k^2 P_k \mx{Z}_k + \sigma_d^2 \mx{I}_{N_r}}_{\triangleq\boldsymbol{\beta}},
\end{align}
where
$\boldsymbol{\beta} \in \mathds C^{N_r \times N_r}$
is a constant matrix (with measurable elements) and the
$\mx{b}_k$
vectors are characterized by the $\mx{\hat{h}}_k(t)$, $\mx{\hat{h}}_k(t\!-\!1)$ estimated channels.
Substituting $\mx{b}_k$ in \eqref{eq:lemma2Eq} yields
\begin{align}
\label{eq:gamma}
\gamma\Big(\mathbf{G}^\star(t), \mx{\hat{H}}(t), \mx{\hat{H}}(t-1)\Big)
&=
\mx{b}^H \left(\mx{B}\mx{B}^H + \boldsymbol{\beta} \right)^{-1} \mx{b},
\end{align}
where we recall that we drop the index of the tagged user (User-1), that is
$\mx{b} \triangleq \mx{b}_1$.
\rev{For block fading channels, reference \cite{Hoydis:2013} suggests that the deterministic equivalent of the \ac{SINR}
is a good approximation of the average \ac{SINR} in the \ac{MU-MIMO} system when the number of antennas is greater than
a certain number. This result motivates us to determine the deterministic equivalent \ac{SINR} also for our system,
in which the channels evolve according to an \ac{AR} process. As we shall see, the deterministic equivalent is a good
approximation of the average \ac{SINR} also in our case. To this end,
we can now state the following proposition, which calculates the deterministic equivalent \ac{SINR} for \ac{AR} channels.}
\vspace{-1mm}
\begin{prop}
\label{prop:Hoydis}
Assume that 
\begin{align*}
N_r\to\infty \text{~~and~~} \limsup_{N_r\to\infty} K/N_r&<\infty,
\end{align*}
then, for the instantaneous \ac{SINR} of the tagged user, denoted as $\gamma$, the following holds:
\begin{align}
\gamma - \textup{tr}\Big( \mx{\Phi} \mx{T}\Big)
~~\xrightarrow[N_r\rightarrow\infty]{\text{a.s.}} 0,
\label{eq:HoydisTh1}
\end{align}
\noindent where $\mx{T}$ is defined as
\begin{align}
\label{eq:Tdef}
\mx{T} & \triangleq \left(\frac{1}{N_r} \sum_{k=2}^K
\frac{ \mx{\Phi}_k }{1+\delta_{k}}
+ \boldsymbol{\beta}\right)^{-1},
\end{align}
and $\delta_{k}$, for $k=2,\ldots,K$ are the solution of the equation system defined by:
\begin{align}
\label{eq:deltak}
\delta_{k} &=
\frac{1}{N_r}
\textup{tr}\left( \mx{\Phi}_k \left(\frac{1}{N_r} \sum_{\ell=2}^K \frac{\mx{\Phi}_\ell}{1+\delta_{\ell}}
+\boldsymbol{\beta} \right)^{-1}\right).
\end{align}
\end{prop}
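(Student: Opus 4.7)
The plan is to combine two classical random matrix theory (RMT) tools. First I would exploit the statistical independence between the tagged user's vector $\mx{b}\equiv\mx{b}_1$ and the interference matrix $\mx{B}=[\mx{b}_2,\ldots,\mx{b}_K]$, which follows because each $\mx{b}_k$ is a deterministic linear transform of user $k$'s channel estimate $\bs{\zeta}_k(t)$, and the channels of different users are independent. Since $\mx{b}\sim\mathcal{CN}(\mx{0},\bs{\Phi})$ and $(\mx{B}\mx{B}^H+\boldsymbol{\beta})^{-1}$ has uniformly bounded spectral norm (because $\boldsymbol{\beta}\succeq\sigma_d^2\mx{I}_{N_r}$, and hence $(\mx{B}\mx{B}^H+\boldsymbol{\beta})^{-1}\preceq\sigma_d^{-2}\mx{I}_{N_r}$), the standard quadratic-form trace lemma (e.g.\ Lemma~B.26 in Bai--Silverstein, or Lemma~4 in \cite{Hoydis:2013}) gives
\begin{align*}
\mx{b}^H(\mx{B}\mx{B}^H+\boldsymbol{\beta})^{-1}\mx{b}-\mathrm{tr}\!\left(\bs{\Phi}\,(\mx{B}\mx{B}^H+\boldsymbol{\beta})^{-1}\right)\xrightarrow[N_r\to\infty]{\text{a.s.}}0.
\end{align*}
The remaining task is then to derive a deterministic equivalent for the random resolvent $(\mx{B}\mx{B}^H+\boldsymbol{\beta})^{-1}$.

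For the second step, I would treat $\mx{B}\mx{B}^H=\sum_{k=2}^{K}\mx{b}_k\mx{b}_k^H$ as a sample covariance matrix with non-isotropic, user-dependent column covariances $\bs{\Phi}_k$ given in \eqref{eq:phidef}, and follow the classical Silverstein--Marchenko--Pastur program. Using the Sherman--Morrison identity to peel off one rank-one term $\mx{b}_k\mx{b}_k^H$ at a time, applying the trace lemma to the resulting quadratic form $\mx{b}_k^H\mx{Q}_{-k}\mx{b}_k$ (where $\mx{Q}_{-k}$ denotes the resolvent with the $k$-th rank-one term removed), and invoking the rank-one perturbation lemma to substitute $\mx{Q}_{-k}$ by the full resolvent leads to a coupled fixed-point system whose unique positive solution is precisely $\{\delta_k\}_{k=2}^K$ from \eqref{eq:deltak}. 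This yields the stronger statement that for any deterministic (or independent and uniformly bounded) matrix $\mx{\Psi}$,
\begin{align*}
\mathrm{tr}\!\left(\mx{\Psi}\,(\mx{B}\mx{B}^H+\boldsymbol{\beta})^{-1}\right)-\mathrm{tr}\!\left(\mx{\Psi}\,\mx{T}\right)\xrightarrow[N_r\to\infty]{\text{a.s.}}0,
\end{align*}
and specializing to $\mx{\Psi}=\bs{\Phi}$ and combining with the first step delivers \eqref{eq:HoydisTh1}.

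The hard part will be the rigorous execution of the second step. Two subtleties must be addressed: (i) \emph{existence and uniqueness} of a positive solution to the fixed-point system \eqref{eq:deltak}, which I would establish by showing that the right-hand side of \eqref{eq:deltak} defines a standard interference function in Yates' sense (monotone and scalable on the positive cone), so that the natural iteration converges to the unique positive fixed point; and (ii) \emph{verification of the regularity hypotheses} needed by the deterministic-equivalent theorem, namely uniform boundedness of $\|\bs{\Phi}_k\|$ and $\|\boldsymbol{\beta}^{-1}\|$ as $N_r\to\infty$, together with $\limsup_{N_r\to\infty}K/N_r<\infty$. The last assumption is stated in the proposition, while the two norm bounds follow from the structural form of $\bs{\Phi}_k$ built from the bounded physical quantities $\mx{C}_k$, $\mx{A}_k$, $P_{p}$, $P_k$, and $\sigma_p^2$ via $\mx{E}_k$ and $\mx{Z}_k$, and from $\boldsymbol{\beta}\succeq\sigma_d^2\mx{I}_{N_r}$, respectively. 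Under these conditions, Theorem~1 of \cite{Hoydis:2013} (or Theorem~1 of \cite{Couillet:2011}) applies essentially verbatim, yielding \eqref{eq:HoydisTh1}.
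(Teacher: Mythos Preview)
Your proposal is correct and follows essentially the same two-step route as the paper: first the quadratic-form trace lemma to replace $\mx{b}^H(\mx{B}\mx{B}^H+\boldsymbol{\beta})^{-1}\mx{b}$ by $\mathrm{tr}\bigl(\bs{\Phi}(\mx{B}\mx{B}^H+\boldsymbol{\beta})^{-1}\bigr)$, then the deterministic-equivalent Theorem~1 of \cite{Hoydis:2013} to replace the random resolvent by $\mx{T}$. The paper simply cites these two results (invoking \cite[Lemma~1]{Truong:13} for the first step and \cite[Theorem~1]{Hoydis:2013} for the second) without unpacking the Sherman--Morrison/peel-off mechanism or the existence and uniqueness argument that you sketch, so your write-up is in fact more detailed than the paper's own proof.
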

\begin{proof}
The proof is in Appendix \ref{App:Hoydis}.
\end{proof}
Note that
According to \cite{Hoydis:2013},
$\delta_{k}$ ($k=2,\ldots,K$) can be obtained by fixed point iteration starting from $\delta_{k}=1/\sigma_d^2$ ($k=2,\ldots,K$).
Based on the above proposition, for finite $N$, we can write that:
\begin{align}
\label{eq:gammaT}
\bar \gamma  \approx \text{tr}\Big( \mx{\Phi} \mx{T}\Big).
\end{align}

It is worth noting that determining the average SINR for a single user requires to
solve the above system of equations,
because calculating
$\delta_k$ for $k=1$ is inter\-twined with calculating the $\delta_k$:s for $k=2 \dots K$
in \eqref{eq:deltak}.
This observation motivates us to seek 
an alternative solution,
according to which calculating the \ac{SINR} for the tagged user does not require to solve
a system of equations.
We note that a more restricted special case assuming identical user settings
for the block fading model was studied in \cite{Hoydis:2013}.
\rev{Regarding the complexity of \qq{determining the \ac{SINR}} and the number of iterations needed, we make the
following observation.}

\begin{observ}
\rev{
    The complexity of one iteration of the fixed point iteration algorithm used to solve the system of $K-1$ equations \eqref{eq:deltak} is
    $\mathcal{O}(KN_r^{2.37})$ and the number of iterations needed in order to get an estimate of the \ac{SINR} with error less than
    or equal to some $\epsilon$ is $\mathcal{O}(log(1/\epsilon))$. In conclusion, the time complexity of the fixed point iteration algorithm used to find
    the \ac{SINR} of one user is $\mathcal{O}(KN_r^{2.37}\log(1/\epsilon))$.
}
\end{observ}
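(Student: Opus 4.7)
The plan is to split the observation into two independent complexity claims and then combine them. First I would analyze the cost of a single iteration of \eqref{eq:deltak}, and then I would argue why $\mathcal{O}(\log(1/\epsilon))$ iterations suffice.

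For the per-iteration cost, the key observation is that the matrix being inverted in \eqref{eq:deltak}, namely $\tfrac{1}{N_r}\sum_{\ell=2}^K \mx{\Phi}_\ell/(1+\delta_\ell) + \boldsymbol{\beta}$, does not depend on the outer index $k$. Hence per iteration one would (i) form this $N_r\times N_r$ sum in $\mathcal{O}(KN_r^2)$ time, (ii) invert it once in $\mathcal{O}(N_r^{2.37})$ time using the Coppersmith--Winograd style algorithms (the exponent $\omega \approx 2.37$ quoted in the observation), and (iii) for each of the $K-1$ values of $k$ compute $\tfrac{1}{N_r}\text{tr}(\mx{\Phi}_k \mx{T})$, which costs $\mathcal{O}(N_r^2)$ per $k$ since only the diagonal entries of the product are needed. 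Summing these gives $\mathcal{O}(N_r^{2.37} + KN_r^2) = \mathcal{O}(KN_r^{2.37})$ per iteration (absorbing the smaller term).

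For the iteration-count bound, I would invoke the standard fact that the map $F:\mathds{R}_+^{K-1}\to\mathds{R}_+^{K-1}$ defined by the right-hand side of \eqref{eq:deltak} is a standard interference function in the sense of Yates, or equivalently a contraction with respect to an appropriate metric (a projective / Thompson-type metric is natural here, since the components $\delta_k$ live in the positive cone). This is essentially the same argument used by Hoydis et al.\ \cite{Hoydis:2013} to establish existence and uniqueness of the solution referenced in Proposition \ref{prop:Hoydis}; once contraction with some rate $r<1$ is in hand, the Banach fixed point theorem yields $\|\bs{\delta}^{(n)}-\bs{\delta}^\star\| \le r^n \|\bs{\delta}^{(0)}-\bs{\delta}^\star\|$, so $n = \mathcal{O}(\log(1/\epsilon))$ iterations suffice to drive the error below $\epsilon$. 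Propagating this through the continuous, bounded dependence of $\text{tr}(\mx{\Phi}\mx{T})$ on $\bs{\delta}$ transfers the same $\epsilon$-accuracy to the \ac{SINR} estimate.

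Multiplying per-iteration cost by iteration count then yields the claimed total $\mathcal{O}(KN_r^{2.37}\log(1/\epsilon))$. The main obstacle I expect is the contraction step: while monotonicity and scalability of $F$ (in Yates' sense) are almost immediate from the structure of \eqref{eq:deltak}, obtaining a quantitative contraction rate $r$ that is uniform in the system parameters requires either a Lipschitz analysis of the Jacobian $\partial F/\partial \bs{\delta}$ (whose entries are of the form $-\tfrac{1}{N_r(1+\delta_\ell)^2}\text{tr}(\mx{\Phi}_k \mx{T}\mx{\Phi}_\ell \mx{T})$ and must be shown to have spectral radius strictly less than $1$) or the use of the Thompson-metric contraction property of standard interference functions. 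I would not attempt to pin down the constant in $r$ beyond showing $r<1$, since the observation only claims the $\log(1/\epsilon)$ scaling.
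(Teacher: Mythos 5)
Your per-iteration analysis is essentially the paper's own argument: both hinge on the observation that the matrix inverted in \eqref{eq:deltak} is independent of $k$, so it is assembled in $\mathcal{O}(KN_r^2)$ and inverted once in $\mathcal{O}(N_r^{2.37})$ per sweep. Your variant is in fact slightly tighter, since computing only the diagonal of $\mx{\Phi}_k\mx{T}$ costs $\mathcal{O}(N_r^2)$ per user, giving $\mathcal{O}(KN_r^2+N_r^{2.37})$ per iteration, whereas the paper performs a full Coppersmith--Winograd product for each $k$ and lands exactly on the stated $\mathcal{O}(KN_r^{2.37})$; either way the claimed bound holds. Where you genuinely diverge is the iteration count. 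The paper does not prove geometric convergence at all: it cites \cite{Wagner:2012} (uniqueness and convergence of the fixed point from the initial point $\delta_k=1/\sigma_d^2$, and equation (111) there for the exponential rate), and then transfers the accuracy to the \ac{SINR}. You instead sketch a self-contained contraction argument via Yates-type standard interference functions or a Thompson-metric contraction, and you candidly flag that the quantitative step is missing. That flag is warranted: monotonicity and scalability alone give convergence but no rate, so as written your route does not yet deliver the $\mathcal{O}(\log(1/\epsilon))$ claim --- you would either have to carry out the Jacobian/spectral-radius bound you describe (showing the spectral radius of the matrix with entries $\tfrac{1}{N_r(1+\delta_\ell)^2}\mathrm{tr}(\mx{\Phi}_k\mx{T}\mx{\Phi}_\ell\mx{T})$ is strictly below one at the fixed point, with $r$ allowed to depend on system parameters) or simply cite the exponential-convergence result in \cite{Wagner:2012} as the paper does, which closes your acknowledged gap immediately. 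The trade-off: your route, if completed, is more self-contained and would also justify convergence from arbitrary positive initializations, while the paper's citation is shorter and already pinned to the specific iteration and starting point used in Proposition \ref{prop:Hoydis}.
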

\begin{proof}
\color{black}
It is shown in \cite{Wagner:2012}, that the system of equations in Proposition 2 has
a unique positive solution and the fixed point iteration converges to this solution when it is started from the initial point $\delta_k=1/\sigma_d^2 (k=2,\ldots,K)$.
Regarding the complexity of the iteration, notice that \qq{on the} right hand side of \eqref{eq:deltak}
the term that is inverted is the same for every value of $k$, and \qq{needs to be computed once} during every iteration step.
\qq{To compute this term,} we need to add $\mathcal{O}(K)$ number of $N_r \times N_r$ matrices,
and hence the complexity is $\mathcal{O}(KN_r^2)$.
Next, to invert this term, we use the well-known Coppersmith-Winograd algorithm of complexity $\mathcal{O}(N_r^{2.37})$.
We can now calculate the matrix product inside the trace operation for every $K$;
once again using the Coppersmith-Winograd algorithm, this step has complexity $\mathcal{O}(KN_r^{2.37})$.
\qq{Finally,} computing the trace for \qq{each} $k$ has complexity $\mathcal{O}(KN_r)$.
In conclusion, the complexity of one iteration step is $\mathcal{O}(KN_r^2 + N_r^{2.37} + KN_r^{2.37} + KN_r) = \mathcal{O}(KN_r^{2.37})$.
Regarding the number of iterations needed, by equation (111) in \cite{Wagner:2012}, the $\delta_k$ \qq{converges} exponentially to the fixed point.
\qq{Consequently, the number of iterations needed to reach precision $\epsilon$ is $\mathcal{O}(\log(1/\epsilon))$.}
In conclusion, calculating the \ac{SINR} of a single user in a system with $K$ users and $N$ antennas,
to a precision of $\epsilon$, is $\mathcal{O}(KN_r^{2.37}\log(1/\epsilon))$.
\color{black}
\end{proof}
\rev{By the numerical experiments reported in Section V, we found that the procedure converges
in less than 10 iterations in all investigated scenarios.}

\subsection{Calculating the Average \ac{SINR} in the Case of Independent and Identically Distributed Channel Coefficients}
\label{Sunsec:Uncorr}
If the $N_{r}$ antennas are sufficiently spaced apart,
the correlation matrix $\mathbf{C}_k$ of the channel of User-$k$ can be assumed
to be of the form of $\mathbf{C}_k=c_k \mathbf{I}_{N_{r}}$.
Additionally using $\mathbf{\Sigma}_k=s_k \mathbf{I}_{N_{r}}= \frac{\sigma_p^2}{\alpha_k^2P_{p,k} \tau_{p,k}} \mx{I}_{N_r}$, based on the definition of $\mx{E}_k$ in \eqref{eq:E} we have:
\begin{align}
\label{eq:Eidentity}
\mx{E}_k
&=
\left[
\begin{array}{ccc}
\hat{e}_k\mx{I}_{N_r} & \check{e}_k\mx{I}_{N_r}
\end{array}
\right]~\in~\mathds{C}^{N_r \times 2N_r},
\end{align}
where:
\begin{align}
\label{eq:es}
\hat{e}_k &= \frac{c_k(c_k+s_k-a_kc_ka_k^*)}{c_k(c_k+s_k-a_kc_ka_k^*)+s_k(c_k+s_k)}, 
\text{~~and~~~~} \nonumber \\
\check{e}_k &= \frac{a_kc_ks_k}{(c_k+s_k)^2-a_kc_k^2a_k^*}.
\end{align}
Furthermore, due to the definition of $\mx{Z}_k$ in \eqref{eq:T}, we have that
$\mx{Z}_k = z_k \mx{I}$, where
\begin{align}
z_k &= \frac{c_ks_k(c_k+s_k-a_kc_ka_k^*)}{(c_k+s_k)^2-a_kc_k^2a_k^*}.
\end{align}
Additionally,
\begin{align}
\label{eq:Phi}
\bs{\Phi}_k &= \phi_k \mx{I}_{N_r},
\text{~~with~~}
\phi_k =
\alpha_k^2 P_k (\hat{e}_k c_k +  \check{e}_k c_k a_k^*).
\end{align}
From \eqref{eq:Eidentity} and the definition of $\mx{b}_k(t)$ in \eqref{eq:B3}, we get:
\begin{align}
\label{eq:BI}
\mx{b}_k(t)
&=
\alpha_k \sqrt{P}_k \left(\hat{e}_k \mx{\hat h}_k(t) + \check{e}_k \mx{\hat h}_k(t-1) \right)
 \quad \in \mathds{C}^{N_r \times 1}.
\end{align}
Using these definitions, the constant matrix $\bs{\beta}$ in the
\ac{SINR} expression of the tagged user (in \eqref{eq:gamma}) becomes:
$\bs{\beta} = \beta \mx{I}_{N_r}, \text{~where}: 
~~\beta \triangleq \sum_{k=1}^K \alpha_k^2 P_k z_k + \sigma^2_d$.
The average \ac{SINR} for the tagged user $(k=1)$ is then calculated as:
\begin{equation}
\label{eq:averageSINR}
\bar{\gamma}
= \mathds{E}_{\mathbf{b}_k, k=1\ldots K} \left\{\mx{b}^H \left(\sum_{k=2}^K \mathbf{b}_k \mathbf{b}_k^H
+ \beta \mathbf{I}_{N_{r}}\right)^{-1} \mx{b}\right\},
\end{equation}

To calculate the average \ac{SINR}, \rev{notice
that random matrices of the form $\mx{v}\mx{v}^H$ (\qq{a.k.a.} random dyads) with
$\mx{v} \sim \mathcal{CN}\left(\mx{0},\lambda \mx{I}_n\right)$ (where $n$ is large)
play a central role in \eqref{eq:averageSINR}.
It has been shown in several important works in the field of random matrices,
that the asymptotic distribution of the eigenvalues
can be advantageously used to deal with such matrices \cite{Couillet:11, Couillet:12, Muller:13}.
In particular, the Stieltjes transform
is often used to characterize the asymptotic distribution of the eigenvalues
of large dimensional random matrices \cite{Couillet:11, Wagner:2012, Muller:13}.
As it is discussed in details in \cite{Couillet:11, Couillet:12, Wen:13, Zhang:13},
from a wireless communications standpoint, the Stieltjes transform
can be used to characterize the
\ac{SINR} of multiple antenna communication models, including the \ac{MU-MIMO} interference broadcast
\qq{and multiple access channels}.
The Stieltjes transform of random variable $X$ with \ac{CDF} $P_X(x)$ is defined as
\begin{align}
\label{eq:defg}
G_X(s)\triangleq \mathds{E}\left\{\frac{1}{X-s}\right\} = \int_x  \frac{1}{x-s} d P_X(x).
\end{align}
}
\rev{The $\mathcal{R}$-transform is closely related to the Stieltjes transform by the following relation
\begin{align}
\label{eq:defr}
\mathcal{R}_X(s)\triangleq G_X^{-1}(-s) - \frac{1}{s},
\end{align}
where $G^{-1}(-s)$ denotes the inverse function of the Stieltjes transform \cite{Muller:13}.
The $\mathcal{R}$-transforms
are commonly used to provide approximations of capacity expressions in large dimensional systems, see e.g. \cite{Tulino:05, Muller:13}.
In the present work, the relationship between the Stieltjes and $\mathcal{R}$-transforms
will be used to provide a deterministic approximation of the average \ac{SINR} in \ref{eq:averageSINR}.}
\rev{The main reason for using the $\mathcal{R}$-transform is its additive property,
\qq{according to which} $\mathcal{R}_{X+Y}(s) = \mathcal{R}_X(s) + \mathcal{R}_Y(s)$.}
\rev{To calculate the deterministic approximation, we first prove an important \qq{theorem}, which, together with its corollary concerning
the $\mathcal{R}$-transform of random dyads of the type $\mx{v}\mx{v}^H$ will be important
\qq{in} calculating the average \ac{SINR} in the sequel.}

\begin{thm}
\label{thm:2}
Let $\lambda_i$ be a bounded sequence $\lambda_i < \lambda_{\max}$ such that
\begin{align}
\lim_{n\rightarrow \infty} \frac{\lambda_1 + \lambda_2 + \ldots + \lambda_n}{n} &= \bar{\lambda}.
\end{align}
Furthermore, let $\mx{v}^{(n)}$ be a sequence of complex normal distributed random vectors with $\mx{0}$ means and covariances
$\mx{R}_n = diag(\lambda_1, \lambda_2, \ldots \lambda_n)$.
Denote by $\omega_n$ a randomly selected eigenvalue of the dyad $\mx{v}^{(n)}\left(\mx{v}^{(n)}\right)^H$.
Then \rev{the limit of the $\mathcal{R}$-transform of the distribution of $\omega_n$ is given as follows:}
\vspace{-2mm}
\begin{align}
\label{eq:Th1}
\rev{\lim_{n\rightarrow \infty} \mathcal{R}_{\omega_n}\left( \frac{s}{n} \right)} &= \frac{\bar{\lambda}}{1 - s\bar{\lambda}}.
\end{align}
\end{thm}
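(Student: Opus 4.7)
The plan is to exploit the rank-one structure of $\mathbf{v}^{(n)}(\mathbf{v}^{(n)})^H$, write down its Stieltjes transform in closed form, and then invert it asymptotically near $-s/n$ to read off the $\mathcal{R}$-transform via $\mathcal{R}_{\omega_n}(s/n)=G_{\omega_n}^{-1}(-s/n)-n/s$.

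First, since $\mathbf{v}^{(n)}(\mathbf{v}^{(n)})^H$ has rank one, $n-1$ of its eigenvalues vanish and the remaining one equals $W_n \triangleq \|\mathbf{v}^{(n)}\|^2 = \sum_{i=1}^n |v_i^{(n)}|^2$. A uniformly randomly selected eigenvalue $\omega_n$ therefore equals $0$ with probability $(n-1)/n$ and $W_n$ with probability $1/n$, which yields the exact identity
\begin{align*}
G_{\omega_n}(x) \;=\; -\frac{n-1}{n\,x} \;+\; \frac{1}{n}\,\mathds{E}\!\left[\frac{1}{W_n-x}\right].
\end{align*}
The boundedness assumption $\lambda_i<\lambda_{\max}$ together with $\mathds{E}[|v_i|^2]=\lambda_i$ and $\mathrm{Var}(|v_i|^2)=\lambda_i^2$ yields $\mathrm{Var}(W_n/n)\leq \lambda_{\max}^2/n\to 0$, so by the Kolmogorov strong law $W_n/n \to \bar{\lambda}$ almost surely and in $L^2$. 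This concentration is what allows the later replacement of $\mathds{E}[1/(W_n-x)]$ by $1/(n\bar{\lambda}-x)$ up to vanishing errors.

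Next, set up the inversion by looking for $x_n$ such that $G_{\omega_n}(x_n)=-s/n$; after multiplying by $n$, this becomes
\begin{align*}
-\frac{n-1}{x_n} \;+\; \mathds{E}\!\left[\frac{1}{W_n-x_n}\right] \;=\; -s.
\end{align*}
Using the ansatz $x_n = n/s + y_n$ with $y_n$ bounded, a Taylor expansion of the first term gives $-s+s/n+s^2 y_n/n + O(1/n^2)$, while the concentration of $W_n$ yields $\mathds{E}[1/(W_n-x_n)] = s/\bigl(n(s\bar{\lambda}-1)\bigr)+o(1/n)$. Matching the $O(1/n)$ terms on both sides leads to $1+sy_n+1/(s\bar{\lambda}-1)=0$, which solves to $y_n \to \bar{\lambda}/(1-s\bar{\lambda})$. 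Since $\mathcal{R}_{\omega_n}(s/n)=x_n-n/s=y_n$, the claimed limit in \eqref{eq:Th1} follows.

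The main technical obstacle is the rigorous control of the expectation $\mathds{E}[1/(W_n-x_n)]$, because a priori this integrand may blow up if $W_n$ is close to the real point $x_n$. The standard remedy is to carry out the whole argument for complex $s$ with $\mathrm{Im}(s)>0$, so that $x_n=n/s+y_n$ stays uniformly away from the real axis and $|W_n-x_n|$ is bounded below by $|\mathrm{Im}(x_n)|$; then dominated convergence together with the concentration of $W_n$ justifies the replacement of $\mathds{E}[1/(W_n-x_n)]$ by $1/(n\bar{\lambda}-x_n)$ with a $o(1/n)$ error, after which the final identity is extended to admissible real $s$ (i.e.\ $s\neq 1/\bar{\lambda}$) by analytic continuation of both sides of \eqref{eq:Th1}.
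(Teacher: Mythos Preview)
Your argument is correct and constitutes a genuinely different proof from the one in the paper. The paper proceeds via moments and cumulants: it first establishes (Lemma~\ref{lem:4}) that $\mathds{E}\{\omega_n^r\}/n^{r-1}\to\bar\lambda^r$, then uses a combinatorial argument about the structure of cumulants as sums of ``order-$k$ products'' of moments to deduce $\kappa_k/n^{k-1}\to\bar\lambda^k$, and finally substitutes into the power-series definition $\mathcal{R}_{\omega_n}(s)=\sum_{k\ge 0}\kappa_{k+1}s^k$ and sums the geometric series. Your route bypasses all of this: you write the Stieltjes transform of $\omega_n$ in closed form from the rank-one structure, solve $G_{\omega_n}(x_n)=-s/n$ asymptotically via the ansatz $x_n=n/s+y_n$, and read off $\mathcal{R}_{\omega_n}(s/n)=y_n$ directly.

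The trade-offs are as follows. The paper's approach yields, as a by-product, asymptotics for all free cumulants of $\omega_n$, which could be reused elsewhere; on the other hand it silently interchanges $\lim_n$ with an infinite sum and leaves the convergence of the cumulant series unaddressed. Your approach is more elementary---no cumulant combinatorics, only a single Taylor expansion and the $L^2$ concentration of $W_n/n$---and you explicitly flag and handle the one delicate point (control of $\mathds{E}[1/(W_n-x_n)]$ when $x_n$ is real) by passing to complex $s$ and invoking analytic continuation. At the level of rigor expected here, both arguments are acceptable; yours is arguably cleaner and more self-contained.
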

\begin{proof}
The proof is in Appendix \ref{Sec:AppVI}.
\end{proof}
From Theorem \ref{thm:2}, the following result is immediate:
\begin{cor}
\label{cor:rtrafo}
Let the vector $\mx{v} \sim \mathcal{CN}\left(\mx{0},\lambda \mx{I}_n\right)$.
\rev{The $\mathcal{R}$-transform of the distribution of a randomly selected eigenvalue
of $\mx{v}\mx{v}^H$, denoted by $\omega_n$}
\rev{ is asymptotically equal to:}
\begin{align}
\lim_{n\rightarrow \infty} R_{\omega_n}\left(\frac{s}{n}\right) = \frac{\lambda}{1 - s\lambda}.
\end{align}
\end{cor}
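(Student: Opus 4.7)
The plan is to observe that the corollary is the special case of Theorem \ref{thm:2} in which the covariance of the Gaussian vector is a scaled identity matrix, and then simply invoke the theorem.

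First I would set up the matching between the two statements. In Theorem \ref{thm:2}, the covariance of $\mx{v}^{(n)}$ is the diagonal matrix $\mx{R}_n = \mathrm{diag}(\lambda_1, \ldots, \lambda_n)$ with a bounded sequence $\lambda_i$ whose Ces\`aro mean converges to some $\bar\lambda$. In the corollary, the covariance of $\mx{v}$ is $\lambda \mx{I}_n$, which is precisely $\mathrm{diag}(\lambda_1, \ldots, \lambda_n)$ with the constant choice $\lambda_i = \lambda$ for every index $i$. I would therefore take the sequence of vectors $\mx{v}^{(n)} \sim \mathcal{CN}(\mx{0}, \lambda \mx{I}_n)$ as the sequence to which Theorem \ref{thm:2} is applied.

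Next I would verify the two hypotheses of Theorem \ref{thm:2}. Boundedness is immediate: $\lambda_i = \lambda \le \lambda_{\max}$ for any $\lambda_{\max} > \lambda$. The Ces\`aro limit is also immediate:
\begin{align*}
\lim_{n\to\infty} \frac{\lambda_1 + \lambda_2 + \cdots + \lambda_n}{n}
= \lim_{n\to\infty} \frac{n\lambda}{n} = \lambda,
\end{align*}
so the quantity denoted by $\bar\lambda$ in Theorem \ref{thm:2} equals $\lambda$ in the present setting.

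Finally I would substitute $\bar\lambda = \lambda$ into the limit \eqref{eq:Th1} of Theorem \ref{thm:2}. This immediately gives
\begin{align*}
\lim_{n\to\infty} \mathcal{R}_{\omega_n}\!\left(\frac{s}{n}\right) = \frac{\bar\lambda}{1 - s\bar\lambda} = \frac{\lambda}{1 - s\lambda},
\end{align*}
which is the claimed identity. There is essentially no obstacle here: the whole content is recognizing that the hypotheses of Theorem \ref{thm:2} are trivially satisfied by the constant sequence $\lambda_i \equiv \lambda$, so the corollary follows without any further computation.
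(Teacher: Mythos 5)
Your proposal is correct and matches the paper's treatment: the corollary is stated there as an immediate specialization of Theorem \ref{thm:2} to the constant sequence $\lambda_i \equiv \lambda$, which is exactly the verification you carry out (boundedness and Ces\`aro limit equal to $\lambda$, then substitution of $\bar\lambda = \lambda$ into \eqref{eq:Th1}). No gap to report.
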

For finite $n$, Corollary \ref{cor:rtrafo} gives the approximation $R_{\omega_n}\left(s\right) \approx \frac{\lambda}{1 - ns\lambda}$,
which we will use in our proof of Theorem \ref{thm:1}.
The following theorem, which is our main result,
states the average \ac{SINR} in the presence of a per user total power budget.
\begin{thm}
\label{thm:1}
The asymptotic average \ac{SINR} $\bar{\gamma}$, \rev{that is $\bar{\gamma}$ as $N_r \rightarrow \infty$,}
is the unique positive solution to the following equation:
\begin{align}
\underbrace{\sum_{k=1}^K \alpha_k^2 P_k z_k + \sigma^2_d}_{\beta}  &=
\frac{N_r \phi}{\bar{\gamma}}-
\sum_{k=2}^K \frac{\phi_k}{1+\frac{\bar{\gamma} \phi_k}{\phi}}.
\label{eq:SINR35}
\end{align}
\end{thm}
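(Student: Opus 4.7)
The plan is to start from the i.i.d.-case expression \eqref{eq:averageSINR}, written compactly as $\bar{\gamma}=\mathds{E}\{\mathbf{b}^H\mathbf{W}^{-1}\mathbf{b}\}$ with $\mathbf{W}\triangleq\sum_{k=2}^K\mathbf{b}_k\mathbf{b}_k^H + \beta\mathbf{I}_{N_r}$, and to reduce it, via the $\mathcal{R}$-transform machinery of Theorem \ref{thm:2} and Corollary \ref{cor:rtrafo}, to the scalar relation claimed. Because $\mathbf{b}\sim\mathcal{CN}(\mathbf{0},\phi\mathbf{I}_{N_r})$ is independent of $\mathbf{W}$, the Silverstein-type trace lemma gives
$\mathbf{b}^H\mathbf{W}^{-1}\mathbf{b}-\phi\,\textup{tr}(\mathbf{W}^{-1})\xrightarrow[N_r\to\infty]{\text{a.s.}}0$,
so that $\bar{\gamma}\simeq \phi\,\textup{tr}(\mathbf{W}^{-1}) = \phi N_r\, G_{\tilde{\mathbf{W}}}(-\beta)$, where $\tilde{\mathbf{W}}\triangleq\sum_{k=2}^K\mathbf{b}_k\mathbf{b}_k^H$ and $G_{\tilde{\mathbf{W}}}$ is the Stieltjes transform \eqref{eq:defg} of the empirical spectral distribution of $\tilde{\mathbf{W}}$.

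Next, since the $\mathbf{b}_k$ are independent across $k$, the rank-one dyads $\mathbf{b}_k\mathbf{b}_k^H$ are asymptotically free, and the additive property of the $\mathcal{R}$-transform yields $\mathcal{R}_{\tilde{\mathbf{W}}}(s)=\sum_{k=2}^K\mathcal{R}_{\mathbf{b}_k\mathbf{b}_k^H}(s)$. Applying Corollary \ref{cor:rtrafo} to each dyad with $\lambda=\phi_k$ and $n=N_r$ produces the finite-$N_r$ approximation $\mathcal{R}_{\mathbf{b}_k\mathbf{b}_k^H}(s)\approx \phi_k/(1-N_r s\phi_k)$. I would then invoke the defining identity $\mathcal{R}_X(s)+1/s=G_X^{-1}(-s)$ of \eqref{eq:defr} evaluated at the point $s=-G_{\tilde{\mathbf{W}}}(-\beta)=-\bar{\gamma}/(\phi N_r)$; the substitution $N_r s\phi_k=-\bar{\gamma}\phi_k/\phi$, followed by multiplication through by $-1$, yields
\begin{equation*}
\beta=\frac{N_r\phi}{\bar{\gamma}}-\sum_{k=2}^K\frac{\phi_k}{1+\bar{\gamma}\phi_k/\phi},
\end{equation*}
which is exactly \eqref{eq:SINR35}.

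It remains to check that this equation has a unique positive root. Denoting the right-hand side by $f(\bar{\gamma})$ and introducing $t_k\triangleq\bar{\gamma}\phi_k/\phi$, a direct differentiation gives $f'(\bar{\gamma})=(\phi/\bar{\gamma}^2)\left[\sum_{k=2}^K t_k^2/(1+t_k)^2-N_r\right]$; since each summand is bounded by $1$ and $K-1\leq N_r$ in the regime of interest, $f$ is strictly decreasing on $(0,\infty)$ with $f(0^+)=+\infty$ and $f(\infty)=0$, so $f(\bar{\gamma})=\beta>0$ has exactly one solution. The main obstacle will be the rigorous justification of the two asymptotic steps in the finite-$N_r$ regime, namely (i) the quadratic-form concentration underlying the trace-lemma replacement and (ii) the additive use of the $\mathcal{R}$-transform, which is legitimate only under asymptotic freeness of independent complex Gaussian rank-one dyads; I plan to handle both by appealing to the standard Bai--Silverstein concentration bounds together with the finite-$n$ statement already packaged in Corollary \ref{cor:rtrafo}.
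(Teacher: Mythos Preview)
Your proposal is correct and follows essentially the same route as the paper's first proof (Appendix~\ref{Sec:AppV}): reduce the quadratic form to $\phi$ times the Stieltjes transform of the interference matrix at $-\beta$, then invoke asymptotic freeness of the independent Gaussian dyads, the additive property of the $\mathcal{R}$-transform, and Corollary~\ref{cor:rtrafo} to invert the relation and arrive at \eqref{eq:SINR35}. The only cosmetic difference is that the paper reaches the Stieltjes representation via an explicit eigenvalue decomposition of $\mathbf{B}\mathbf{B}^H$ and a rotation of $\mathbf{b}$, whereas you go through the trace lemma; these are equivalent since $\textup{tr}(\mathbf{W}^{-1})=\sum_i(\lambda_i+\beta)^{-1}$, and your uniqueness argument via the derivative is in fact more explicit than the paper's one-line injectivity remark.
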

\begin{proof}
The proof is in Appendices \ref{Sec:AppV} 
and \ref{Sec:AppVII}.
Specifically, we provide two alternative proofs to Theorem \ref{thm:1}, both of which rely on random matrix
considerations, and have their own merits.
The first proof invokes the Stieltjes and $\mathcal{R}$-transforms of
probability distributions (Appendix \ref{Sec:AppV}),
while the second proof (Appendix \ref{Sec:AppVII})
uses the results in \cite{Wagner:2012} and relies on a matrix trace approximation as in
the lemmas invoked by both \cite{Truong:13} and \cite{Hoydis:2013}.
\end{proof}
\vspace{-2mm}

Notice that the $\phi_k$:s in Theorem \ref{thm:1}
can be easily calculated by means of \eqref{eq:Phi}, as long as
the covariances matrices of the channels ($\mx{C}_k$) and the transition matrices of the autoregressive
process that characterize the channels ($\mx{A}_k$) are accurately estimated. Therefore, the average
\ac{SINR} of the tagged user can be calculated by solving \eqref{eq:SINR35}, rather than solving a system
of equations as in Proposition \ref{prop:Hoydis}.
In the numerical section, we will
investigate the impact of AR parameter estimation errors on the average \ac{SINR} performance.
\subsection{Optimum Pilot Power}
\label{Sec:Opt}
In this subsection, we determine the
optimum pilot power in \ac{SU-MIMO} systems and in \ac{MU-MIMO} systems in the special case
when the large scale fading components of all users are equal.
\rev{ By deriving a closed form expression for the optimum pilot power,
we learn that it does not depend on the number of antennas $N_r$.}
The treatment of the optimum pilot power in the general case, in which the large scale fading
components are different is left for future work.

In the case in which each user has the same path loss $\alpha_k = \alpha~\forall k$,
channel covariance matrix $\mx{C}_k = \mx{C}=c\mx{I}~\forall k$, and \ac{AR} parameter $a_k=a~\forall k$,
equation \eqref{eq:SINR35} of Theorem \ref{thm:1} simplifies to
\begin{align}
\label{eq:gamma_special}
\frac{\beta}{\phi}
&=
\frac{N_r}{\bar{\gamma}}-\frac{K-1}{1+\bar{\gamma}}.
\end{align}
It follows from Theorem \ref{thm:1} that finding the optimum pilot power, which maximizes the average \ac{SINR}
in the \ac{SU-MIMO} case, that is when $K=1$, is equivalent with maximizing 
$\frac{\phi}{\beta}$.
In the \ac{MU-MIMO} case ($K>1$), we can first state the following interesting result.
\begin{lem}
\label{Lem4}
Assume $K>1$ and that each user employs the same pilot-to-data power ratio,
and, consequently, achieves the same \ac{SINR}.
The optimum pilot and data powers are given as the solution of the following maximization problem:

\begin{equation}
\label{eq:phiperbeta}
\begin{aligned}
& \underset{P,P_{p}}{\textup{maximize}}
& & \frac{\phi}{\beta} 
~~~~~~\textup{subject to}
~~P \tau_d + P_{p} \tau_p = P_{\textup{tot}}.
\end{aligned}
\end{equation}

\end{lem}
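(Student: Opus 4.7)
The plan is to reduce the optimization of the average \ac{SINR} to the optimization of the ratio $\phi/\beta$ by exploiting the fact that, under the stated symmetry assumptions, the implicit \ac{SINR} equation \eqref{eq:gamma_special} defines $\bar{\gamma}$ as a strictly monotone function of the single quantity $r \triangleq \phi/\beta$. Once monotonicity is established, maximizing $\bar{\gamma}$ subject to the per-user sum-power constraint $\tau_p P_p + \tau_d P = P_{\textup{tot}}$ becomes equivalent to maximizing $r$ subject to the same constraint, which is exactly \eqref{eq:phiperbeta}.

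First, I would observe that since every user has identical path loss, channel covariance and \ac{AR} parameter, and since the lemma further assumes that every user adopts the same pilot-to-data power ratio, every user has the same $\phi_k=\phi$ and $z_k=z$. Substituting this symmetry into Theorem \ref{thm:1} collapses \eqref{eq:SINR35} to \eqref{eq:gamma_special}, and the tagged user's \ac{SINR} coincides with that of every other user, justifying the ``same \ac{SINR}'' statement. Next, I would rewrite \eqref{eq:gamma_special} as
\begin{align*}
\frac{1}{r} \;=\; \frac{N_r}{\bar{\gamma}} - \frac{K-1}{1+\bar{\gamma}},
\end{align*}
and view this as implicitly defining $\bar{\gamma}$ as a function of $r$.

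The key step is to differentiate implicitly and obtain
\begin{align*}
\frac{d\bar{\gamma}}{dr} \;=\; \frac{1/r^2}{\,N_r/\bar{\gamma}^{\,2} - (K-1)/(1+\bar{\gamma})^2\,}.
\end{align*}
I would then argue that the denominator is strictly positive whenever the solution $\bar{\gamma}>0$ is meaningful: the inequality $N_r(1+\bar{\gamma})^2 > (K-1)\bar{\gamma}^{\,2}$ is equivalent to $\sqrt{N_r}(1+\bar{\gamma}) > \sqrt{K-1}\,\bar{\gamma}$, which clearly holds because $N_r > K$ in the regime of interest (stated earlier as $K\ll N_r$), so $\sqrt{N_r}\geq \sqrt{K-1}$ and the constant term $\sqrt{N_r}$ on the left makes the inequality strict for any $\bar{\gamma}\geq 0$. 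Consequently $d\bar{\gamma}/dr>0$, so $\bar{\gamma}$ is strictly increasing in $r=\phi/\beta$.

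The conclusion is then immediate: under the symmetry assumption the feasible set $\{(P,P_p):\tau_d P+\tau_p P_p=P_{\textup{tot}}\}$ is the same for the two optimization problems, and because $\bar{\gamma}$ is a strictly increasing function of $\phi/\beta$ on this feasible set, the maximizer of $\bar{\gamma}$ coincides with the maximizer of $\phi/\beta$. This yields precisely \eqref{eq:phiperbeta}. I expect the main obstacle to be the careful verification of the sign of the denominator of $d\bar{\gamma}/dr$ (i.e.\ that $N_r/\bar{\gamma}^{\,2}>(K-1)/(1+\bar{\gamma})^2$ throughout the feasible region) and the observation that, under the symmetry and equal \ac{PDPR} assumption, the multi-user fixed-point system inherent in Proposition~\ref{prop:Hoydis} collapses to the single scalar equation \eqref{eq:gamma_special}, so that the whole optimization really depends on $P$ and $P_p$ only through the ratio $\phi/\beta$.
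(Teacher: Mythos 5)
Your proposal is correct and follows essentially the same route as the paper: both arguments reduce the problem to the monotonicity of $\bar{\gamma}$ in $\phi/\beta$ via the sign of $N_r/\bar{\gamma}^{2}-(K-1)/(1+\bar{\gamma})^{2}$, established using $N_r>K-1$ (the paper phrases this as the right-hand side of \eqref{eq:gamma_special} being strictly decreasing in $\bar{\gamma}$, you phrase it as implicit differentiation with respect to $r=\phi/\beta$, which is the same computation). Your write-up is merely a slightly more explicit version of the paper's argument, including the observation that the symmetric setting collapses \eqref{eq:SINR35} to \eqref{eq:gamma_special} and that the feasible set is unchanged.
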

\begin{proof}
The right hand side of \eqref{eq:gamma_special} is strictly decreasing in $\bar{\gamma}$
since
\begin{align}
\frac{\partial}{\partial \bar{\gamma}}\left( \frac{N_r}{\bar{\gamma}}
-\frac{K-1}{1+\bar{\gamma}} \right)
&= -\frac{N_r}{\bar{\gamma}^2} + \frac{K-1}{(1+\bar{\gamma})^2} ~~~~ \nonumber \\
&< \frac{-N_r + K - 1}{\bar{\gamma}^2} < 0.
\end{align}
Hence, $\bar{\gamma}$ is strictly decreasing in the left hand side of \eqref{eq:gamma_special}
with respect to
$\frac{\beta}{\phi}$,
from which the lemma follows.
\end{proof}
\rev{To get some intuition behind this Lemma,
recall from equation \eqref{eq:phidef} that $\phi$ is the expected power of the estimated received data symbol.
Furthermore,
$\beta = \sum_{k=1}^K \alpha_k^2 P_k z_k + \sigma^2_d$, that is the sum of the data powers times the channel estimation errors
and the power of the data symbol noise.
Hence, the ratio $\phi / \beta$ \qq{reflects}
the ratio of the powers of the useful and the non-useful information arriving at the receiver.}

A consequence of this lemma is that the optimal pilot power is invariant under the number of antennas $N_r$,
since $N_r$ does not appear in the optimization problem \ref{eq:phiperbeta}. This observation will be confirmed
in the numerical section (see Figure \ref{Fig:Fig4}).

We now state the following proposition, which will provide some useful insights
in the impact of optimum pilot power setting in the numerical section.
\begin{prop}
\label{prop:OptP3}
In a \ac{MU-MIMO} system, in which each user has the same path loss,
and $a \in \mathds{R}$,
the optimal pilot power is a positive real root in the interval $\left(0,\frac{P_{\textup{tot}}}{\tau_p}\right)$
of the following quartic equation:
\begin{align}
\label{eq:OptP3}
c_0 + c_1 P_p + c_2 P_p^2 + c_3 P_p^3 + c_4 P_p^4 &= 0,
\end{align}
where
{\small
\begin{align*}
    c_4 &= (a^2 - 1)^2 c^3 \alpha^6 (K\sigma_p^2 - \sigma_d^2 \tau_d) \tau_p^4;~~ \\
    c_3 &= 2 (a^2 - 1) c^2 \alpha^4 \sigma^2_p ((a^2 - 1) c K P_{\textup{tot}} \alpha^2 - K \sigma_2^p + 2 \sigma_2^d \tau_d) \tau_p^3; \\
    c_2 &= c \alpha^2 \sigma_2^p ((a^2 - 1)^2 c^2 K P_{\textup{tot}}^2 \alpha^4 + \sigma^2_p ((1 + a^2) K \sigma_2^p + 
           (a^2 - 5) \sigma^2_d \tau_d) \nonumber \\
        &~~~+ (a^2 - 1) c P_{\textup{tot}} \alpha^2 (4 K \sigma^2_p + (a^2 - 1) \sigma^2_d \tau_d)\tau_p^2; \\
    c_1 &= -2 \sigma_p^4 ((a^2 - 1) c P_{\textup{tot}} \alpha^2 + \sigma^2_p + a^2 \sigma^2_p)\cdot 
    (c K P_{\textup{tot}} \alpha^2 + \sigma^2_d \tau_d) \tau_p; \\
    c_0 &= (a^2 + 1) P_{\textup{tot}} \sigma_p^6 (c K P_{\textup{tot}} \alpha^2 + \sigma^2_d \tau_d).
\end{align*}
}
\end{prop}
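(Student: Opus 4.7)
The plan is to reduce the optimization in Lemma~\ref{Lem4} to a single-variable rational maximization and derive the first-order condition. Starting from the constraint $P = (P_{\text{tot}} - P_p \tau_p)/\tau_d$ and the identity $s = \sigma_p^2/(\alpha^2 P_p \tau_p)$, I would substitute the closed-form expressions for $\hat e, \check e, z$ from \eqref{eq:es} into \eqref{eq:Phi}. Under the assumption $a\in\mathbb{R}$, one has $a^\ast = a$, and the common denominator $(c+s)^2 - a^2 c^2$ gives
\[
\phi = \alpha^2 P\cdot \frac{c^3(1-a^2) + c^2 s (1+a^2)}{(c+s)^2 - a^2 c^2}, \qquad
K\alpha^2 P z = \frac{K\alpha^2 P\, c s\bigl(s + c(1-a^2)\bigr)}{(c+s)^2 - a^2 c^2}.
\]
After multiplying numerator and denominator of $\phi/\beta$ by $(\alpha^2 P_p \tau_p)^2$ to clear the inverse powers of $P_p$ hidden in $s$, the ratio takes the form $\phi/\beta = T(P_p)/B(P_p)$, where $T$ is a polynomial of degree $3$ in $P_p$ (one factor from $P$, one from $P_p\tau_p$ arising through $s$, and one from the linear factor inside the numerator of $\phi$) and $B$ is a polynomial of degree $2$.

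The first-order optimality condition $(T/B)' = 0$ is equivalent to $T'B - T B' = 0$. Since $\deg(T') = 2, \deg(B) = 2$, and $\deg(T)=3, \deg(B')=1$, both products $T'B$ and $T B'$ are of degree $4$, so the resulting polynomial equation in $P_p$ is of degree at most $4$. Carefully collecting terms by powers of $P_p$ and repeatedly applying the constraint to eliminate $P$ yields precisely the quartic \eqref{eq:OptP3} with coefficients $c_0,\dots,c_4$ as stated. The leading coefficient $c_4$, for instance, originates from the highest-order contributions $(-\tau_p/\tau_d)\cdot c^3(1-a^2)\alpha^6 \tau_p^2$ from $T$ paired with the quadratic terms in $B$ that absorb $K\sigma_p^2$ and $\sigma_d^2\tau_d$; each $c_i$ is obtained by an analogous but lengthier bookkeeping of cross terms.

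For existence of a positive real root in $(0, P_{\text{tot}}/\tau_p)$, I would argue by boundary behavior of $\phi/\beta$. As $P_p\to 0^+$, $s\to\infty$ drives $\hat e,\check e \to 0$, so $\phi\to 0$ while $\beta\to K\alpha^2 P c + \sigma_d^2 > 0$; as $P_p\to P_{\text{tot}}/\tau_p$, we have $P\to 0$, so $\phi\to 0$ while $\beta\to \sigma_d^2>0$. Since $\phi/\beta$ is continuous and strictly positive in the open interval and vanishes at both endpoints, it attains an interior maximum, which is necessarily a critical point and hence a root of the quartic lying in $(0, P_{\text{tot}}/\tau_p)$.

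The main obstacle is the algebraic bookkeeping: one must carefully track how $P$ and $s$ introduce $P_p$-dependence through the constraint and substitution, and how the cross products in $T'B - T B'$ cancel the naive degree-$5$ and degree-$4$ contributions that would otherwise appear before the common factor of $\alpha^2 P_p \tau_p$ is extracted. A symbolic-algebra verification is convenient for confirming the explicit coefficients $c_0,\dots,c_4$ and for checking the uniqueness of the maximizer in the relevant interval numerically (the latter being an empirical observation rather than a claim of the proposition itself).
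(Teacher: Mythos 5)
Your proposal is correct and follows essentially the same route as the paper's proof: reduce via Lemma~\ref{Lem4} to a single-variable rational function of $P_p$ by substituting $P=(P_{\textup{tot}}-P_p\tau_p)/\tau_d$ and the closed forms of $\hat e,\check e,z$, obtain the quartic \eqref{eq:OptP3} as the numerator of the first-order condition, and deduce an interior critical point from the boundary behavior on $\left(0,P_{\textup{tot}}/\tau_p\right)$. The only (cosmetic) difference is that you maximize $\phi/\beta$, which vanishes at both endpoints, whereas the paper minimizes the equivalent reciprocal expression, which diverges there; both give the same existence argument, and like the paper you leave the explicit coefficient expansion to routine (symbolic) bookkeeping.
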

\qq{The proof is in Appendix F.}
\rev{\subsection{Summary}}
\rev{This section developed a method to calculate the average \ac{SINR} in \ac{MU-MIMO} systems that use the receiver
proposed in Proposition \ref{P2}.
For the general case, when the antenna coefficients are correlated, Proposition \ref{prop:Hoydis}
gives the deterministic equivalent of the \ac{SINR} and, according to \eqref{eq:gammaT}, it gives a good approximation of the
average \ac{SINR} when the number of antennas is large.
For the special case, when the channel coefficients are independent and identically distributed, Theorem \ref{thm:1}
gives the average \ac{SINR} and, by further assuming the special case of all users
having the same large scale fading, the optimum pilot power is given by Proposition \ref{prop:OptP3}.
These results will be verified by simulations and illustrated by numerical examples in the next section.}

\section{Numerical Results}
\label{Sec:Num}
\begin{table*}[ht]
\caption{System Parameters}
\label{tab:params}
\footnotesize
\centering
\begin{tabular}{|l|l|}
\hline
\hline
\textbf{Parameter}                     & \textbf{Value} \\
\hline
\hline
\ac{AR} state transition matrix $\mx{A}=a\mx{I}_{N_r}$   & $a=0, 0.1, \dots 0.95$ \\ \hline
Number of receive antennas at the \ac{BS}    & $N_r=20, 100$  \\ \hline
Path loss of tagged \ac{MS}              & $\alpha=90$ dB \\ \hline
Number of data and pilot symbols       & $\tau_d=11;{~}\tau_p=1$ \\ \hline
Sum pilot and data power constraint     & $\tau_p P_p+\tau_d P=P_{\text{tot}}$ =250 mW. \\ \hline
MIMO receivers                         & $\text{Naive, MRC, conventional,AR-aware with covariances,AR-aware proposed MMSE}$ \\ \hline
Number of users                         & $K=1,3,10,20,50$ \\ \hline
\hline
\end{tabular}
\end{table*}

\begin{table*}[ht]
\vspace{2mm}
\caption{MU-MIMO Receivers}
\label{tab:G}
\footnotesize
\centering
\begin{tabular}{|l|p{0.6\textwidth}|}
\hline
\hline
\textbf{Receiver}                     & \textbf{Description} \\
\hline
\hline
Naive receiver: $\mx{G}^{\text{naive}}$ & Assumes perfect channel estimation and block fading \cite{Eraslan:2013}. \\ \hline
Conventional with covariances           & Uses the cov. matrix of interfering users, treats interference as noise and assumes block fading\cite{FMT:15}. \\ \hline
AR-aware with covariances               & Uses Kalman assisted channel est. for the tagged user, treats interference as noise,
                                            uses an \ac{AR} channel model \cite{Fodor:2021}. \\ \hline
Conventional with inst. ch. est. (Hoydis)      & Uses channel estimates for all users and assumes block fading \cite{Hoydis:2013, Li:15, Abrardo:19}. \\ \hline
Maximum ratio combining (Troung-Heath)  & MRC receiver with/out Kalman filtering and channel prediction, uses \ac{AR} channel models \cite{Truong:13}. \\ \hline
Proposed in the present paper:  $\mx{G}^{\star}$
                                        & Uses Kalman filter assisted channel est. for all users, uses an \ac{AR} channel model. \\ \hline
\hline
\end{tabular}
\end{table*}

To obtain numerical results, we study a single cell \ac{MU-MIMO} system, in which the \acp{MS} are equipped with a
single transmit antenna, while the \ac{BS} is equipped with $N_r$ receive antennas.

We study the case in which the channel coefficients are of the complex channel vector are independent and identically distributed
as described in Subsection \ref{Sunsec:Uncorr}.
The most important parameters of this system that must be properly set to generate numerical results using the \ac{SINR}
derivation in this paper (utilizing Proposition \ref{prop:Hoydis} and Theorem \ref{thm:1})
are listed in Table \ref{tab:params}.
To benchmark the performance of the proposed \ac{MU-MIMO} receiver, we use the conventional \ac{MMSE} receivers, see table \ref{tab:G}.
An \ac{AR}-aware receiver was proposed in our previous work \cite{Fodor:2021}, in which the receiver does not
utilize the instantaneous channel estimates of the interfering users, but treats interference as noise through
the channel covariance matrices.
In order to demonstrate
the gain due to using the channel
estimate of each user, we compare the SINR performance of the proposed \ac{MU-MIMO} receiver in this paper
with that developed in \cite{Fodor:2021}. We also use the \ac{MRC} receiver that was used in the context
of channel aging by \cite{Truong:13}. The \ac{MRC} receiver in \cite{Truong:13} was used (1)
with MMSE channel estimation based on the current observation only, (2) with Kalman filter forecast and
(3) channel prediction using a $p$-order Kalman filter. For benchmarking purposes, we will consider all three
variants of the scheme used by Troung and Heath in \cite{Truong:13}.
\begin{figure}[ht]
\centering
\includegraphics[width=1.\columnwidth]{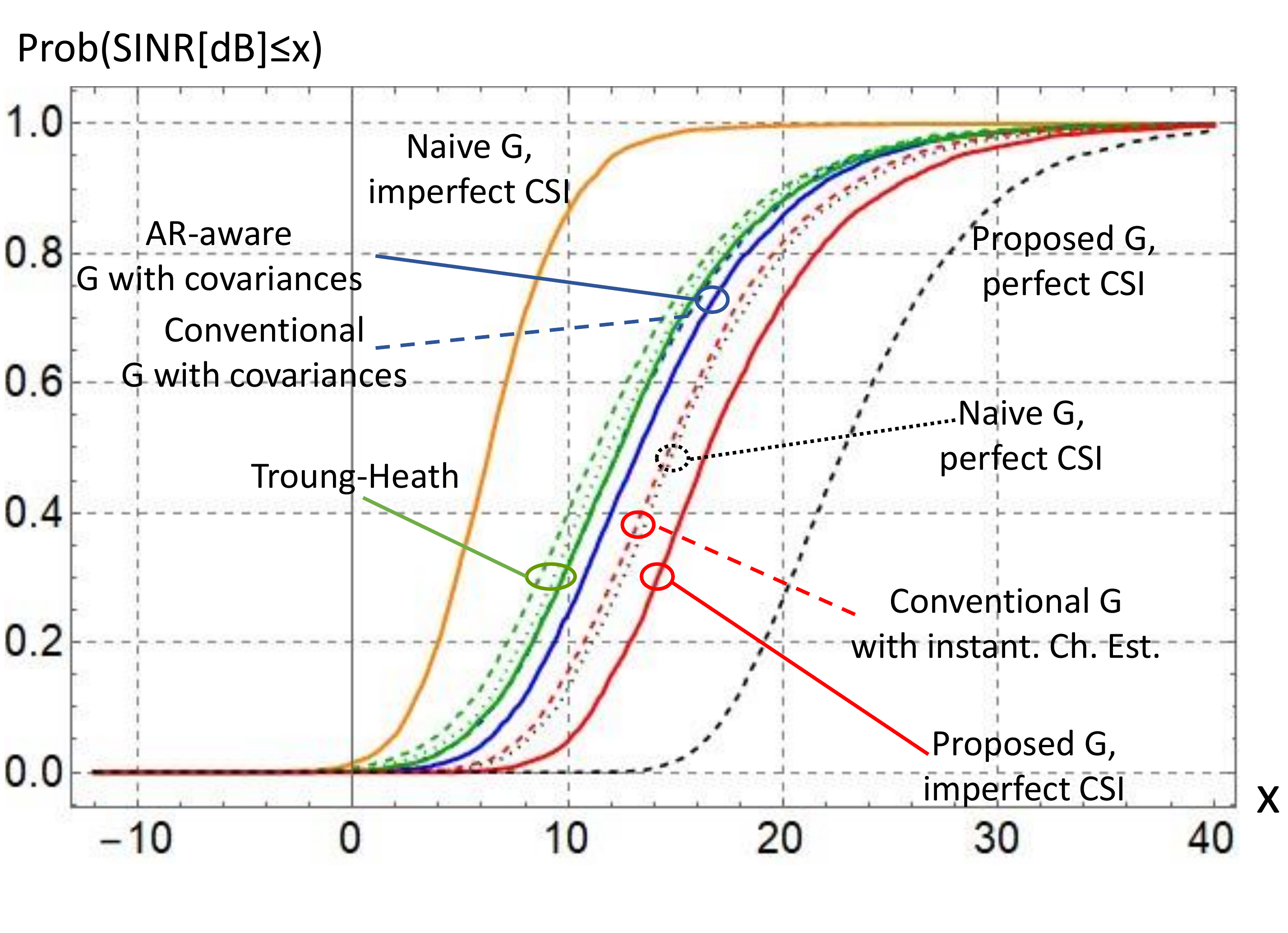}
\caption{\acp{CDF} of the instantaneous \ac{SINR} defined in \eqref{eq:lemma2Eq} when using the proposed
AR-aware MMSE receiver (red solid line) and previously proposed MU-MIMO receivers (see Table \ref{tab:G}).
Note the significant gain
as compared with the AR-aware MU-MIMO receiver that treats interference as noise proposed in \cite{Fodor:2021}
and with Troung and Heath (1), (2), (3) proposed in \cite{Truong:13}.
}
\label{Fig:Fig1}
\end{figure}
\hfill
\begin{figure}[ht]
\centering
\includegraphics[width=1.\columnwidth]{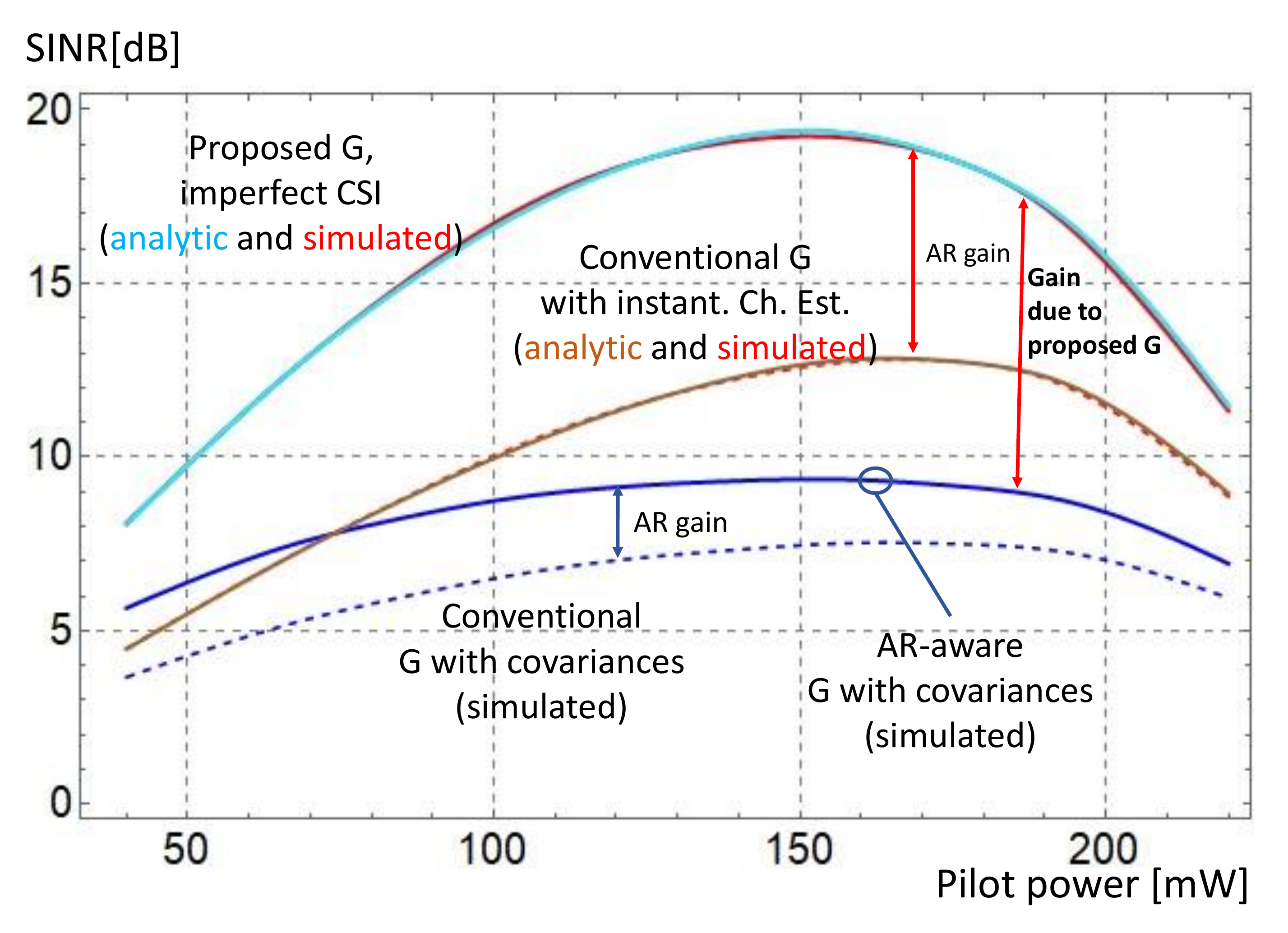}
\caption{Average SINR as a function of the employed pilot power when using the proposed and the state of the art
receivers. The SINR of the proposed receiver is both calculated using Theorem \ref{thm:1} and simulated. Similarly
to the previous figure, we can see the significant gain of the proposed receiver over the receivers developed in
\cite{Fodor:2021} and \cite{Abrardo:19}.}
\label{Fig:Fig2}
\end{figure}

Figure \ref{Fig:Fig1} shows the \ac{CDF} of the \ac{SINR} of the tagged user for the specific case when
the number of users is $K=5$, number of receive antennas at the \ac{BS} is $N_r=100$ and the pilot power
is kept fixed at $P_p=100$ mW. Notice that the proposed receiver, which uses Kalman filter-assisted channel estimation
for all users outperforms the conventional receiver, which does not use Kalman filter for channel estimation.
The potential of the proposed \ac{MMSE} receiver is indicated by the rightmost curve, which shows the \ac{SINR}
performance of this receiver if it has access to perfect channel estimates. Even in the presence of channel
estimation errors, it outperforms all other receivers due to two reasons. First, its structure is modified
as compared with previously proposed receivers and second, it takes advantage of the instantaneous channel
estimates based on multiple observations (i.e. $\mx{\hat h}(t)$ and $\mx{\hat h}(t-1)$).

Figure \ref{Fig:Fig2} shows the average \ac{SINR} performance of the proposed receiver, using Theorem \ref{thm:1},
verified by simulations. The performance of the proposed receiver is compared both with that of the conventional
receiver \cite{Hoydis:2013, Abrardo:19} (termed \ac{MMSE} receiver in those papers), and that of the AR-aware receiver
proposed in \cite{Fodor:2021}, which uses the covariance matrices of the interfering users to suppress \ac{MU-MIMO}
interference. In this Figure, we refer to the gain over the first type of receivers as the "AR gain", since this
gain is due to modified receiver structure, which makes it "AR aware". The gain over the receiver proposed in \cite{Fodor:2021}
is due to estimating all users' channels, rather than treating the \ac{MU-MIMO} interference as noise.
This figure also shows that the analytical \ac{SINR} calculation based on Theorem \ref{thm:1} gives a tight approximation.
\begin{figure}[ht]
\centering
\includegraphics[width=1.\columnwidth]{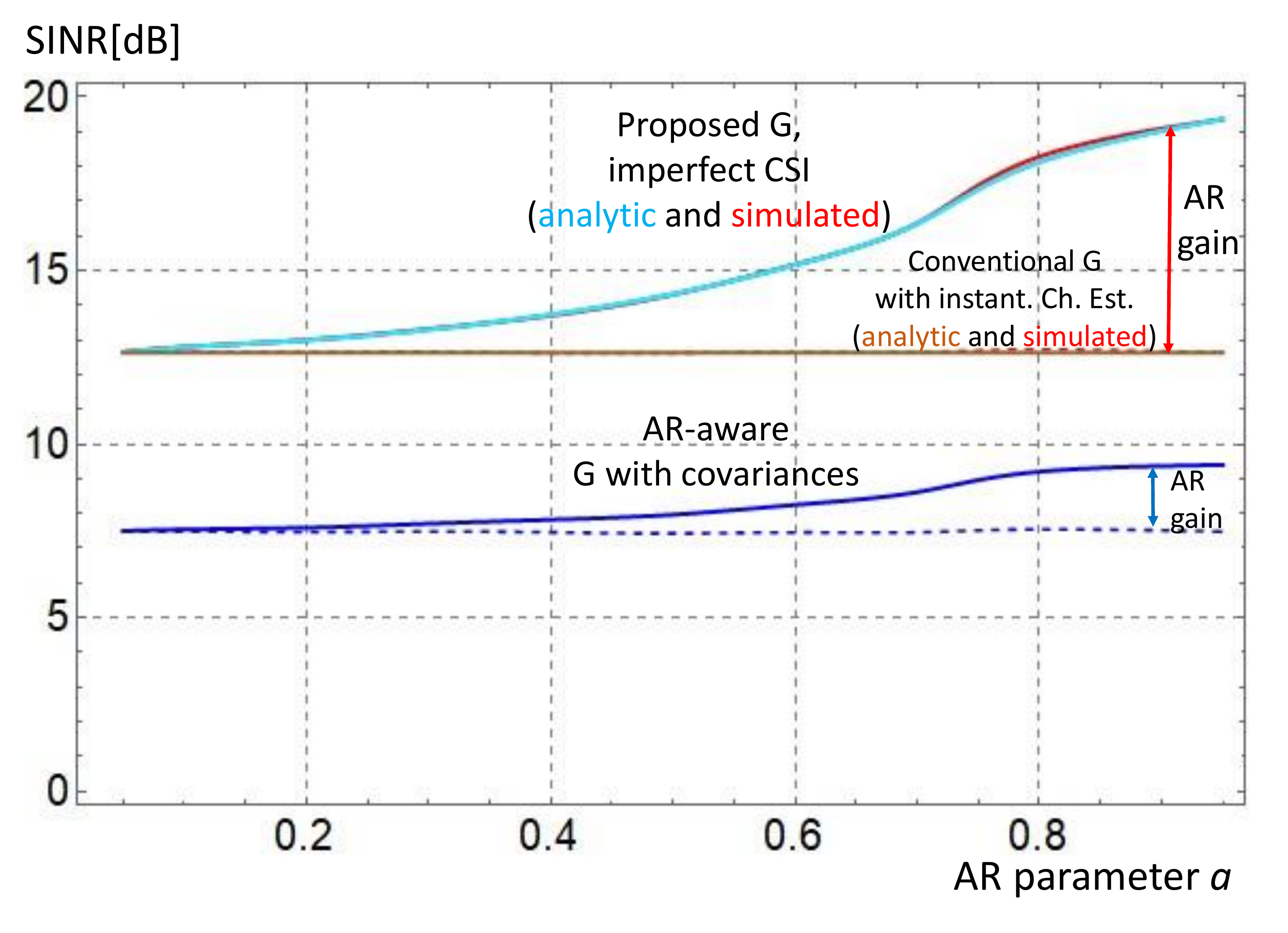}
\caption{Average SINR as a function of the AR parameter  $a$. The proposed receiver falls back
to the receiver that is not AR-aware and uses the instantaneous channel estimates of all users
\cite{Abrardo:19} when $a$ is close to zero. Likewise, the receiver that uses the covariance matrices
of the estimated channels \cite{Fodor:2021} falls back to the conventional receiver \cite{FMT:15} when $a=0$.
}
\label{Fig:Fig3}
\end{figure}
\hfill
\begin{figure}[ht]
\centering
\includegraphics[width=1.\columnwidth]{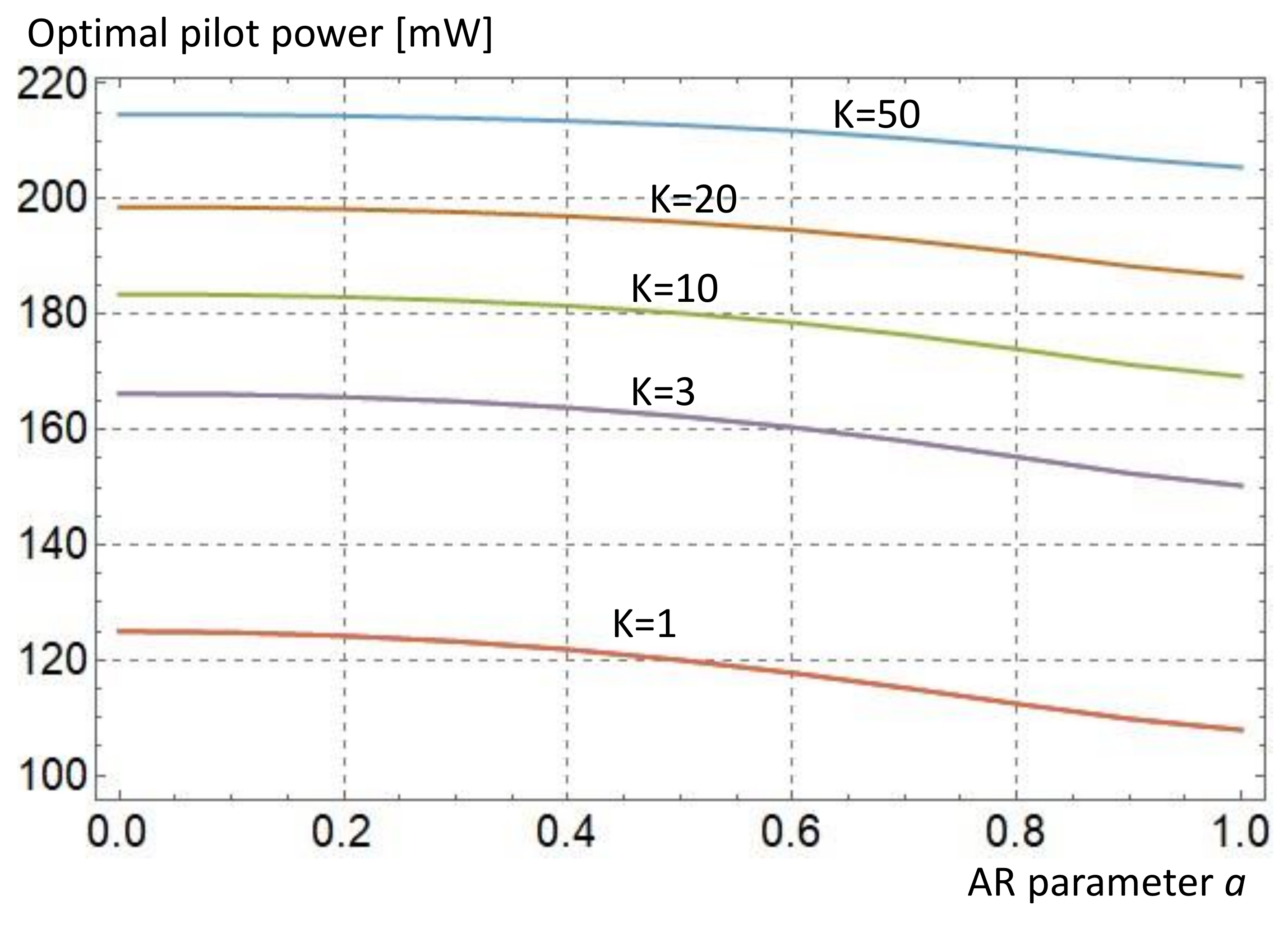}
\caption{Optimum pilot power vs $a$ for $K=1,3,10,20,50$ users. Note that the optimum pilot power
does not depend on the number of antennas. The optimal pilot power increases with increasing number
of users when assuming a total pilot+data power budget.
}
\label{Fig:Fig4}
\end{figure}

Figure \ref{Fig:Fig3} compares the performance of AR-aware receiver developed in \cite{Fodor:2021}
with that of the proposed receiver in this current paper, as a function of the \ac{AR} parameter $a$.
The horizontal lines correspond to the SINR performance of the conventional receivers that do not
exploit the memoryful property of the channel, that is they assume that $a=0$. First, notice that
both receivers take advantage of the AR process of the channel when $a$ is close to 1 ("AR gain").
Second, the currently proposed receiver gains much more by exploiting the channel \ac{AR} process than
the receiver proposed in \cite{Fodor:2021}, since this receiver estimates the channels of all users
rather than treating the interfering users as unknown noise. The sum of these two gains is quite
significant when comparing the \ac{SINR} performance of the conventional \ac{MU-MIMO} receiver by the proposed
\ac{MU-MIMO} receiver when the autocorrelation coefficients of the user channels are high. Such high
autocorrelation property can be achieved in practice by proper pilot symbol allocation in the time
domain.

Figure \ref{Fig:Fig4} shows the optimum pilot power setting as a function of the AR parameter $a$
for systems in which the number of users is $K=1, 3, 10, 20, 50$. This figure assumes that the users
are placed along
a circle around the serving base station, that is, all users have the
same path loss and set their pilot/data power ratio identically.
as mentioned the optimum pilot power
is invariant under of the number of receive antennas ($N_r$) as long as $N_r \geq K$. This figure clearly indicates
that when the number of users is large, each user should increase its pilot power, which implies decreasing
their data power due to the sum pilot and data power constraint. The main reason for this is that while
the pilot signals do not cause interference to each other (due to the assumption on pilot sequence orthogonality),
increasing the number of users increases the MU-MIMO interference level on the received data signals.
Therefore, the optimum pilot allocation in the many users case tends to reduce data power and increase the pilot power levels.
Furthermore, Figure \ref{Fig:Fig4} indicates that the optimum pilot power is decreasing with parameter $a$. An intuitive explanation of this behaviour is that the strong correlation of the channel state in consecutive periods makes easier to acquire the \ac{CSI}.

\begin{figure}[ht]
\centering
\includegraphics[width=1.\columnwidth]{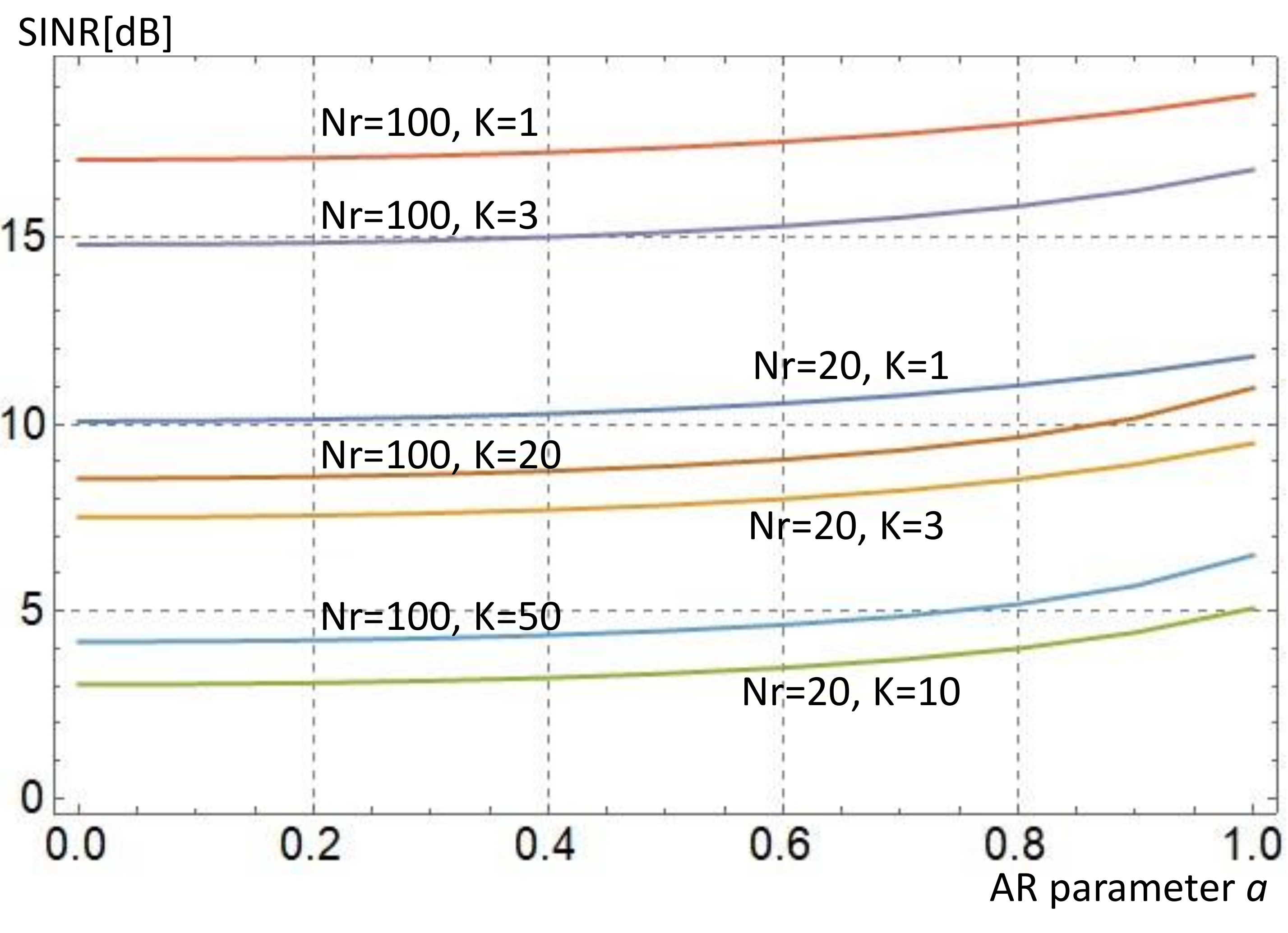}
\caption{\ac{SINR} when using the optimum pilot power vs $a$ for various number of users and antennas.
The achieved \ac{SINR} increases when the AR coefficient is high as compared with the case when the
channel samples are uncorrelated (i.e. block fading) in time.
}
\label{Fig:Fig5}
\end{figure}
\hfill
\begin{figure}[ht]
\centering
\includegraphics[width=1.\columnwidth]{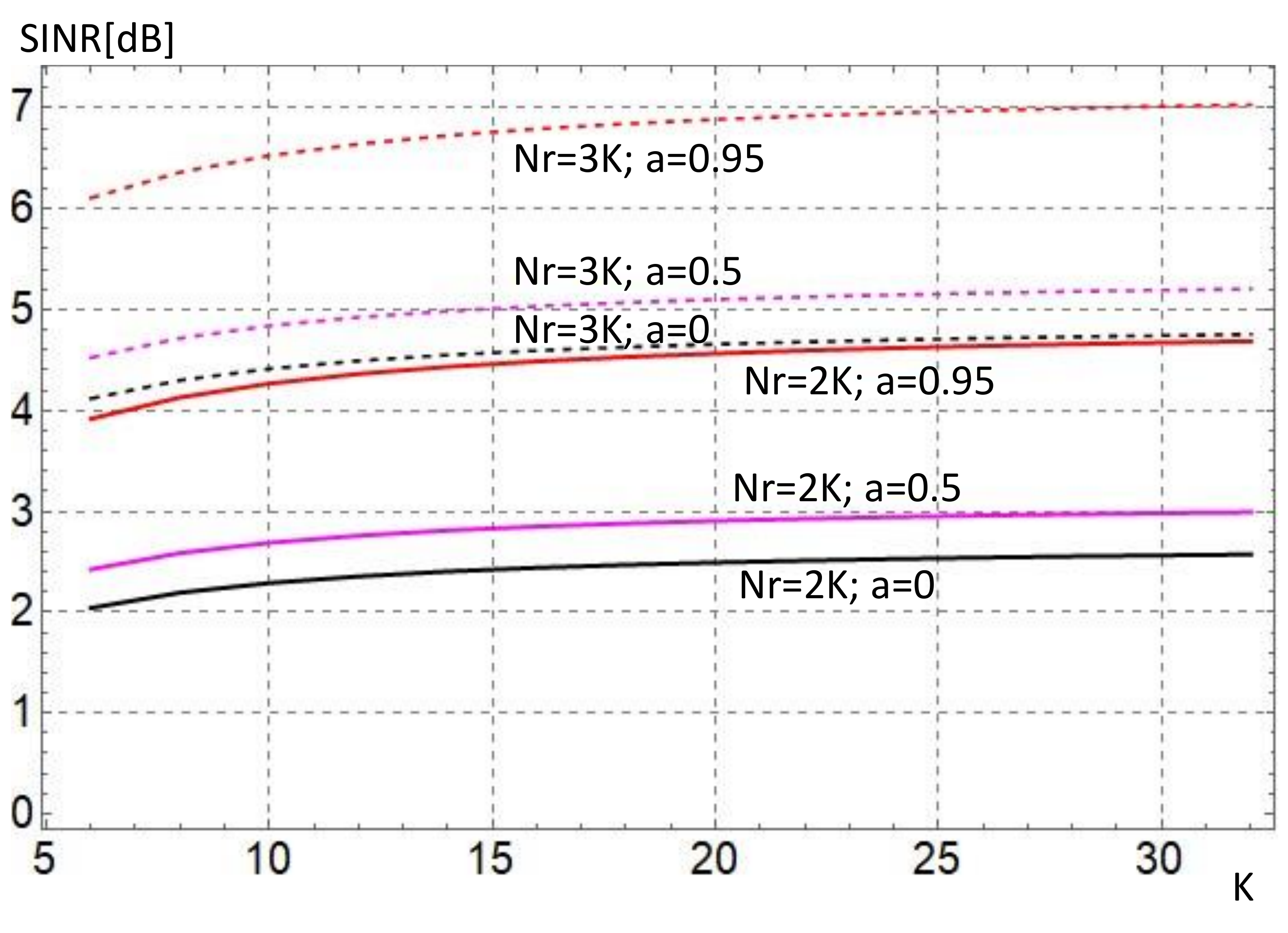}
\caption{\ac{SINR} vs K when $N_r=2K$ and when $N_r=3K$. The \ac{SINR} performance of the $N_r = 2K$ system with $a=0.95$
almost reaches that of the $N_r = 3K$ system with $a=0$.
}
\label{Fig:Fig6}
\end{figure}

Figure \ref{Fig:Fig5} shows the achieved \ac{SINR} when pilot power is set optimally,
as a function of the AR coefficient $a$. Again, we notice that the performance increases as $a$ increases
for all cases. Also, the \ac{SINR} performance of a system with $N_r=100$ and $K=50$ users is somewhat
higher than that of a system with $N_r=20$ and $K=10$. This is expected, since larger number of antennas
implies an improved array gain for all users. We can also see that the gain due to increasing $a$
is similar in all cases.

Figure \ref{Fig:Fig6} uses Theorem \ref{thm:1} to calculate the average \ac{SINR} as a function of
the number of users $K$ when the number of antennas is set to $N_r=2K$ and $N_r=3K$ and when
setting  $a=0$, $a=0.5$ and $a=0.95$. Here we can see that setting $N_r=2K$ with $a=0.95$
gives almost the same \ac{SINR} performance as when having $N_r=3K$ antennas with $a=0$. This result
indicates that when the pilot symbols are sufficiently densely spaced and the autocorrelation
in the channel is well exploited, much lower number of antennas can give a similar \ac{SINR} performance
as that of a system with a high number of antennas.

\begin{figure}[ht]
\centering
\includegraphics[width=\columnwidth]{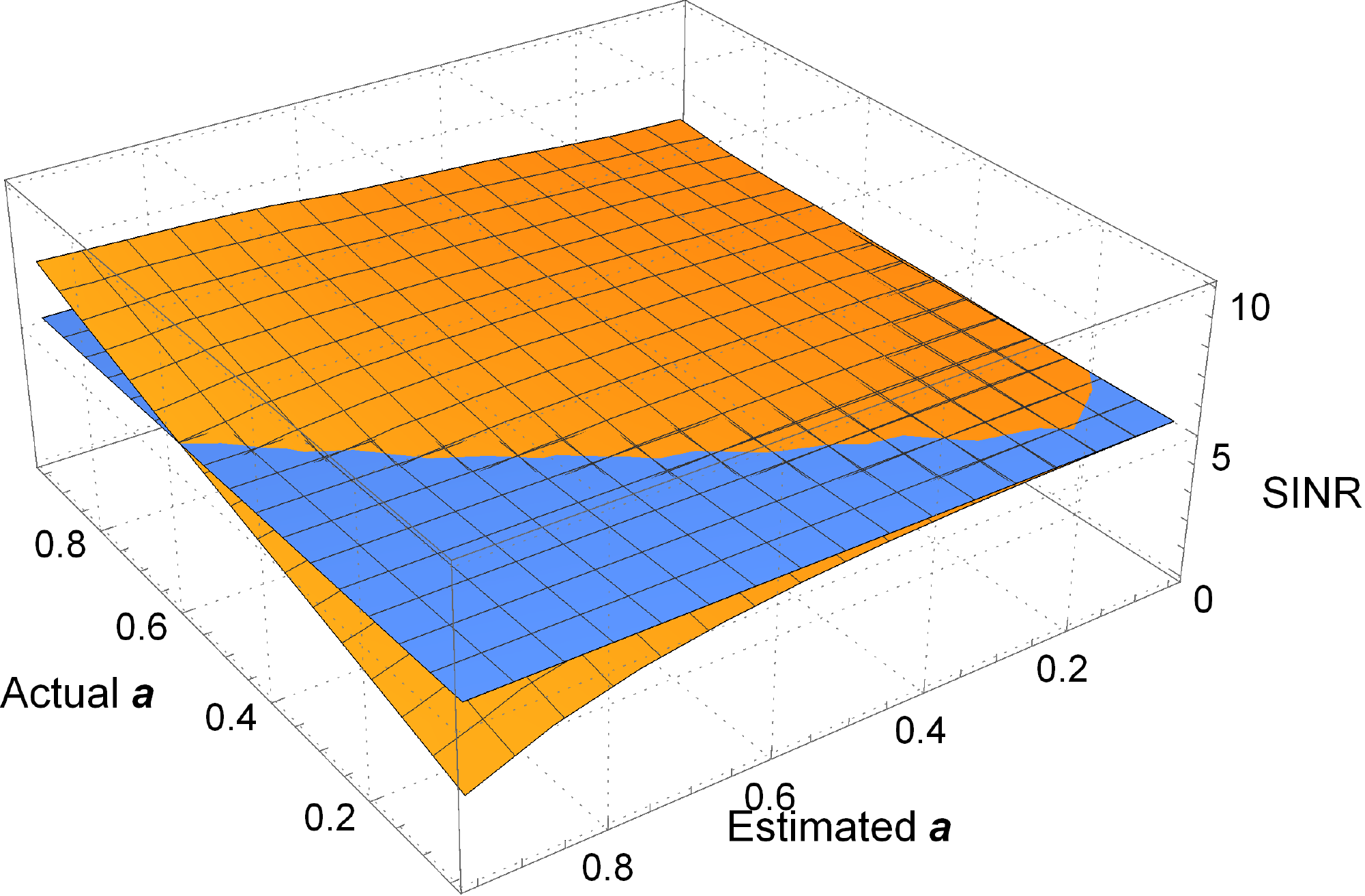}
\vspace{1mm}
\caption{Average \ac{SINR} vs the actual and the estimated AR parameters ($a$ and $\hat a$).
The flat surface indicates
the average SINR performance in a system where $a=0$, which is correctly assumed by the receiver.
}
\label{Fig:Fig7}
\end{figure}

Figure \ref{Fig:Fig7} illustrates the sensitivity of the achieved average \ac{SINR}
when using proposed receiver with
respect to the difference between the estimated and actual $a$ parameters of the \ac{AR}
channel.
The figure shows the actually achieved average \ac{SINR} in a system with $N_r=20$
antennas and $K=5$ users, as a function of the actual ($a$) and estimated ($\hat a$)
AR parameter. The flat surface indicates the \ac{SINR} level that is achieved in a system
with $a=0$ that correctly assumes that $a=0$.

When the actual $a$ is high (greater than 0.8),
the achieved \ac{SINR} is higher than when $a=0$, for all estimated $\hat a$ values. However,
when the actual $a$ is low (the channel is effectively block fading) and the estimated
$\hat a$ is high (the receiver assumes strong correlation in the subsequent channel estimates),
the achieved \ac{SINR} is lower than what is achieved by a conventional receiver. This result
suggests that with proper pilot symbol spacing, when $a$ is high, estimating well the $a$
is also important to fully harvest the gains by using the proposed receiver.

\begin{figure}[ht]
\centering
\includegraphics[width=1.\columnwidth]{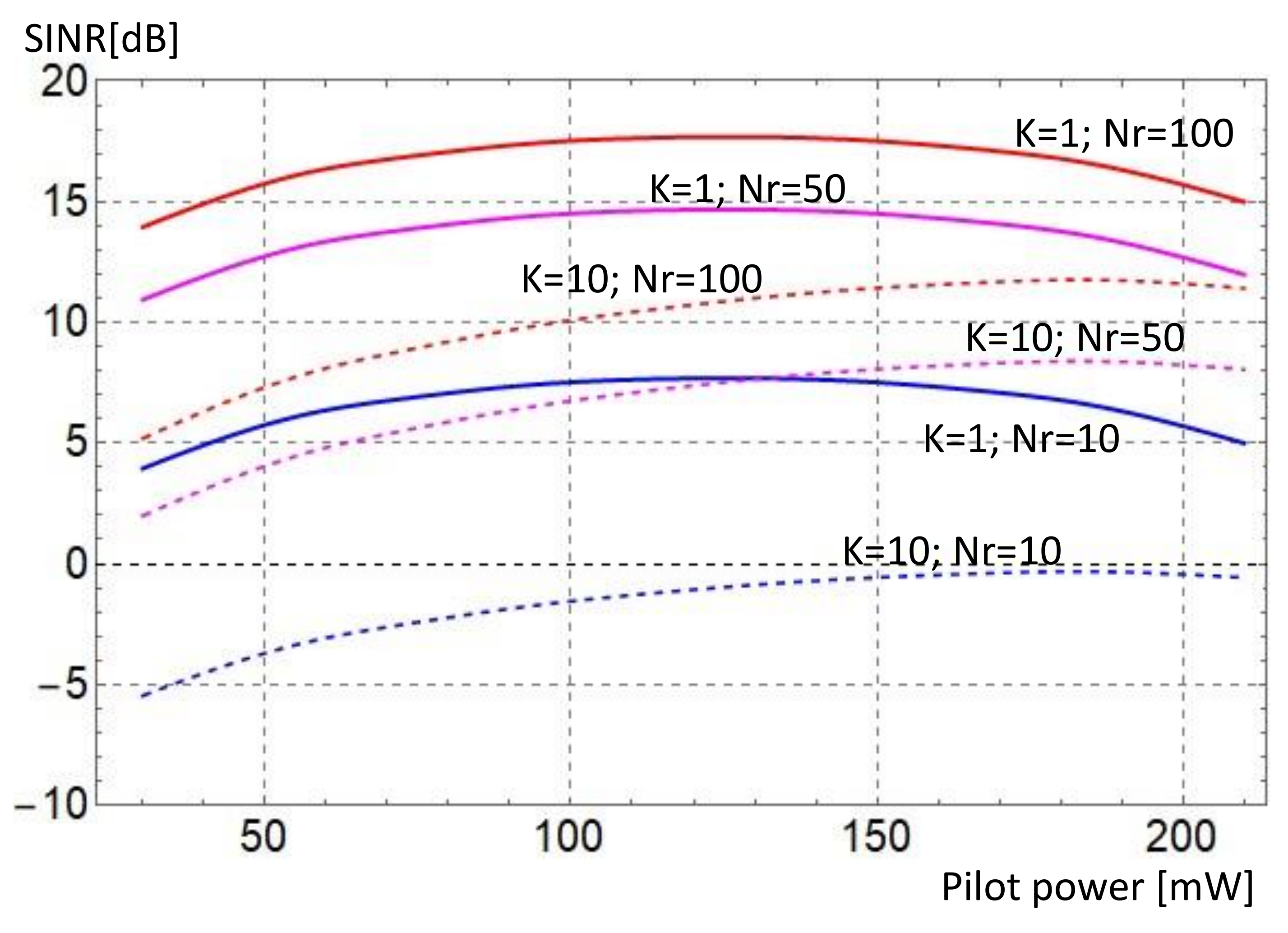}
\caption{SINR performance calculated analytically using Theorem 2 of the proposed AR-aware $\mx{G}^\star$ receiver as a function of the pilot power in different scenarios
in terms of number of users $K$ (i.e. single user or $K=10$) and number of antennas at the \ac{BS}, (i.e. $N_r=10, 50, 100$) at $a=0$.}
\label{Fig:Fig8a0}
\end{figure}
\hfill
\begin{figure}[ht]
\centering
\includegraphics[width=1.\columnwidth]{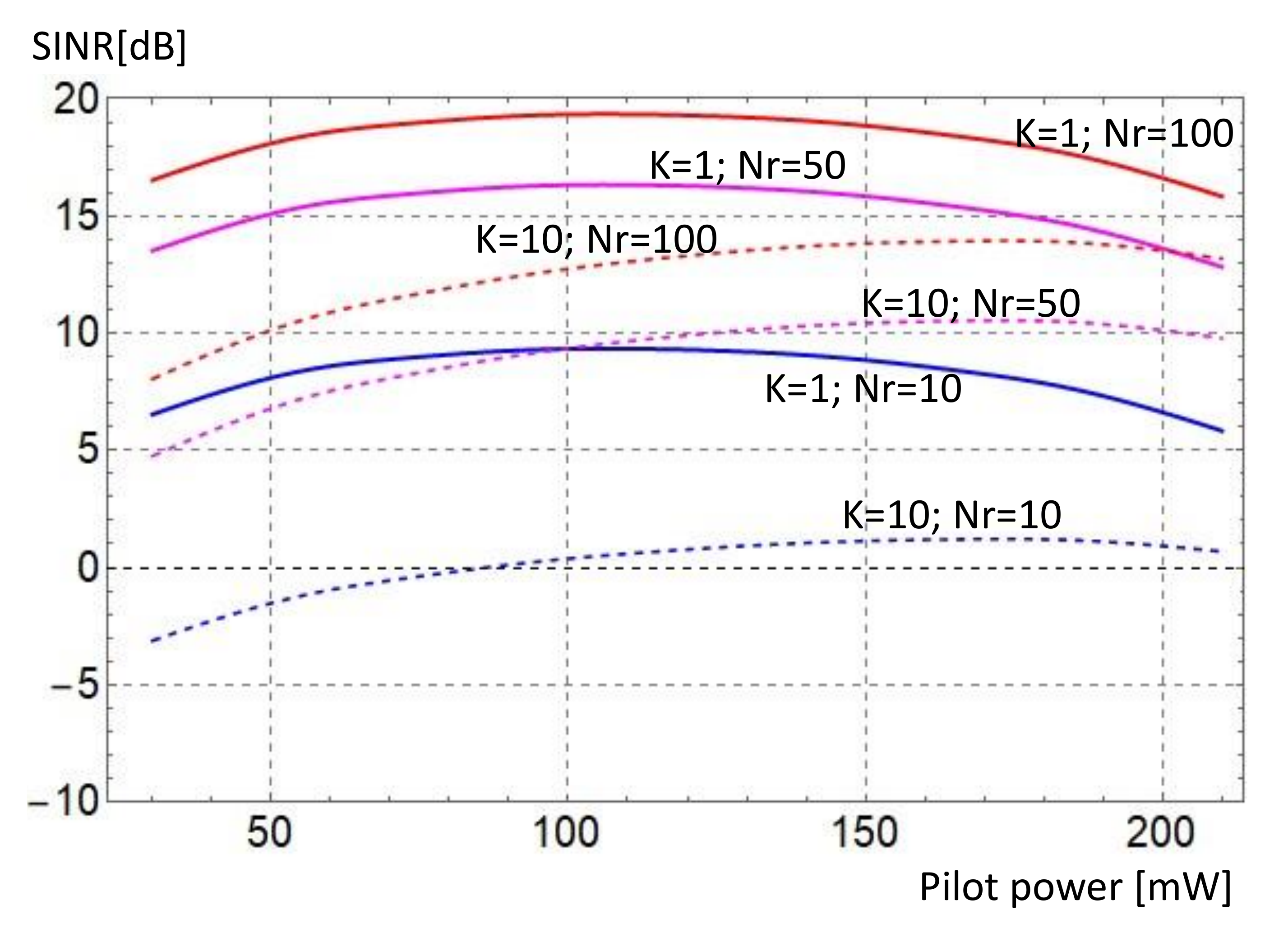}
\caption{SINR performance calculated analytically using Theorem 2 of the proposed AR-aware $\mx{G}^\star$ receiver as a function of the pilot power in different scenarios
in terms of number of users $K$ (i.e. single user or $K=10$) and number of antennas at the \ac{BS}, (i.e. $N_r=10, 50, 100$) at $a=0.95$.}
\label{Fig:Fig8a095}
\end{figure}

\rev{
Finally, Figures \ref{Fig:Fig8a0} and \ref{Fig:Fig8a095} compare the average \ac{SINR} performance of single and multiuser ($K=10$) systems when $a=0$ and when $a=0.95$
when the number of base station antennas is low $N_r=10$ and high $N_r=50$ or $N_r=100$. Notice that in the case of a memoryful MIMO channel ($a=0.95$) properly
setting the pilot power and exploiting the memoryful property of the channel, an average SINR above 0 dB can be achieved even with a relatively low number of antennas
(see the case of $N_r=K=10$), whereas in the case of $a=0$ the average SINR stays below 0 dB, especially if the pilot power is not properly tuned.
}

\section{Conclusions and Outlook}
\label{Sec:Conc}

In this paper we proposed a new \ac{MU-MIMO} receiver, whose distinguishing features
are its capability to utilize the instantaneous channel estimate of each user,
and to exploit the memoryful property of the \ac{MU-MIMO} wireless channels (AW-awareness) when these channels
evolve according the an \ac{AR} process. The main contribution of this paper is the
new \ac{MU-MIMO} receiver structure (Proposition \ref{P2}) and its performance analysis
facilitated by Proposition \ref{prop:Hoydis} and Theorem \ref{thm:1}. This receiver
and its performance analysis extends the results by \cite{Hoydis:2013}
in the sense that
(1) the proposed receiver exploits the memoryful property of the AR channels rather than
treating them as block fading and (2) due to Theorem \ref{thm:1} it allows the calculate
the average \ac{SINR} without solving a system of fixed point equations.
Our numerical results
indicate that the proposed receiver outperforms previously proposed \ac{MU-MIMO} receivers.
An important future work, which is outside the scope of the present paper, is to find
the optimal pilot power levels when the users are randomly placed in the coverage area of the cell,
and, consequently, have different large scale fading parameters.
\rev{Also, in the light of the results by multicell \ac{MU-MIMO} receivers studied
by \cite{Bjornson:18}, \cite{Boukhedimi:18} and \cite{Sanguinetti:19}
in block fading
environments, it is an exciting question, whether the proposed receiver in this paper can be
extended to multicell systems.}

\appendices

\section{Proof of Lemma \ref{lem:mmsechannel}}

\rev{
\begin{proof}
The MMSE channel estimator aims at minimizing the MSE between the channel estimate
$\mathbf{\hat h}_{\textrm{MMSE}}(t) = \mathbf{H}^\star \mathbf{\hat Y}^p(t)$ and the channel $\mathbf{h}(t)$, where
$\mathbf{H} \in \mathds{C}^{N_r \times 2 \tau_p N_r }$,
$\mathbf{\hat Y}^p(t)=\begin{bmatrix}
\mathbf{\tilde Y}^p(t) \\
\mathbf{\tilde Y}^p(t-1)
\end{bmatrix} \in \mathds{C}^{2 \tau_p N_r \times 1}$ and
$\mathbf{H^\star}= \text{arg} \min_{\mathbf{H}} \mathds{E}_{\mathbf{h},\mathbf{n}}\{ ||\mathbf{H} \mathbf{\hat Y}^p(t) - \mathbf{h}(t)||_F^2 \}$.
The solution of this quadratic optimization problem is
$\mathbf{H^\star}= \mx{b}^H \mx{F}^{-1}$
with
\begin{align*}
\mx{F} &= \mathds{E}_{\mathbf{h},\mathbf{n}}\left(\mx{\hat Y}^p \mbox{$\mx{\hat Y}^p$}^H\right) \nonumber \\
&=
\begin{bmatrix}
\alpha^2P_p \mx{S} \mx{C} \mx{S}^H+ \sigma_p^2 \mathbf{I}_{N_r\tau_p} &
\alpha^2P_p \mx{S} (\mx{A}\mx{C}) \mx{S}^H\\
\alpha^2P_p \mx{S} (\mx{C}\mx{A}^{H}) \mx{S}^H &
\alpha^2P_p \mx{S} \mx{C} \mx{S}^H+ \sigma_p^2 \mathbf{I}_{N_r\tau_p}
\end{bmatrix}
\\&=
\alpha^2P_p~
\left(\mx{s} \mx{s}^H \otimes \mx{M} \right) + \sigma_p^2 \mathbf{I}_{2 N_r\tau_p},
\\
\mx{b} &= \mathds{E}_{\mathbf{h},\mathbf{n}}\left( \mx{\hat Y}^p \mx{h}^{H}(t)\right)
=
\begin{bmatrix}
\alpha\sqrt{P_p} \mx{S} \mx{C} \\
\alpha\sqrt{P_p} \mx{S} (\mx{C}\mx{A}^{H})
\end{bmatrix} \nonumber \\
&=
\alpha\sqrt{P_p} ~~\left(\mx{s}\otimes
\begin{bmatrix}
\mx{C} \\
\mx{C} \mx{A}^{H}
\end{bmatrix}\right),
\end{align*}
where we utilized $\mathbf{S} \triangleq \mathbf{s}\otimes \mathbf{I}_{N_r}$ and $\mathbf{S}^H \mathbf{\tilde N}(t)=\mathbf{s}^H \mathbf{N}(t)$.
That is
\begin{align*}
\mx{H^\star}&=\mx{b}^H \mx{F}^{-1} \nonumber \\
&= \frac{1}{\alpha\sqrt{P_p} \tau_p}
\begin{bmatrix}
\mx{C} &
\mx{A}\mx{C}
\end{bmatrix} \left( \frac{\sigma_p^2}{\alpha^2P_p \tau_p} \mx{I}_{2N_r} +  \mx{M}\right)^{-1}
\left(\mx{s}^H\otimes\mx{I}\right).
\end{align*}
The \ac{MMSE} estimate is then expressed as
\begin{align}
\label{eq:MMSEt}
&\mathbf{\hat h}_{\textrm{MMSE}}(t)=\mathbf{H^\star} \mathbf{\hat Y}^p(t) = 
\begin{bmatrix}
\mx{C} &
\mx{A} \mx{C}
\end{bmatrix}
\left( \frac{\sigma_p^2}{\alpha^2P_p \tau_p} \mx{I}_{2N_r} +  \mx{M}\right)^{-1} \nonumber \\
&~~~~~~~~~~~~~~
.
\begin{bmatrix}
\mathbf{h}(t) + \frac{1}{\alpha\sqrt{P_p} \tau_p} \mathbf{s}^H \mathbf{N}(t) \\
\mathbf{h}(t\!-\!1) + \frac{1}{\alpha\sqrt{P_p} \tau_p} \mathbf{s}^H \mathbf{N}(t\!-\!1)
\end{bmatrix},
\end{align}
which gives the lemma.
\end{proof}
}

\section{Proof or Proposition \ref{prop:Hoydis}}
\label{App:Hoydis}
Starting from \eqref{eq:gamma},
we first apply \cite[Lemma 1, eq. (47)]{Truong:13}
which states that, if
$\left(\mx{B}\mx{B}^H + \boldsymbol{\beta} \right)^{-1}$
has a uniformly bounded spectral norm, then
\begin{align}
&\frac{1}{N_r}
\mx{b}^H \left(\mx{B}\mx{B}^H + \boldsymbol{\beta} \right)^{-1} \mx{b}
-
\frac{1}{N_r} \text{tr}\left(\mx{\Phi}
\left(\mx{B}\mx{B}^H + \boldsymbol{\beta} \right)^{-1}\right) \xrightarrow[N_r\rightarrow\infty]{\text{a.s.}} 0.
\end{align}
In the second step we apply \cite[Theorem 1]{Hoydis:2013}, which states that,
if $N_r\to\infty$ and $\limsup_{N_r\to\infty} K/N_r<\infty$, then
\begin{align}
\label{eq:asc1}
&\frac{1}{N_r}\text{tr}\left(\mx{\Phi} \left(\mx{B}\mx{B}^H + \boldsymbol{\beta} \right)^{-1}\right) -
\frac{1}{N_r}\text{tr} \Big(\mx{\Phi} \mx{T}\Big) \xrightarrow[N_r\rightarrow\infty]{\text{a.s.}} 0,
\nonumber \\
&\text{~where~~} 
~\mx{T} \triangleq \left(\frac{1}{N_r} \sum_{k=2}^K
\frac{ \mx{\Phi}_k }{1+\delta_{k}}
+ \boldsymbol{\beta}\right)^{-1},
\end{align}
and $\delta_{k}$, for $k=2,\ldots,K$ are the solution of
\begin{align}
\label{eq:asc2}
\delta_{k} &=
\frac{1}{N_r}
\text{tr}\left( \mx{\Phi}_k \left(\frac{1}{N_r} \sum_{\ell=2}^K \frac{\mx{\Phi}_\ell}{1+\delta_{\ell}}
+\boldsymbol{\beta} \right)^{-1}\right).
\end{align}
Adding equations \eqref{eq:asc1} and \eqref{eq:asc2} we get that
\begin{align}
    \mx{b}^H \left(\mx{B}\mx{B}^H + \boldsymbol{\beta} \right)^{-1} \mx{b} -
    \frac{1}{N_r}\text{tr} \Big(\mx{\Phi} \mx{T}\Big) \xrightarrow[N_r\rightarrow\infty]{\text{a.s.}} 0,
\end{align}
which together with equation \eqref{eq:gamma} gives the desired result.



\section{Proof of Theorem \ref{thm:2}}
\label{Sec:AppVI}
To prove Theorem \ref{thm:2}, we need the following Lemma \rev{regarding the moments of the random variable $\omega_n$:}
\begin{lem}
\label{lem:4}
\rev{Let $\omega_n$ and $\bar{\lambda}$ be defined as in Theorem \ref{thm:2},
we can then state the following relationship between the moments of $\omega_n$ and the powers of $\bar{\lambda}$,}
\begin{align}
\lim_{n\rightarrow \infty} \frac{\mathds{E}\{ \omega_n^r\}}{n^{r-1}} = \bar{\lambda}^r.
\end{align}
\end{lem}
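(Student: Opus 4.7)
The plan is to exploit the fact that the dyad $\mx{v}^{(n)}\bigl(\mx{v}^{(n)}\bigr)^H$ is a rank-one matrix, so all but one of its eigenvalues are trivially zero. Concretely, the only nonzero eigenvalue is $\mx{v}^{(n)H}\mx{v}^{(n)} = \sum_{i=1}^n |v_i|^2$, with multiplicity one, and the remaining $n-1$ eigenvalues equal zero. Since $\omega_n$ is chosen uniformly among the $n$ eigenvalues, we obtain
\begin{equation*}
\mathds{E}\{\omega_n^r\} \;=\; \frac{1}{n}\,\mathds{E}\!\left\{\Big(\textstyle\sum_{i=1}^n |v_i|^2\Big)^r\right\},
\end{equation*}
so that the claim reduces to showing $\mathds{E}\{(\sum_i |v_i|^2)^r\}/n^r \to \bar{\lambda}^r$.

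Next, I would use the probabilistic structure: because $v_i \sim \mathcal{CN}(0,\lambda_i)$ independently, the variables $X_i \triangleq |v_i|^2$ are independent and exponentially distributed with mean $\lambda_i$, so that $\mathds{E}\{X_i^k\} = k!\,\lambda_i^k$. Expanding the $r$-th power via the multinomial theorem and applying independence, the cross terms collapse to
\begin{equation*}
\mathds{E}\!\left\{\Big(\textstyle\sum_{i=1}^n X_i\Big)^r\right\}
\;=\; r!\,h_r(\lambda_1,\ldots,\lambda_n),
\end{equation*}
where $h_r$ is the complete homogeneous symmetric polynomial of degree $r$.

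The core of the argument is then to show that $h_r(\lambda_1,\ldots,\lambda_n)/n^r \to \bar{\lambda}^r/r!$. To do this, I would compare $h_r$ with the elementary symmetric polynomial $e_r$ and with the power sum $p_1 = \sum_i \lambda_i$. Expanding $p_1^r = (\sum_i \lambda_i)^r$ multinomially and grouping by the number $k$ of nonzero exponents, the contribution from compositions with all $r_i\in\{0,1\}$ equals $r!\,e_r$, while every other composition uses at most $k\le r-1$ distinct indices and hence contributes $O(n^{r-1})$ once the bound $\lambda_i<\lambda_{\max}$ is invoked. The same index-counting argument applied to $h_r$ itself yields $h_r = e_r + O(n^{r-1})$. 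Combining with the hypothesis $p_1/n \to \bar{\lambda}$ gives $e_r/n^r \to \bar{\lambda}^r/r!$ and therefore $h_r/n^r \to \bar{\lambda}^r/r!$, from which the lemma follows.

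The main obstacle is the bookkeeping in the third paragraph: the convergence $p_1/n\to\bar{\lambda}$ says nothing about the higher power sums $p_k/n$, so one has to isolate the dominant $\{0,1\}$-valued compositions and show that every other type of composition is negligible on the $n^r$ scale. The boundedness hypothesis $\lambda_i<\lambda_{\max}$ is what makes this work, since it uniformly controls the lower-order terms. As a sanity check, an alternative route is to observe that $\sum_i X_i/n \to \bar{\lambda}$ almost surely by the strong law of large numbers for independent variables with uniformly bounded variances, and then to upgrade a.s.\ convergence of $(\sum_i X_i/n)^r$ to convergence in mean using uniform integrability, which is immediate from the boundedness of $\mathds{E}\{(\sum_i X_i/n)^{r+1}\}$ established through the same symmetric-polynomial estimate; this offers an independent verification of the conclusion.
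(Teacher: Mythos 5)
Your proposal is correct, and its main route is genuinely different from the paper's. You and the paper begin identically: the dyad is rank one, its sole nonzero eigenvalue is $\sum_i |v_i|^2$ with the $|v_i|^2$ independent exponentials of means $\lambda_i$, and random selection of an eigenvalue gives $\mathds{E}\{\omega_n^r\}=\tfrac{1}{n}\mathds{E}\{(\sum_i|v_i|^2)^r\}$. From there the paper argues probabilistically: by the strong law of large numbers $\tfrac{1}{n}\sum_i Y_i \to \bar{\lambda}$ almost surely, hence $(\tfrac{1}{n}\sum_i Y_i)^r\to\bar{\lambda}^r$ a.s., and it then passes the limit inside the expectation (a step that in the published text is asserted without the uniform-integrability justification it really needs, although that is easily supplied from $\lambda_i<\lambda_{\max}$). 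You instead compute the moment exactly, $\mathds{E}\{(\sum_i Y_i)^r\}=r!\,h_r(\lambda_1,\ldots,\lambda_n)$, and then show by index counting that $h_r=e_r+O(n^{r-1})$ and that $(\sum_i\lambda_i)^r=r!\,e_r+O(n^{r-1})$, so the hypothesis $\tfrac{1}{n}\sum_i\lambda_i\to\bar{\lambda}$ alone forces $r!\,h_r/n^r\to\bar{\lambda}^r$; the boundedness assumption controls the lower-order compositions exactly as you say. This buys a fully deterministic, self-contained argument that never exchanges a limit with an expectation, at the cost of some combinatorial bookkeeping; the paper's route is shorter but rests on the unstated uniform-integrability step, which is precisely the ``sanity check'' argument you sketch at the end (and there you do justify the interchange via a higher-moment bound), so your secondary route essentially recovers, and repairs, the paper's proof.
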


\begin{proof}[Proof of Lemma \ref{lem:4}]
The random matrix $\mx{v}^{(n)}\left(\mx{v}^{(n)}\right)^H$ is rank one and thus has $n-1$ eigenvalues equal to 0
and one eigenvalue equal to $\text{tr}\left(  \mx{v}^{(n)}\left(\mx{v}^{(n)}\right)^H  \right) = \sum_{i=1}^{n} \|v^{(n)}_i\|^2$.
Note that $Y_i \triangleq \|v^{(n)}_i\|^2$ has an exponential distribution with mean $\lambda_i$.
Since $\omega_n$ is one of these eigenvalues, randomly selected, we have
\begin{align}
\lim_{n\rightarrow \infty} \frac{\mathds{E}\{ \omega_n^r\}}{n^{r-1}} &= \lim_{n\rightarrow \infty}
\frac{\frac{1}{n}\mathds{E}\{\left(\sum_{i=1}^{n} Y_i \right)^r\}}{n^{r-1}} \nonumber\\
&=
\lim_{n\rightarrow \infty} \mathds{E}\left\{ \left( \frac{\sum_{i=1}^{n} Y_i}{n} \right)^r \right\}. \label{eq:y_lim}
\end{align}
Furthermore by the strong law of large numbers as $n\to\infty$
\begin{align}
\label{eq:slln}
    \frac{\sum_{i=1}^{n} Y_i}{n}  \xrightarrow[n\rightarrow\infty]{\text{a.s.}} \bar{\lambda}
    &\Rightarrow \left(\frac{\sum_{i=1}^{n} Y_i}{n}\right)^r \xrightarrow[n\rightarrow\infty]{\text{a.s.}} \bar{\lambda}^r \nonumber\\
    &
    \Rightarrow \lim_{n\rightarrow \infty} \mathds{E}\left\{ \left( \frac{\sum_{i=1}^{n} Y_i}{n} \right)^r \right\} = \bar{\lambda}^r.
\end{align}
Equations \eqref{eq:y_lim} and \eqref{eq:slln} give the Lemma.
\end{proof}

For the proof of the Theorem \ref{thm:2}, \rev{in addition to Lemma \ref{lem:4}},
we will use the equivalent (cf. \eqref{eq:defr}) definition of the $\mathcal{R}$-transform of
a random variable $X$ using its cumulants \cite{Muller:13}:
\begin{align}
\mathcal{R}_X(s) \triangleq \sum_{k=0}^{\infty}\kappa_{k+1}s^{k},
\end{align}
where $\kappa_k$ is the k'th cumulant of $X$, that is
\begin{equation}\label{cumulant}
    \kappa _{k}= \left. \frac{d^k}{ds^k}  K_X(s)\right|_{s=0} = K_X^{(k)}(0),
\end{equation}
and $K_X(s)$ is the cumulant generating function
    $K_X(s) \triangleq
    \operatorname \log \mathds{E} \left[e^{sX}\right]$.
\rev{We use this definition in the proof as it is often useful to have two equivalent definitions of a function,
and use one of them to say something about the other.
In this case we use the cumulant definition of the $\mathcal{R}$-transform to
be able to state results about the Stieltjes transform.}
\rev{In order to calculate the cumulants $\kappa_k$ we calculate the value of the derivatives of $K_X(s)$ at $s=0$,
we do this through the derivatives of the moment generating function $M_X(s) \triangleq \mathds{E} \left[e^{sX}\right]$ of $X$ as follows.}
First define $m_i(s) \triangleq M_X^{(i)}(s)/M_X(s)$,
and define the order of the product
\begin{align}
    \prod_{k=1}^K m_{i_k}^{j_k}(s) ~~~\text{to be}~~~ \sum_{k=1}^K i_k j_k.
\end{align}
Notice that by the quotient rule
\begin{align}
\frac{d}{ds}m_i(s) &= \frac{d}{ds}\frac{M_X^{(i)}(s)}{M_X(s)} \nonumber \\
&=
\frac{  M_X^{(i+1)}(s)  M_X(s) - M_X^{(i)}(s)M_X^\prime(s)}{M_X^2(s)} \nonumber \\
&=
m_{i+1}(s) - m_i(s)m_1(s).
\end{align}
Thus, by the product rule the derivative of an order $k$ product is a sum of order $k+1$ products, and so the $k$'th cumulant
\begin{align}
\kappa_k &= \left. \frac{d^k}{ds^k}  K_X(s)\right|_{s=0} = \left. \frac{d^{k-1}}{ds^{k-1}} m_1(s) \right|_{s=0},
\end{align}
is a sum of order $k$ products at $s=0$, and one of the terms of this sum is $m_k(0)$.
Furthermore, by the definition of $m_k$, we have $m_k(0) = \mathds{E}\{X^k\}$.

More specifically, looking at the random variable $\omega_n$, we know from Lemma \ref{lem:4} that
$\mathds{E}\{\omega_n^k\}$ and hence $m_k(0)$ is $O(n^{k-1})$.
Consequently,
any order $k$ product at $s=0$ other than $m_k(0)$ is $O(n^{k-2})$, and so by Lemma \ref{lem:4}:
\begin{align}
\lim_{n\rightarrow \infty} \frac{\kappa_k}{n^{k-1}} &= \bar{\lambda}^k.
\end{align}
We can now derive \eqref{eq:Th1}:
\begin{align}
\lim_{n\rightarrow \infty} & \mathcal{R}_{\omega_n}\left( \frac{s}{n} \right)
= \lim_{n\rightarrow \infty} \sum_{k=0}^{\infty}\kappa_{k+1}\left(\frac{s}{n}\right)^{k} \nonumber\\
&=
\lim_{n\rightarrow \infty} \sum_{k=0}^{\infty}\frac{\kappa_{k+1}}{n^k}s^{k} = \sum_{k=0}^{\infty} \bar{\lambda}^{k+1}s^k 
= \frac{\bar{\lambda}}{1 - s\bar{\lambda}},
\end{align}
which completes the proof.

\section{Proof of Theorem \ref{thm:1} Using the Stieltjes and $\mathcal{R}$-Transforms}
\label{Sec:AppV}
The first proof of Theorem \ref{thm:1} relies on random matrix theory using
the Stieltjes transform, the $\mathcal{R}$-transform and
\rev{Corollary \ref{cor:rtrafo}} of Theorem \ref{thm:2}.
To determine \eqref{eq:averageSINR},
we use the \rev{spectral} decomposition of the \rev{Hermitian matrix}
and define
$\mathbf{y}\triangleq \mathbf{U} \mathbf{b}$.
Accordingly, \eqref{eq:averageSINR} becomes
\begin{equation}\nonumber
\begin{aligned}
\bar{\gamma}
&=
\mathds{E}_{\mathbf{y},\lambda_i,i=1\ldots N_r}
\left\{\mx{y}^H\mathbf{U}\mathbf{U}^H (\mathbf{\Lambda}+ \beta \mathbf{I}_{N_{r}})^{-1} \mathbf{U}\mathbf{U}^{H}\mx{y}\right\} \nonumber \\
&=
\mathds{E}_{\mathbf{y},\lambda_i,i=1\ldots N_r}\left\{\sum_{i=1}^{N_r}  \frac{|y_i|^2}{\lambda_i + \beta}\right\},
\end{aligned}
\end{equation}
where
$y_i$ is $i$th element of the vector
$\mathbf{y}$ and $\lambda_{i}$ is the $i$th eigenvalue of $\mx{B}\mx{B}^H$.

Since the $\mx{U}$ matrix is unitary, $\mathbf{y}$ and $\mathbf{b}$,
have same distribution, i.e. 
$\mathbf{y}\sim\mathcal{CN}(0,\phi \mathbf{I}_{N_{r}})$
and
$\mathds{E}\left\{|y_i|^2\right\}=\phi;~i=1 \dots N_r$, where recall that $\phi=\phi_1$ (tagged user).
Moreover, since the interference matrix $\mx{B}\mx{B}^H$ is independent of
$\mathbf{b}$,
$\mathbf{y}$ is independent of the eigenvalues $\lambda_i$, and hence
\begin{equation}
\label{eq:gamma25}
\bar{\gamma}=\phi \cdot
\mathds{E}_{\lambda_i,i=1\ldots N_r} \left(\sum_{i=1}^{N_r}  \frac{1}{\lambda_i + \beta}\right).
\end{equation}
Assuming that $N_r, K \rightarrow \infty$, with $K/N_r$ fixed, and using equations
(13) and (14) of \cite{Livan:11} we obtain:
\begin{align}
\mathds{E}_{\lambda_i,i=1\ldots N_r} \left\{\sum_{i=1}^{N_r}  \frac{1}{\lambda_i + \beta}\right\} &=
N_r \mathds{E}_{\lambda} \left\{\frac{1}{\lambda + \beta}\right\},
\end{align}
where $\lambda$ is a randomly selected eigenvalue out of the spectrum of $\mx{B}\mx{B}^H$.

\rev{A first key observation is that the
Stieltjes transform of the distribution of $\lambda$ at $s=-\beta$ is closely related to $\bar{\gamma}$:}
\begin{align}
\label{g-gamma}
G_{\rev{\lambda}}(-\beta) &\overset{(a)}{=}
\int_x  \frac{1}{x+\beta} d P_\lambda(x) \overset{(b)}{=}
\mathds{E}_{\lambda} \left\{\frac{1}{\lambda + \beta}\right\} \overset{(c)}{=}
\frac{\bar{\gamma}}{N_r \phi},
\end{align}
\rev{where $(a)$ is due to definition of the Stieltjes transform, and $(b)$ is due to
noticing that the left hand side of $(b)$ is by definition the expectation of $1/(\lambda+\beta)$.}
Finally,  in the last equation we used \eqref{eq:gamma25}.
\rev{This implies that if we can find an appropriate $\beta$ for which it holds that:}
\rev{
\begin{align}
\label{eq:Gbetaw}
G_\lambda\left(-\beta\right) = w,
\end{align}
where $w \triangleq \frac{\bar{\gamma}}{N_r \phi}$, then according to \eqref{g-gamma}
we found $\bar{\gamma}$ in the form of:
$N_r \phi G_\lambda\left(-\beta\right) = \bar{\gamma}$.
}
\rev{To find such a $\beta$, recall} that for the Hermitian matrix associated with the tagged user
$\mx{B}\mx{B}^H =\sum_{k=2}^K \mathbf{b}_k \mathbf{b}_k^H$
with
\begin{align}
\mathbf{b}_k &\sim \mathcal{CN}(0,\phi_{k}\mathbf{I}_{N_{r}}).
\end{align}
\rev{Furthermore,} we will utilize the following identity (see \eqref{eq:defr}):
\begin{align}
G_\lambda\left(\mathcal{R}_\lambda(-w)-\frac{1}{w}\right)&=w.
\label{grtr}
\end{align}

Furthermore, assuming that $N_r \rightarrow \infty$,
the family of matrices $\mathbf{b}_k \mathbf{b}_k^H$ ($k=1, \dots, K$)
is almost surely asymptotically free \cite{Muller:13}.
Consequently, the $\mathcal{R}$-transform of the sum of matrices $\mathbf{b}_k \mathbf{b}_k^H$ equals the sum of their individual $\mathcal{R}$-transforms.

Recall that by Corollary \ref{cor:rtrafo}, the $\mathcal{R}$-transform of
\rev{a randomly selected eigenvalue $\omega$ of}
$\mathbf{b}_k \mathbf{b}_k^H$ is
$\rev{R_{\omega}(w)}
\approx \frac{\phi_k}{1-N_r \phi_k w}$. 
Hence, utilizing the additive property of the $\mathcal{R}$-transform, for a randomly selected eigenvalue $\Omega$ of $\mx{B}\mx{B}^H$ we get:
\begin{align}
\label{rtr}
\rev{\mathcal{R}_{\Omega}(w)}&
= \sum_{k=2}^K \frac{\phi_k}{1-N_r \phi_k w}.
\end{align}
Substituting \eqref{rtr} into \eqref{grtr}
we have:
\begin{align}
G\left(\sum_{k=2}^K \frac{\phi_k}{1\rev{+}N_r \phi_k w}-\frac{1}{w}\right)=w,
\label{grtr1}
\end{align}
for all $w > 0$.
From this equation it is also evident that the expression inside the $G$-transform is injective for $w > 0$.
Comparing 
\rev{\eqref{eq:Gbetaw} and \eqref{grtr}}, we have that:
\rev{
\begin{align}
-\beta &= \mathcal{R}_\lambda(-w)-\frac{1}{w}
\end{align}}
with
$w=\frac{\bar{\gamma}}{N_r \phi}$, from which, \rev{using \eqref{grtr1},} it follows that
$\bar{\gamma}$ satisfies the equation:
\begin{align}
\beta
&=
\left. \frac{1}{w}-\sum_{k=2}^K \frac{\phi_k}{1+N_r \phi_k w}\right|_{w=\frac{\bar{\gamma}}{N_r \phi}},
\end{align}
which is equivalent with \eqref{eq:SINR35}.
\rev{It is important to note that} there cannot be more than one value of $\bar{\gamma}$ that satisfies the equation above since the RHS is injective in $w$.

\section{Proof or Theorem \ref{thm:1} Using the Trace Approximation}
\label{Sec:AppVII}
To prove Theorem \ref{thm:1}, we first notice that
in the special case of diagonal covariances with equal elements,
we have that (see \eqref{eq:Phi}): 
\begin{align}
\bs{\Phi} &= \phi \mx{I}_{N_r} = \alpha^2 P \left(\hat{e} c + \check{e} c a^*\right) \mx{I}_{N_r}.
\end{align}
In this special case, \rev{i.e.\ when $\bs{\Phi}$ is diagonal with equal diagonal elements}, from \eqref{eq:gammaT}
it follows that for the tagged user (User-$1$), it holds that:
\begin{align}
\label{eq:gammaapprox}
\bar \gamma & \approx \phi \cdot \text{tr}\left(\mx{T}\right).
\end{align}

Also, in this case, the definition of $\mx{T}$ in \eqref{eq:Tdef}
simplifies to:

{\small
\begin{align}
\label{eq:Tell}
\mx{T} & \triangleq
\bigg(\frac{1}{N_r}\sum_{j=2}^K \frac{\phi_j}{1+\delta_j} \mx{I}_{N_r}
+ \underbrace{\sum_{k=1}^K \alpha_k^2 P_k z_k \mx{I}_{N_r}
+ \sigma_d^2 \mx{I}_{N_r}}_{\triangleq \beta\mx{I}_{N_r}}\bigg)^{-1},
\end{align}
}

\noindent where, \rev{according to \cite{Hoydis:2013} and \cite{Wagner:2012}}, the $\delta_{j}$:s satisfy:
\begin{align}
\label{eq:deltak2}
\delta_k &= \phi_k
\cdot \text{tr}\left(\left(\frac{1}{N_r}\sum_{j=2}^K \frac{\phi_j}{1+\delta_{j}}+\beta\right)^{-1} \mx{I}_{N_r}\right);~~
k=2 \dots K.
\end{align}
Comparing \eqref{eq:gammaapprox}, \eqref{eq:Tell} and \eqref{eq:deltak2}, we notice that:
$\delta_k = \phi_k \cdot \frac{\bar \gamma}{\phi} ~\forall k \ne 1$. 
Substituting this into \eqref{eq:deltak2}, we obtain:
\begin{align}
\bar\gamma &=
N_r \phi \left(\sum_{j =2}^K
\frac{\phi_j}{1+\frac{\phi_j}{\phi} \bar\gamma}+\beta\right)^{-1}.
\end{align}
From this equation we get
$\beta = \frac{N_r \phi}{\bar\gamma} -
\sum_{j =2}^K \frac{\phi_j}{1+\frac{\phi_j}{\phi} \bar\gamma},
$
which is identical with \eqref{eq:SINR35}.

\section{Proof of Proposition \ref{prop:OptP3}}

\begin{proof}
First notice that
substituting $\phi = \alpha^2 P (\hat{e}c + \check{e}ca)$,
$\beta = K\alpha^2z + \sigma_d^2$, and $z=c-(\hat{e}c + \check{e}ca)$, the optimization problem in \eqref{eq:phiperbeta}
can be rewritten as:

\begin{equation}
\label{eq:OptP22}
\begin{aligned}
& \underset{P,P_p}{\text{minimize}}
& & \frac{Kc + \frac{\sigma_d^2}{\alpha^2 P}}{\hat{e}c + \check{e}ca} - K 
~~~~\text{subject to}
~~P \tau_d + P_p \tau_p = P_{\text{tot}}.
\end{aligned}
\end{equation}

By substituting $P = (P_\textup{tot} - P_p\tau_p)/\tau_d$, the values of $\hat{e}$ and $\check{e}$ from \eqref{eq:es} into the objective function, the optimization task in \eqref{eq:OptP22} is further equivalent with:
\begin{equation}
\label{eq:OptP6}
\begin{aligned}
& \underset{P_p}{\text{minimize}}
& & \frac{ \left(Kc + \frac{\sigma_d^2 \tau_d}{\alpha^2 (P_{\textup{tot}} - P_p \tau_p)} \right)\left( \left(c+ \frac{\sigma_p^2}
{\alpha^2 P_p \tau_p} \right)^2 + a^2 c^2 \right)}{ (a^2+1) \frac{\sigma_p^2}{\alpha^2 P_p \tau_p} + c - a^2 c}.
\end{aligned}
\end{equation}

\rev{
Notice that this expression approaches infinity both when $P_p \rightarrow 0$ and when $P_p \rightarrow P_\textup{tot}/\tau_p$:
\begin{align}
&\lim_{P_p \rightarrow 0} \frac{ \left(Kc + \frac{\sigma_d^2 \tau_d}{\alpha^2 (P_{\textup{tot}} - P_p \tau_p)} \right)\left( \left(c+ \frac{\sigma_p^2}
{\alpha^2 P_p \tau_p} \right)^2 + a^2 c^2 \right)}{ (a^2+1) \frac{\sigma_p^2}{\alpha^2 P_p \tau_p} + c - a^2 c}
\nonumber \\&
=
\left(Kc + \frac{\sigma_d^2 \tau_d}{\alpha^2 P_{\textup{tot}}} \right) \lim_{P_p \rightarrow 0} \frac{ \left(c+ \frac{\sigma_p^2}
{\alpha^2 P_p \tau_p} \right)^2 + a^2 c^2}{ (a^2+1) \frac{\sigma_p^2}{\alpha^2 P_p \tau_p} + c - a^2 c} \times \frac{P_p}{P_p}
\nonumber \\&
=
\left(Kc + \frac{\sigma_d^2 \tau_d}{\alpha^2 P_{\textup{tot}}} \right) \frac{ \lim\limits_{P_p \rightarrow 0} P_p\left(c+ \frac{\sigma_p^2}
{\alpha^2 P_p \tau_p} \right)^2}{ (a^2+1) \frac{\sigma_p^2}{\alpha^2 \tau_p}} = \infty;
\end{align}
}
\rev{
\begin{align}
&\lim_{P_p \rightarrow P_\textup{tot}/\tau_p} \frac{ \left(Kc + \frac{\sigma_d^2 \tau_d}{\alpha^2 (P_{\textup{tot}} - P_p \tau_p)} \right)\left( \left(c+ \frac{\sigma_p^2}
{\alpha^2 P_p \tau_p} \right)^2 + a^2 c^2 \right)}{ (a^2+1) \frac{\sigma_p^2}{\alpha^2 P_p \tau_p} + c - a^2 c}
\nonumber \\&
=
\frac{ \left(c+ \frac{\sigma_p^2}
{\alpha^2 P_\textup{tot}} \right)^2 + a^2 c^2}{ (a^2+1) \frac{\sigma_p^2}{\alpha^2 P_\textup{tot}} + c - a^2 c}
\lim_{P_p \rightarrow P_\textup{tot}/\tau_p}
\left(Kc + \frac{\sigma_d^2 \tau_d}{\alpha^2 (P_{\textup{tot}} - P_p \tau_p)} \right) \nonumber \\
&~~~=
\infty.
\end{align}
Since the expression to minimize is positive over the interval $\left( 0, P_\textup{tot}/\tau_p \right)$,
and approaches infinity at the edges of the interval, there is a global minimum in the interval which is also a stationary point.
To find the set of all stationary points, we calculate
the derivative of the expression in equation \eqref{eq:OptP6} with respect to $P_p$.
We have:
\begin{align}
&\frac{d}{dP_p} \left(Kc + \frac{\sigma_d^2 \tau_d}{\alpha^2 (P_{\textup{tot}} - P_p \tau_p)} \right) =
\frac{\sigma_d^2 \tau_d}{\alpha^2 (P_{\textup{tot}} - P_p \tau_p)^2} \nonumber \\
&\frac{d}{dP_p} \left( \left(c+ \frac{\sigma_p^2} {\alpha^2 P_p \tau_p} \right)^2 + a^2 c^2 \right) = \nonumber \\
&~~~~~~~~~~~~~~~~~~~~~~~~~~~~~~=-2\left(c+ \frac{\sigma_p^2} {\alpha^2 P_p \tau_p} \right)\frac{\sigma_p^2} {\alpha^2 P_p^2 \tau_p} \nonumber \\
&\frac{d}{dP_p} \left((a^2+1) \frac{\sigma_p^2}{\alpha^2 P_p \tau_p} + c - a^2 c \right) = -(a^2+1) \frac{\sigma_p^2}{\alpha^2 P_p^2 \tau_p}.
\end{align}
From this we can calculate the derivative of \eqref{eq:OptP6} with respect to $P_p$,
which is a rational function with numerator equal to the polynomial given in equation \eqref{eq:OptP3},
\qq{Hence,} this polynomial has at least one positive root in the interval $\left( 0, P_\textup{tot}/\tau_p \right)$,
one of which gives the solution to the optimization task \eqref{eq:OptP6}, and hence the optimal pilot power.
}
\end{proof}

\bibliography{MSEGame_gf,MMSERef}
\end{document}